\newtheorem{algorithm}[theorem]{Algorithm}
\newcommand{\ignore}[1]{}
\newcommand{\cP}{{\ensuremath{\mathbb P}}}
\newcommand{\cT}{{\ensuremath{\mathbb T}}}
\newcommand{\cS}{{\ensuremath{\mathbb S}}}
\newcommand{\cR}{{\ensuremath{\mathbb R}}}
\newcommand{\opleader}{\ensuremath{\mathtt{leader}}}
\newcommand{\vis}{\ensuremath{\mathtt{vis}}}
\newcommand{\low}{\ensuremath{\mathtt{cheap}}}
\newcommand{\set}{\ensuremath{\mathtt{set}}}
\newcommand{\level}{\ensuremath{\mathtt{level}}}
\newcommand{\paths}{\ensuremath{\mathtt{paths}}}
\newcommand{\interior}{\ensuremath{\mathtt{interior}}}
\newcommand{\parent}{\ensuremath{\mathtt{parent}}}
\newcommand{\meet}{\ensuremath{\mathtt{meet}}}
\newcommand{\jump}{\ensuremath{\mathtt{jump}}}
\newcommand{\lca}{\ensuremath{\mathtt{lca}}}
\newcommand{\origin}{\ensuremath{\mathtt{origin}}}
\newcommand{\toplevel}{\ensuremath{\mathtt{toplevel}}}
\newcommand{\metricstretch}{\ensuremath{\mathtt{stretch}}}
\authorrunning{M. Cygan, L. Kowalik, M. Mucha, M. Pilipczuk  and P. Sankowski} 
\author{Marek Cygan\inst{1}, Lukasz Kowalik\inst{1}, Marcin Mucha\inst{1}, \\
  Marcin Pilipczuk\inst{1} and Piotr Sankowski\inst{1,2} 
}
\institute{
Institute of Informatics, University of Warsaw, Poland
\and
Dipartimento di Informatica e Sistemistica, Sapienza - University of Rome, Italy
}
\title{Fast Approximation in Subspaces by Doubling Metric Decomposition
\thanks{This work was partially supported by the Polish Ministry of Science
grant N206 355636. E-mail addresses: \texttt{\{cygan,kowalik,mucha,malcin,sank\}@mimuw.edu.pl.}}
}
\begin{document}
\maketitle

\begin{abstract}
In this paper we propose and study a new complexity model for approximation algorithms. The main motivation are practical problems over large data sets that need to be solved many times for different scenarios, e.g., many multicast trees that need to be constructed for different groups of users. In our model we allow a preprocessing phase, when some information of the input graph $G=(V,E)$ is stored in a limited size data structure. Next, the data structure enables processing queries of the form ``solve problem A for an input $S\subseteq V$''. We consider problems like {\sc Steiner Forest}, {\sc Facility Location}, {\sc $k$-Median}, {\sc $k$-Center} and {\sc TSP} in the case when the graph induces a doubling metric. Our main results are data structures of near-linear size that are able to answer queries in time close to linear in $|S|$. This improves over typical worst case reuniting time of approximation algorithms in the classical setting which is $\Omega(|E|)$ independently of the query size. In most cases, our approximation guarantees are arbitrarily close to those in the classical setting. Additionally, we present the first fully dynamic algorithm for the Steiner tree problem.

\end{abstract}

\section{Introduction}
\paragraph{Motivation}
The complexity and size of the existing communication networks has grown extremely in the recent times. It is now hard to imagine that a group of users willing to communicate sets up a minimum cost communication network or a multicast tree according to an approximate solution to {\sc Steiner Tree} problem. Instead we are forced to use heuristics that are computationally more efficient but may deliver suboptimal results~\cite{multicast2,multicast}. It is easy to imagine other problems that in principle can be solved with constant approximation factors using state of art algorithms, but due to immense size of the data it is impossible in timely manner. However, in many applications the network is fixed and we need to solve the problem many times for different groups of users.

Here, we propose a completely new approach that exploits this fact to overcome the obstacles stemming from huge data sizes. It is able to efficiently deliver results that have good approximation guarantee thanks to the following two assumptions. We assume that the network can be preprocessed beforehand and that the group of users that communicates is substantially smaller than the size of the network. The preprocessing step is independent of the group of users and hence afterwards we can, for example, efficiently compute a Steiner tree for any set of users.

More formally, in  the {\sc Steiner Tree} problem the algorithm is given a weighted graph $G=(V,E)$ on $n$ vertices and is allowed some preprocessing. The results of the preprocessing step need to be stored in limited memory. Afterwards, the set $S \subseteq V$ of terminals is defined and the algorithm should generate as fast as possible a Steiner tree for $S$, i.e., a tree in $G$ of low weight which contains all vertices in $S$. Given the query set $S$ of $k$ vertices we should compute the Steiner tree $T$ in time depending only (or, mostly) on $k$.

The trivial approach to this problem is to compute the metric closure $G^*$ of $G$ and then answer each query by solving the {\sc Steiner Tree} problem on $G^*[S]$. This approach delivers results with constant approximation ratio, but requires $O(n^2)$ space of the data structure and $\tilde{O}(k^2)$ query time. Hence it is far from being practical. In this work we aim at solutions that substantially improve both of these bounds; more formally the data structure space should be close to $O(n)$, while the query time should be close to $O(k)$. Since in a typical situation probably $k=O(\log n)$, so even a  $O(k\log n)$ query time is not considered fast enough, as then $k\log n = \theta(k^2)$.
Note that the $O(n)$ bound on the structure size is very restrictive: in a way, this bound is sublinear in the sense that we are allowed neither to store the whole distance matrix, nor (if $G$ is dense) all the edges of $G$.
This models a situation when during the preprocessing one can use vast resources (e.g.,\ a huge cluster of servers), but the resources are not granted forever and when the system processes the queries the available space is much smaller.

\paragraph{New Model}
In our model, computations are divided into two stages: the preprocessing stage and the query stage.
In the preprocessing stage, the input is a weighted graph $G=(V,E)$ and we should compute our data structure in polynomial time and space.
Apart from the graph $G$ some additional, problem-specific information may be also provided.
In the query stage the algorithm is given the data structure computed in the preprocessing stage, but not $G$ itself, and a set $S$ of points of $V$ (the query --- possibly a set of pairs of points from $V$, or a weighted set of points from $V$, etc.) and computes a solution for the set $S$. The definition of ``the solution for the set $S$'' depends on the specific problem.
In this work we consider so-called metric problems, so $G$ corresponds to a
metric space $(V,d)$ where $d$ can be represented as the full distance matrix $M$.
One should keep in mind that the function $d$ cannot be quickly computed (e.g.\ in constant time) without the $\Omega(n^2)$ size matrix $M$. In particular, we assume that there is no distance oracle available in the query stage.

Hence, there are three key parameters of an algorithm within our model: the size of the data structure, the query time and the approximation ratio. Less important, but not irrelevant is the preprocessing time.
Let us note that though our model is inspired by large datasets, in this work we ignore streaming effects, external memory issues etc.

Above we have formulated the {\sc Steiner Tree} problem in our model, now we describe the remaining problems. In {\sc Steiner Forest} problem the algorithm is allowed to preprocess a weighted graph $G=(V,E)$, whereas the query is composed of the set $S \subseteq V\times V$ of pairs. The algorithm should generate the Steiner forest for $S$, i.e., a subgraph $H$ of $G$ of small weight such that each pair in $S$ is connected in $H$. In {\sc Facility Location} problem the algorithm is given in the preprocessing phase a weighted graph with facility opening costs in the nodes. We consider two variants of this problem in our model.
In the variant {\em with unrestricted facilities}, the query is a set $S\subseteq V$ of clients for which we should open facilities.
The goal is to open a subset $F\subseteq V$ of facilities, and connect each city to an open facility so that the sum of the total opening
and connection costs is minimized.
In the other variant, one with {\em restricted facilities}, the facilities that can be opened are given as a part of query (together with their opening costs).

\paragraph{Our Results}
In this paper we restrict our attention to doubling metric spaces which include growth-restricted metric spaces and constant dimensional Euclidean spaces. In other words we assume that the graph $G$ induces a doubling metric and the algorithms are given the distance matrix $G^*$ as an input or compute it at the beginning of the preprocessing phase. This restriction is often assumed in the routing setting~\cite{spanner_d1,spanner_d2} and hence it is a natural question to see how it can impact the multicast problems. Using this assumption we show that solutions with nearly optimal bounds are possible. The main result of the paper is the data structure that requires $O(n\log n)$ memory and can find a constant ratio approximate Steiner tree over a given set of size $k$ in $O(k (\log k + \log \log n))$ time. Moreover, we show data structures with essentially the same complexities for solving {\sc Steiner Forest}, both versions of {\sc Facility Location}, {\sc $k$-Median} and {\sc TSP}. The query bound is optimal, up to $\log k$ and $\log \log n$ factors, as no algorithm can answer queries in time less than linear in $k$ as it needs to read the input. For the exact approximation ratios of our algorithms refer to Sections~\ref{sec:spanner-app} and \ref{sec:unres-fl}.


All of these results are based on a new hierarchical data structure for representing a doubling metric that approximates original distances with $(1+\epsilon)$-multiplicative factor.
The concept of a hierarchical data structure for representing a doubling metric is not novel -- it originates from the work of Clarkson~\cite{clarkson} and was then used in a number of papers, in particular our data structure is based on the one due to Jia et al.~\cite{jia}.
Our main technical contribution here is adapting and extending this data structure so that for any subset $S\subset V$ a substructure corresponding to $S$ can be retrieved in $O(k(\log k+\log\log n))$ using only the information in the data structure, without a distance oracle.
The substructure is then transformed to a pseudo-spanner described above.
Note that our complexity bounds do not depend on the stretch of the metrics, unlike in many previous works (e.g.~\cite{kr-lee}).
Another original concept in our work is an application of spanners (or, more precisely, pseudo-spanners) to improve working time of approximation algorithms for metric problems.
As a result, the query times for the metric problems we consider are $O(k({\rm polylog} k+\log\log n))$.

Astonishingly, our hierarchical data structure can be used to obtain dynamic algorithms for the {\sc Steiner tree} problem. This problem attracted considerable attention~\cite{dst2,dst3,dst4,dst5} in the recent years. However, due to the hardness of the problem none of these papers has given any improvement in the running time over the static algorithms. Here, we give first fully dynamic algorithm for the problem in the case of doubling metric. Our algorithm is given a static graph and then maintains information about the Steiner tree built on a given set $X$ of nodes. It supports insertion of vertices in $O(\log^5k + \log\log n)$ time, and deletion in $O(\log^5k)$ time, where $k = |X|$.

\paragraph{Related Work}
The problems considered in this paper are related to several algorithmic topics studied extensively in recent years. Many researchers tried to answer the question whether problems in huge networks can be solved more efficiently than by processing the whole input. Nevertheless, the model proposed in this paper has never been considered before. Moreover, we believe that within the proposed framework it is possible to achieve complexities that are close to being practical. We present such results only in the case of doubling metric, but hope that the further study will extend these results to a more general setting. Our results are related to the following concepts:
\begin{itemize}
\item Universal Algorithms --- this model does not allow any processing in the query time, we allow it and get much better approximation ratios,
\item Spanners and Approximate Distance Oracles --- although a spanner of a subspace of a doubling metric can be constructed in $O(k\log k)$-time, the construction algorithm requires a distance oracle (i.e.\ the full $\Theta(n^2)$-size distance matrix).
\item Sublinear Approximation Algorithms --- here we cannot preprocess the data, allowing it we can get much better approximation ratios,
\item Dynamic Spanning Trees --- most existing results are only applicable to dynamic MST and not dynamic Steiner tree, and the ones concerning the latter work in different models than ours.
\end{itemize}
Due to space limitation of this extended abstract an extensive discussion of the related work is attached in Appendix~\ref{app:related_work} and will be included in the full version of the paper.

\section{Space partition tree}
\label{sec:partition-tree}

In this section we extend the techniques developed by Jia et al. \cite{jia}. Several statements as well
as the overall construction are similar to those given by Jia et al. However, our approach is tuned to
better suit our needs, in particular to allow for a fast subtree extraction and a spanner construction --
techniques introduced in Sections~\ref{sec:partition-tree} and \ref{sec:spanner} that are crucial for efficient approximation algorithms.

Let $(V, d)$ be a finite doubling metric space with $|V| = n$ and a doubling constant $\lambda$, i.e.,
for every $r > 0$, every ball of radius $2r$ can be covered with at most $\lambda$ balls of radius $r$. By $\metricstretch$ we denote the stretch of the metric $d$, that is, the largest distance in $V$ divided by the smallest distance.
We use space partition schemes for doubling metrics to create a partition tree. In the next two subsections, we
show that this tree can be stored in $O(n \log n)$ space, and that a subtree induced by any subset $S \subset V$ can be extracted efficiently.

Let us first briefly introduce the notion of a space partition tree, that is used in the remainder of this
paper. Precise definitions and proofs (in particular a proof of existence of such a partition tree)
can be found in Appendix~\ref{app:tree}.

The basic idea is to construct a sequence $\cS_0, \cS_1, \ldots, \cS_M$ of partitions of $V$. We require that $\cS_0 = \{\{v\}: v\in V\}$, and $\cS_M = \{V\}$, and in general the diameters of the sets in $\cS_k$ are growing exponentially in $k$. We also maintain the neighbourhood structure for each $\cS_k$, i.e.,\ we know which sets in $\cS_k$ are close to each other (this is explained in more detail later on). Notice that the partitions together with the neighbourhood structure are enough to approximate the distance between any two points $x,y$ --- one only needs to find the smallest $k$, such that the sets in $\cS_k$ containing $x$ and $y$ are close to each other (or are the same set).

There are two natural parameters in this sort of scheme. One of them is how fast the diameters of the sets grow, this is controlled by $\tau \in \cR, \tau \ge 1$ in our constructions. The faster the set diameters grow, the smaller the number of partitions is. The second parameter is how distant can the sets in a partition be to be still considered neighbours, this is controlled by a nonnegative integer $\eta$ in our constructions. The smaller this parameter is, the smaller the number of neighbours is. Manipulating these parameters allows us to decrease the space required to store the partitions, and consequently also the running time of our algorithms. However, this also comes at a price of lower quality approximation.

In what follows, each $\cS_k$ is a subpartition of $\cS_{k+1}$ for $k=0,\ldots,M-1$. That is, the elements of these partitions form a tree, denoted by $\cT$, with $\cS_0$ being the set of leaves and $\cS_M$ being the root. We say that $S \in \cS_j$ is a {\em{child}} of $S^* \in \cS_{j+1}$ in $\cT$ if $S \subset S^*$.

Let $r_0$ be smaller than the minimal distance between points in $V$ and let $r_j = \tau^jr_0$. We show (in Appendix~\ref{app:tree}) that $\cS_k$-s and $\cT$ satisfying the following properties can be constructed in polynomial time:
\ignore{
\begin{description}
\item[Exponential growth:] Every $S \in \cS_j$ is contained in a ball of radius $r_j\tau2^{-\eta}/(\tau-1)$.
\item[Small neighbourhoods:] \label{item:acquaintance-limit} For every $S \in \cS_j$, the union $\bigcup\{B_{r_j}(v): v \in S\}$ crosses at most $\lambda^{3+\eta}$ sets $S'$ from the partition $\cS_j$ --- we say that $S$ {\em{knows}} these $S'$. We also extend this notation and say that if $S$ knows $S'$, then every $v \in S$ knows $S'$.
\item[Small degrees:] \label{item:childlimit} For every $S^* \in \cS_{j+1}$ all children of $S^*$ know each other, and there are at most $\lambda^{\eta+3}$ children of $S^*$.
\item[Distance approximation:] \label{item:remstretchlimit} If $v, v^* \in V$ are different points such that $v \in S_1 \in \cS_j$, $v \in S_2 \in \cS_{j+1}$ and $v^* \in S_1^* \in \cS_j$, $v^* \in S_2^* \in \cS_{j+1}$ and $S_2$ knows $S_2^*$ but $S_1$ does not know $S_1^*$, then
$$r_j \leq d(v, v^*) < \Big( 1 + \frac{4\tau 2^{-\eta}}{\tau - 1}\Big) \tau r_j;$$
\end{description}
}

\begin{enumerate}[{\bf(1)}]
\item{\bf Exponential growth:} Every $S \in \cS_j$ is contained in a ball of radius $r_j\tau2^{-\eta}/(\tau-1)$.
\item{\bf Small neighbourhoods:} \label{item:acquaintance-limit} For every $S \in \cS_j$, the union $\bigcup\{B_{r_j}(v): v \in S\}$ crosses at most $\lambda^{3+\eta}$ sets $S'$ from the partition $\cS_j$ --- we say that $S$ {\em{knows}} these $S'$. We also extend this notation and say that if $S$ knows $S'$, then every $v \in S$ knows $S'$.
\item{\bf Small degrees:} \label{item:childlimit} For every $S^* \in \cS_{j+1}$ all children of $S^*$ know each other and, consequently, there are at most $\lambda^{\eta+3}$ children of $S^*$.
\item{\bf Distance approximation:} \label{item:remstretchlimit} If $v, v^* \in V$ are different points such that $v \in S_1 \in \cS_j$, $v \in S_2 \in \cS_{j+1}$ and $v^* \in S_1^* \in \cS_j$, $v^* \in S_2^* \in \cS_{j+1}$ and $S_2$ knows $S_2^*$ but $S_1$ does not know $S_1^*$, then
$$r_j \leq d(v, v^*) < \Big( 1 + \frac{4\tau 2^{-\eta}}{\tau - 1}\Big) \tau r_j;$$
For any $\varepsilon >0$, the $\tau$ and $\eta$ constants can be adjusted so that the last condition becomes $r_j \le d(v,v^*) \le (1+\varepsilon)r_j$ (see Remark~\ref{rem:stretchlimit}).
\end{enumerate}

\begin{remark}
We note that not all values of $\tau$ and $\eta$ make sense for our construction. We omit these additional constraints here.
\end{remark}

\subsection{The compressed tree $\hat{\cT}$ and additional information at nodes}

Let us now show how to efficiently compute and store the tree $\cT$.
Recall that the leaves of $\cT$ are one point sets and, while going up in the tree,
these sets join into bigger sets.

Note that if $S$ is an inner node of $\cT$ and it has only one child $S'$
then both nodes $S$ and $S'$ represent the same set.
Nodes $S$ and $S'$ can differ only by their sets of acquaintances, i.e.\ the sets of nodes known to them.
If these sets are equal, there is some sort of redundancy in $\cT$.
To reduce the space usage we store only a compressed version of the tree $\cT$.

Let us introduce some useful notation.
For a node $v$ of ${\cT}$ let $\set(v)$ denote the set corresponding to $v$ and
let $\level(v)$ denote the level of $v$, where leaves are at level zero.
Let $S_a$, $S_b$ be a pair of sets that know each other at level $j_{ab}$ and do not know each other at level $j_{ab}-1$.
Then the triple $(S_a, S_b, j_{ab})$ is called a {\em meeting} of $S_a$ and $S_b$ at level $j_{ab}$.

\begin{definition}[Compressed tree]
\label{def:compressed_tree}
The compressed version of $\cT$, denoted $\hat{\cT}$, is obtained from $\cT$
by replacing all maximal paths such that all inner nodes have exactly one child by a single edge.
For each node $v$ of $\hat{\cT}$ we store $\level(v)$ (the lowest level of $\set(v)$ in $\cT$)
and a list of all meetings of $\set(v)$, sorted by level.
\end{definition}

Obviously $\hat{\cT}$ has at most $2n-1$ nodes since it has exactly $n$ leaves
and each inner node has at least two children but we also have to ensure that
the total number of meetings is reasonable.

Note that the sets at nodes of $\hat{\cT}$ are pairwise distinct.
To simplify the presentation we will identify nodes and the corresponding sets.
Consider a meeting $m=(S_a, S_b, j_{ab})$.
Let $p_a$ (resp. $p_b$) denote the parent of $S_a$ (resp. $S_b$) in $\hat{\cT}$.
We say that $S_a$ is {\em responsible} for the meeting $m$ when
$\level(p_a)\le\level(p_b)$ (when $\level(p_a)=\level(p_b)$, both
$S_a$ and $S_b$ are responsible for the meeting $m$).
Note that if $S_a$ is responsible for a meeting $(S_a, S_b, j_{ab})$,
then $S_a$ knows $S_b$ at level $\level(p_a)-1$.
From this and Property~\ref{item:acquaintance-limit} of the partition tree we get the following.

\begin{lemma}
\label{lem:bounded-responsibility}
 Each set in $\hat{\cT}$ is responsible for at most $\lambda^{3+\eta}$ meetings. 
\end{lemma}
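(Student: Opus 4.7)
The plan is to fix a node $S_a$ in $\hat{\cT}$ and inject all meetings $S_a$ is responsible for into the set of $\cT$-acquaintances of $S_a$ at a single level, where Property (2) of the partition tree gives the desired bound $\lambda^{3+\eta}$.

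First, I would unpack what responsibility implies about the structure at the level $\ell := \level(p_a) - 1$ in the uncompressed tree $\cT$. Because the compression only collapses chains of single-child nodes, the set $\set(S_a)$ appears as a $\cT$-node on every level from $\level(S_a)$ up to $\level(p_a)-1$; in particular it appears at level $\ell$. If $S_a$ is responsible for $(S_a, S_b, j_{ab})$, then $\level(p_a) \le \level(p_b)$, so by the same chain argument $\set(S_b)$ also occurs as a $\cT$-node at level $\ell$. The observation already stated in the paper then gives that at this level $\set(S_a)$ knows $\set(S_b)$.

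The key remaining step is showing that the map $S_b \mapsto (\text{neighbour of }\set(S_a)\text{ at level }\ell)$ is injective. This follows from the partition property of $\cS_\ell$: if $S_b \ne S_c$ are both sets in $\hat{\cT}$ for which $S_a$ is responsible, then $\set(S_b)$ and $\set(S_c)$ are the $\cT$-nodes at level $\ell$ containing their respective points, and since distinct $\hat{\cT}$-nodes correspond to distinct sets and the elements of $\cS_\ell$ are pairwise disjoint, these two $\cT$-nodes are different. Hence distinct meetings for which $S_a$ is responsible produce distinct acquaintances of $\set(S_a)$ at level $\ell$.

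Finally, applying Property (2) (Small neighbourhoods) to $\set(S_a)$ at level $\ell$ bounds the number of such acquaintances by $\lambda^{3+\eta}$, which finishes the proof. I don't foresee a real obstacle here; the only subtlety is making sure the level $\ell = \level(p_a)-1$ is actually a level on which both $\set(S_a)$ and $\set(S_b)$ genuinely occur in $\cT$, which is exactly what the responsibility condition $\level(p_a)\le\level(p_b)$ buys us.
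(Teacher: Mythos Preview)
Your proof is correct and is exactly the argument the paper intends: the paper's entire proof is the sentence ``if $S_a$ is responsible for a meeting $(S_a,S_b,j_{ab})$, then $S_a$ knows $S_b$ at level $\level(p_a)-1$'' together with Property~(\ref{item:acquaintance-limit}), and you have simply spelled out the details (why $\set(S_b)$ still exists at level $\ell=\level(p_a)-1$, and why distinct $S_b$'s yield distinct acquaintances there) that the paper leaves implicit.
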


\begin{corollary}
\label{lem:num_edges}
There are $\le(2n-1)\lambda^{3 + \eta}$ meetings stored in the compressed tree $\hat{\cT}$, i.e.\ $\hat{\cT}$ takes $O(n)$ space.
\end{corollary}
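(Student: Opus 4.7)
The plan is a direct counting argument that uses Lemma~\ref{lem:bounded-responsibility} together with the bound of $2n-1$ nodes in $\hat{\cT}$ that is noted just after Definition~\ref{def:compressed_tree}.

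First I would charge each meeting to a responsible node. By the definition of responsibility, for any meeting $m=(S_a,S_b,j_{ab})$ stored in $\hat{\cT}$ at least one of $S_a,S_b$ is responsible for $m$ (exactly one if $\level(p_a)\neq\level(p_b)$ and both if the parents sit on the same level). Hence, letting $M$ denote the set of all meetings stored in $\hat{\cT}$ and $r(v)$ the number of meetings that node $v$ is responsible for,
\[
 |M| \;\le\; \sum_{v\in\hat{\cT}} r(v).
\]
Next I would apply Lemma~\ref{lem:bounded-responsibility}, which bounds $r(v)\le\lambda^{3+\eta}$ for every node $v$ of $\hat{\cT}$. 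Since $\hat{\cT}$ has at most $2n-1$ nodes (there are exactly $n$ leaves and every internal node has at least two children by construction), this gives
\[
 |M| \;\le\; (2n-1)\,\lambda^{3+\eta},
\]
which is the claimed bound on the number of meetings.

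For the space claim I would argue that each node of $\hat{\cT}$ needs only constant space to store $\level(v)$, and that the total length of all meeting lists is at most $2|M|$ because a meeting $(S_a,S_b,j_{ab})$ appears in the list of exactly two nodes, namely $S_a$ and $S_b$. Together with the $O(n)$ bound on the number of nodes, the total space is $O(n)+2|M|=O(n\,\lambda^{3+\eta})=O(n)$, treating $\lambda$ and $\eta$ as constants as everywhere else in the paper.

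I do not expect any real obstacle here; the only mildly delicate point is the potential double counting coming from meetings whose two endpoints are both responsible, but the one-sided inequality $|M|\le\sum_v r(v)$ absorbs this harmlessly, so no refinement of the responsibility definition is needed.
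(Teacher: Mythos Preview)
Your proposal is correct and follows exactly the intended argument: the paper states this as an immediate corollary of Lemma~\ref{lem:bounded-responsibility} together with the $2n-1$ bound on the number of nodes, and your charging of each meeting to a responsible node is precisely that reasoning made explicit. The additional remark about the factor of~$2$ in the space bound (each meeting stored in two lists) is a harmless elaboration that the paper leaves implicit.
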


\begin{lemma}
\label{lem:know-query}
One can augment the tree $\hat{\cT}$ with additional information of size $O(n\lambda^{3 + \eta})$,
so that for any pair of nodes $x, y$ of $\hat{\cT}$ one can decide if $x$ and $y$ know each other, and if that is the case the level of the meeting is returned.
The query takes $O(\eta\log\lambda)$ time.
\end{lemma}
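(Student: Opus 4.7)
The plan is to store, at each node $v$ of $\hat{\cT}$, a sorted array $A_v$ of the meetings for which $v$ is responsible, keyed by a unique identifier of the meeting partner. First I would precompute a DFS preorder of $\hat{\cT}$ once, assigning each node $v$ a distinct integer $\mathrm{id}(v)$; this costs $O(n)$ time and space. Then for each $v$, I would form $A_v$ containing one entry $(\mathrm{id}(u), j_{vu})$ per meeting $(v, u, j_{vu})$ for which $v$ is responsible, sorted by $\mathrm{id}(u)$. By Lemma~\ref{lem:bounded-responsibility}, $|A_v| \le \lambda^{3+\eta}$, so the total size is at most $(2n-1)\lambda^{3+\eta} = O(n\lambda^{3+\eta})$, matching the stated space bound.

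To answer a query for $(x, y)$, I would binary-search $A_x$ for the key $\mathrm{id}(y)$ and, if it is found, return the associated level. Otherwise I would binary-search $A_y$ for key $\mathrm{id}(x)$ and return its level if found; if neither search succeeds, declare that $x$ and $y$ do not know each other. Correctness follows directly from the definition of responsibility: any meeting $(x, y, j)$ is stored in the array of the responsible endpoint, which is $x$ or $y$ (or both, in the tie case), so one of the two searches is guaranteed to find it. Since each array has at most $\lambda^{3+\eta}$ entries, each binary search runs in $O(\log \lambda^{3+\eta}) = O((\eta+3)\log \lambda) = O(\eta \log \lambda)$ time, giving the advertised query bound.

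The hard part, such as it is, lies in resisting the natural idea of duplicating each meeting at \emph{both} of its endpoints for symmetric lookup: a node may participate in many meetings as the non-responsible partner, so duplication could blow up individual arrays beyond $\lambda^{3+\eta}$ and spoil the $O(\eta \log \lambda)$ binary-search bound. Storing each meeting only at its responsible party and performing two binary searches per query sidesteps this at the cost of merely a factor-two constant in the query time, while preserving the clean per-node size bound used in the logarithmic estimate above.
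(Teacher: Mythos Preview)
Your proposal is correct and follows essentially the same approach as the paper: store at each node only the meetings it is responsible for (so each node's dictionary has size at most $\lambda^{3+\eta}$ by Lemma~\ref{lem:bounded-responsibility}), and answer a query $(x,y)$ by looking up $y$ in $x$'s dictionary and $x$ in $y$'s dictionary. The paper's proof is terser, saying only ``dictionary'' where you spell out sorted arrays keyed by a preorder id, but the idea, the space bound, and the $O(\log(\lambda^{3+\eta}))=O(\eta\log\lambda)$ query analysis are identical.
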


\begin{proof}
For each node $v$ in $\hat{\cT}$ we store all the meetings it is responsible for,
using a dictionary $D(m)$ --- the searches take $O(\log(\lambda^{3+\eta}))=O(\eta\log\lambda)$ time.
To process the query it suffices to check if there is an appropriate meeting
in $D(x)$ or in $D(y)$.\qed
\end{proof}


In order to give a fast subtree extraction algorithm, we need to define the following operation $\meet$.
Let $u,v\in \hat{\cT}$ be two given nodes. Let $v(j)$ denote the node in $\cT$ on the path
from $v$ to the root at level $j$, similarly define $u(j)$.
The value of $\meet(u, v)$ is the lowest level, such that $v(j)$ and $u(j)$ know each other. Such level
always exists, because in the end all nodes merge into root and nodes know each other at one level
before they are merged (see Property~\ref{item:childlimit} of the partition tree).
A technical proof of the following lemma is moved to Appendix~\ref{app:meet} due to space limitations.

\begin{lemma}
\label{lem:binary-jumping}
The tree $\hat{\cT}$ can be augmented so that the $\meet$ operation can be
performed in $O(\eta\log\lambda \log \log n)$ time.
The augmented $\cT$ tree can be stored in $O(\lambda^{3+\eta}n\log n)$ space and computed in polynomial time.
\end{lemma}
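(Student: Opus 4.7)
The plan is to augment $\hat{\cT}$ with a standard binary-lifting structure and an LCA oracle, and then reduce a $\meet$ query to a monotone binary search. For each node $v \in \hat{\cT}$ and each $i = 0,1,\dots,\lceil \log_2 n \rceil$, I would store a pointer $\jump_i(v)$ to the ancestor of $v$ in $\hat{\cT}$ whose level interval in $\cT$ contains $\level(v) + 2^i$ (or to the root if no such ancestor exists). In addition I retain the ``know'' oracle of Lemma~\ref{lem:know-query} and a linear-space constant-time LCA oracle on $\hat{\cT}$. Each of the $O(n)$ nodes stores $O(\log n)$ jump pointers, and each pointer carries an $O(\lambda^{3+\eta})$-size descriptor needed for the lookup, giving the claimed $O(\lambda^{3+\eta} n \log n)$ total space. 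The construction is clearly polynomial.

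The search rests on a monotonicity observation. Define $P(j) := $ ``$u(j)$ and $v(j)$ know each other''. If sets $A,B \in \cS_j$ know each other (so $B$ meets $\bigcup_{v \in A} B_{r_j}(v)$), then the parent sets $A' \supseteq A$, $B' \supseteq B$ in $\cS_{j+1}$ also know each other, because $r_{j+1} = \tau r_j \geq r_j$ and the sets have only grown. Hence $P$ is non-decreasing in $j$. Let $w = \lca_{\hat{\cT}}(u,v)$; at level $\level(w)$ the two sets coincide, so $P(\level(w)) = 1$. The answer $\meet(u,v)$ is therefore the smallest $j \in [0,\level(w)]$ with $P(j) = 1$, which is well defined and can be located by binary search.

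To hit $O(\log \log n)$ know-queries rather than the naive $O(\log n)$, I would perform the binary search on the \emph{exponents} of the jump pointers. Starting from $u$ and $v$ I consider the $O(\log n)$ anchor levels $\level(u) + 2^i$ (and analogously for $v$); because $P$ is monotone, a standard binary search over the $\log n$ exponents locates the unique scale $i^\star$ at which $P$ switches from $0$ to $1$ using only $O(\log \log n)$ evaluations of $P$, each of which costs one jump-pointer lookup plus one call to the know-oracle, i.e.\ $O(\eta \log \lambda)$. Once the switching scale is identified, the relevant $\hat{\cT}$-node pair $(u(j), v(j))$ is known, and the exact level $j$ is read off their meetings list — stored sorted by level per Definition~\ref{def:compressed_tree} — in one additional $O(\eta \log \lambda)$ lookup.

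The main obstacle will be justifying that the binary search on the $O(\log n)$ jump-pointer anchors truly pins down $\meet(u,v)$, rather than merely bracketing it inside an interval of $\cT$-levels that still requires linear refinement. The point is that no finer-grained refinement is needed: the jump pointers give access to every $\hat{\cT}$-ancestor of $u$ and $v$ (any such ancestor is reachable by composing at most $O(\log n)$ jumps), so the search determines the unique pair of $\hat{\cT}$-nodes whose level intervals contain $\meet(u,v)$, and the precise integer $\meet(u,v)$ is then provided by the meeting record attached to that pair. A secondary, technical obstacle is verifying that the binary search preserves invariants across the occasional case where $u$ and $v$ ascend at different rates — handled by treating the two sides symmetrically and always advancing the side whose current ancestor has the smaller level.
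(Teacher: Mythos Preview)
Your monotonicity observation for the predicate $P(j)$ is correct and is indeed what the paper relies on as well. The gap is in the refinement step and in the predicate evaluation.

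First, the refinement. A binary search over the exponents $i\in\{0,\dots,\lceil\log_2 n\rceil\}$ locates an $i^\star$ with $P(\level(u)+2^{i^\star-1})=0$ and $P(\level(u)+2^{i^\star})=1$. This brackets $\meet(u,v)$ inside a $\cT$-level interval of width $2^{i^\star-1}$, which can be $\Theta(n)$. (In fact $\cT$-levels range up to $\log_\tau\metricstretch$, so with only $\lceil\log_2 n\rceil$ pointers the top jump need not even reach the root.) That interval may contain many distinct $\hat{\cT}$-ancestors of $u$ and of $v$, so your assertion that ``the relevant $\hat{\cT}$-node pair $(u(j),v(j))$ is known'' is not justified: you know the pair at the top anchor, but the first meeting may occur at a strict descendant pair, and that pair's meetings list is not the one you are holding. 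A standard binary-lifting refinement of the remaining interval costs $\Theta(\log n)$ further steps, not $O(1)$.

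Second, the predicate evaluation. To test $P$ at the anchor $\ell=\level(u)+2^i$ you need both $u(\ell)$ and $v(\ell)$. The first is $\jump_i(u)$, but the second is \emph{not} $\jump_i(v)$ since $\level(u)\neq\level(v)$ in general. Your proposed fix (``always advance the side whose current ancestor has the smaller level'') is a sequential walk, not a binary search, and does not fit the $O(\log\log n)$ budget.

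The paper avoids both problems by a heavy-path decomposition of $\hat{\cT}$: every root--to--$u$ path is covered by at most $\lceil\log_2 n\rceil$ heavy paths, and the binary search runs over those $O(\log n)$ paths (stored as the array $\paths(u)$) rather than over $\cT$-level exponents. The key extra ingredient is a precomputed meeting oracle for \emph{pairs of heavy paths}, analogous to Lemma~\ref{lem:know-query} for nodes. The predicate for the binary search asks whether the current heavy path on $u$'s side meets some heavy path on $v$'s side, and this can be tested in $O(\lambda^{3+\eta})$ time by iterating over the $O(\lambda^{3+\eta})$ acquaintances of the top node of the current heavy path and looking each up in the hash table for $\paths(v)$. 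Once the correct heavy-path pair is isolated, a single $O(\eta\log\lambda)$ query to the path-meeting oracle returns the exact level --- precisely the ``exact readout'' your scheme is missing.
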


\ignore{
\begin{lemma}
The above augmented $\cT$ tree can be stored in $O(\lambda^{3+\eta}n\log n)$ space and computed in polynomial time.
\end{lemma}

\begin{proof}
The additional $\log n$ factor in the space bound comes from the size of $\paths(x)$ for each node $x$ (See Appendix\ref{app:meet}).
We need only to describe how to obtain running time independent of the stretch of the metric.
In order to compute the $\hat{\cT}$ tree (without augmentation) we can slightly improve our construction algorithm:
instead of going into the next level, one can compute the smallest distance between current sets and jump directly
to the level when some pair of sets merges or begins to know each other.\qed
\end{proof}
}

\subsection{Fast subtree extraction}
\label{sec:fast-subtree-extraction}
For any subset $S\subseteq V$ we are going to define an {\em $S$-subtree} of $\hat{\cT}$,
denoted $\hat{\cT}(S)$.
Intuitively, this is the subtree of $\hat{\cT}$ induced by the leaves
corresponding to $S$.
Additionally we store all the meetings in $\hat{\cT}$ between the nodes corresponding to the nodes of $\hat{\cT}(S)$.

More precisely, the set of nodes of $\hat{\cT}(S)$ is defined as
$\{A \cap S\ :\ $ $A\subseteq V$ and $A$ is a node of $\hat{\cT}\}$.
A node $Q$ of $\hat{\cT}(S)$ is an ancestor of a node $R$ of $\hat{\cT}(S)$ iff $R\subseteq Q$.
This defines the edges of $\hat{\cT}(S)$.
Moreover, for two nodes $A$, $B$ of $\hat{\cT}$ such that both $A$ and $B$ intersect $S$,
if $A$ knows $B$ at level $j$, we say that $A \cap S$ {\em knows} $B \cap S$ in $\hat{\cT}(S)$ at level $j$.
A triple $(Q,R,j_{QR})$, where $j_{QR}$ is a minimal level such that $Q$ knows $R$ at level $j_{QR}$, is called a {\em meeting}.
The {\em level} of a node $Q$ of $\hat{\cT}(S)$ is the lowest level of a node $A$ of $\hat{\cT}$ such that $Q=A\cap S$.
Together with each node $Q$ of $\hat{\cT}(S)$ we store its level and
a list of all its meetings $(Q,R,j_{QR})$.
A node $Q$ is {\em responsible} for a meeting $(Q,R,l)$ when $\level(\parent(Q))\le \level(\parent(R))$.

\begin{remark}
The subtree  $\hat{\cT}(S)$ is not necessarily equal to any compressed tree for the metric space $(S,d|_{S^2})$.
\end{remark}

In this subsection we describe how to extract $\hat{\cT}(S)$ from $\hat{\cT}$ efficiently.
The extraction runs in two phases.
In the first phase we find the nodes and edges of $\hat{\cT}(S)$ and in the second phase we find the meetings.

\subsubsection{Finding the nodes and edges of $\hat{\cT}(S)$ }
We construct the extracted tree in a bottom-up fashion.
Note that we can not simply go up the tree from the leaves corresponding to $S$ because we
could visit a lot of nodes of $\hat{\cT}$ which are not the nodes of $\hat{\cT}(S)$.
The key observation is that if $A$ and $B$ are nodes of $\hat{\cT}$,
such that $A\cap S$ and $B \cap S$ are nodes of $\hat{\cT}(S)$ and $C$ is the
lowest common ancestor of $A$ and $B$, then $C\cap S$ is a node of $\hat{\cT}(S)$ and it has level
$\level(C)$.

\begin{enumerate}
  \item Sort the leaves of $\hat{\cT}$ corresponding to the elements of $S$ according to their inorder value in $\hat{\cT}$, i.e.,\ from left to right.
  \item For all pairs $(A,B)$ of neighboring nodes in the sorted order, insert into a dictionary $M$
    a key-value pair where the key is the pair $(\level(\lca_{\hat{\cT}}(A,B)), \lca_{\hat{\cT}}(A,B))$ and the value is the pair $(A,B)$.
    The dictionary $M$ may contain multiple elements with the same key.
  \item Insert all nodes from $S$ to a second dictionary $P$, where nodes are sorted according to their
    inorder value from the tree $\hat{\cT}$.

  \item while $M$ contains more than one element
    \begin{enumerate}
      \item Let $x=(l,C)$ be the smallest key in $M$.
      \item Extract from $M$ all key-value pairs with the key $x$, denote those
      values as $(A_1,B_1),\ldots,(A_m,B_m)$.
      \item Set $P = P \setminus \bigcup_i\{A_i,B_i\}$.
      \item Create a new node $Q$, make the nodes erased from $P$ the children of $Q$. Store $l$ as the level of $Q$.
      \item Insert $C$ into $P$. Set $\origin(Q)=C$.
      \item If $C$ is not the smallest element in $P$ (according to the inorder value)
            let $C_l$ be the largest element in $P$ smaller than $C$ and add
            a key-value pair to $M$ where the key is equal to $(\level(\lca_{\hat{\cT}}(C_l,C)), \lca_{\hat{\cT}}(C_l,C))$ and the value is $(C_l,C)$.
      \item If $C$ is not the largest element in $P$
            let $C_r$ be the smallest element in $P$ larger than $C$ and add
            a key-value pair to $M$ where the key is given by the pair $(\level(\lca_{\hat{\cT}}(C,C_r)), \lca_{\hat{\cT}}(C,C_r))$ and the value is the pair $(C,C_r)$.
    \end{enumerate}
\end{enumerate}

Note that in the above procedure, for each node $Q$ of $\hat{\cT}(S)$ we
compute the corresponding node in $\hat{\cT}$, namely $\origin(Q)$.
Observe that $\origin(Q)$ is the lowest common ancestor of the leaves corresponding to elements of $Q$, and $\origin(Q)\cap S=Q$.

\begin{lemma}
\label{lem:extract-vertices}
The tree ${\hat{\cT}}$ can be augmented so that the
above procedure runs in $O(k\log k)$ time and when it ends the only key in $M$ is the root of the extracted tree
\end{lemma}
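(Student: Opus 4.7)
The plan is to realize the dictionaries $M$ and $P$ as balanced binary search trees so that all insertions, deletions, and minimum-key lookups cost $O(\log k)$, and to augment $\hat{\cT}$ with two pieces of $O(n)$-size information: the inorder rank of each leaf of $\hat{\cT}$ and a standard constant-query-time LCA data structure. Both augmentations can be built in $O(n)$ time and space during the preprocessing of $\hat{\cT}$, so they ensure that $\lca_{\hat{\cT}}(\cdot,\cdot)$, $\level(\cdot)$, and inorder comparisons are all constant-time operations used as black boxes by the extraction procedure.

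With this setup the initialization costs $O(k\log k)$: sorting the $k$ elements of $S$ by their precomputed inorder positions takes $O(k\log k)$; for each of the $k-1$ consecutive pairs we compute the LCA and its level in $O(1)$ and insert the key-value pair into $M$ in $O(\log k)$; and we insert the $k$ elements of $S$ into $P$ in $O(\log k)$ each.

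For the main loop I would use the following key observation, already implicit in the algorithm's description: each iteration creates exactly one node $Q$ of $\hat{\cT}(S)$, with $\origin(Q)=C$, whose children in $\hat{\cT}(S)$ are precisely the elements just removed from $P$. Because every internal node of $\hat{\cT}(S)$ has at least two children, the extracted tree has at most $2k-1$ nodes, so the loop executes at most $k-1$ iterations. Per-iteration overhead outside the extraction step — one minimum-key query on $M$, a constant number of deletions from $P$, the insertion of $\origin(Q)$ into $P$, and at most two new LCA computations followed by insertions into $M$ — is $O(\log k)$ and contributes $O(k\log k)$ in total. The remaining cost is the bulk extraction of all entries of $M$ sharing the smallest key; this amortizes to $O(\log k)$ per entry ever placed in $M$, and since we insert only $k-1$ entries at initialization plus at most two more per iteration, the total number of $M$-insertions is $O(k)$, giving an $O(k\log k)$ cumulative extraction cost.

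The main obstacle is making the amortization tight and verifying that the combinatorial correspondence holds: specifically, that when $(l,C)$ is extracted as the current minimum key, the set of values $(A_i,B_i)$ recovered is exactly the set of sibling-pair witnesses for the children of the node $Q\in\hat{\cT}(S)$ with $\origin(Q)=C$, so that no $M$-entry is ever ``orphaned'' or counted in the budget twice. Once this correspondence is established, the termination claim — that the single remaining key of $M$ identifies the root of $\hat{\cT}(S)$ — follows since the process reduces the active set in $P$ to one surviving node whose origin is the $\hat{\cT}$-ancestor common to every leaf of $S$, i.e., the root of $\hat{\cT}(S)$.
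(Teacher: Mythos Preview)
Your argument is correct and matches the paper's approach exactly: balanced search trees for $M$ and $P$ together with an $O(n)$-space constant-time LCA structure on $\hat{\cT}$ make each of the $O(k)$ dictionary and LCA operations cost $O(\log k)$ or $O(1)$, yielding $O(k\log k)$. The paper's own proof is a terser version of the same accounting (it simply cites $O(1)$ LCA after linear preprocessing and counts $O(k)$ operations), and your discussion of the amortization and of the termination invariant in fact goes beyond what the paper spells out.
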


\begin{proof}
All dictionary operations can be easily implemented in $O(\log k)$ time
whereas the lowest common ancestor can be found in $O(1)$ time after an $O(n)$-time preprocessing (see~\cite{lca}).
This preprocessing requires $O(n)$ space and has to be performed when $\hat{\cT}$ is constructed.
Since we perform $O(k)$ of such operations $O(k \log k)$ is the complexity of our algorithm.\qed
\end{proof}

\subsubsection{Finding the meetings in $\hat{\cT}(S)$}
\ignore{
XXXXXXXXXXXXXXXXXXXXXXXXXXXXXXXXXXXXXXXXXXXXXXXXXXXXXXXXXXXXXXXXx
The intuition behind our method is the following.
Consider a meeting $m$ of two sets (nodes) $Q$ and $R$. What happens with $m$ when we go up the tree $\hat{\cT}(S)$? Consider the parents $Q'$ and $R'$ of $Q$ and $R$, respectively (assume w.l.o.g.\ $\level(Q')\le\level(R')$).
If $Q'=R'$ the meeting $m$ just disappears.
If $Q'\neq R'$ and $\level(Q')=\level(R')$, then $m$ transforms to a meeting $m'(Q',R',\level(Q'))$.
Finally, when $\level(Q')<\level(R')$, then $m$ transforms to a meeting $m'(Q',R,\level(Q'))$.
Hence we can say that any meeting $m$ origins either from a higher meeting $m'$ or from the joining of two sets. Inspired by this observation we will generate meetings in a
top-to-bottom manner.

We scan the nodes of $\hat{\cT}(S)$ in the descending order of their levels.
More precisely, nodes are considered in groups, so that nodes with the same level are analysed in one group.

Now assume we consider a group of nodes $u_1, \ldots, u_m$.
For each such node we are going to find all the meetings it is responsible for.
XXXXXXXXXXXXXXXXXXXXXXXXXXXXXXXXXXXXXXXXXXXXXXXXXXXXXXXXXXXXXXXXXXXXXXXXXXXX}

We generate meetings in a top-down fashion.
We consider the nodes of $\hat{\cT}(S)$ in groups.
Each group corresponds to a single level.
Now assume we consider a group of nodes $u_1, \ldots, u_t$ at some level $\ell$.
Let $v_1,\ldots,v_{t'}$ be the set of children of all nodes $u_i$ in $\hat{\cT}(S)$.
For each node $v_i$, $i=1,\ldots, t'$  we are going to find all the meetings it is responsible for.
Any such meeting ($v_i, x, j)$ is of one of two types:
\begin{enumerate}
 \item $\parent(x)\in\{u_1,\ldots,u_t\}$, possibly $\parent(x)=\parent(v_i)$, or
 \item $\parent(x)\not\in\{u_1,\ldots,u_t\}$, i.e.\ $\level(\parent(x))>\ell$.
\end{enumerate}

\begin{figure}[htbp]
\begin{center}
{\small{
\begin{tikzpicture}[scale=0.68]
\foreach \korzenx in {6cm}
\foreach \korzeny in {2cm}
{

    \draw (-1.5cm,0.3cm) node {$level(u_i)$};
    \draw[dashed, very thin] (-2.5cm, 0cm) -- (12.5cm, 0cm);

    \foreach \x / \y in {9cm/-1cm, 12cm/-1cm}
    {
      \foreach \xx / \yy in {-0.35cm/-0.5cm, 0cm/-0.5cm, 0.35cm/-0.5cm}
      {
        \draw (\x,\y) -- (\x+\xx, \y+\yy);
      }
    }

    \foreach \x / \y in {0cm/0cm, 3cm/0cm, 6cm/0cm, 9cm/1cm, 12cm/1cm}
    {
      \draw (\x, \y) -- (\korzenx, \korzeny);
    }

    \foreach \x / \y in {9cm/1cm, 12cm/1cm}
    {
      \draw (\x, \y) -- (\x, \y-2cm);
    }

    \foreach \x / \y in {0cm/0cm, 3cm/0cm, 6cm/0cm}
    {
      \draw[very thick,dashed] (\x, \y) circle (0.3);
      \fill[black] (\x, \y) circle (0.2);
    }

    \foreach \x / \y in {\korzenx/\korzeny, 9cm/1cm, 12cm/1cm, 9cm/-1cm, 12cm/-1cm}
    {
      \draw[very thick, dashed] (\x, \y) circle (0.3);
      \fill[white] (\x, \y) circle (0.2);
      \draw[very thick] (\x, \y) circle (0.2);
    }

    \foreach \x / \y / \ddy in {0cm/0cm/-0.25cm, 3cm/0cm/0cm, 6cm/0cm/-0.5cm}
    {
      \foreach \dx / \dy / \dolna in {-1cm/-1.2cm/-0.5cm, 0cm/-1cm/0cm, 1cm/-1.4cm/0.5cm}
      {
        \draw (\x,\y) -- (\x+\dx,\y+\dy+\ddy);

        \draw (\x+\dx,\y+\dy+\ddy) -- (\x+\dx,\y+\dy-0.5cm+\ddy);
        \draw (\x+\dx,\y+\dy+\ddy-1.3cm) -- (\x+\dx,\y+\dy-2.1cm+\ddy+\dolna);

        \fill[black] (\x+\dx,\y+\dy+\ddy-0.7cm) circle (0.02);
        \fill[black] (\x+\dx,\y+\dy+\ddy-0.9cm) circle (0.02);
        \fill[black] (\x+\dx,\y+\dy+\ddy-1.1cm) circle (0.02);

        \fill[gray] (\x+\dx,\y+\dy+\ddy) circle (0.2);

        \foreach \xx / \yy in {-0.35cm/-0.5cm, 0cm/-0.5cm, 0.35cm/-0.5cm}
        {
          \draw (\x+\dx,\y+\dy+\ddy-2.1cm+\dolna) -- (\x+\dx+\xx, \y+\dy+\ddy+\yy-2.1cm+\dolna);
        }

        \draw[very thick, dashed] (\x+\dx, \y+\dy+\ddy-2.1cm+\dolna) circle (0.3);
        \fill[white] (\x+\dx, \y+\dy+\ddy-2.1cm+\dolna) circle (0.2);
        \draw[very thick] (\x+\dx, \y+\dy+\ddy-2.1cm+\dolna) circle (0.2);
      }
    }

}
\end{tikzpicture}
}}
\caption{\label{rysunek1}
Extracting meetings. The figure contains a part of tree $\hat{\cT}$.
  Nodes corresponding to the nodes of $\hat{\cT}(S)$ are surrounded by dashed circles.
  The currently processed group of nodes ($u_i$, $i=1,\ldots, k$) are
  filled with black. Nodes from the set $L$ are filled with gray.
  The nodes below the gray nodes are the the nodes $v_j$, i.e.\ the children of nodes $u_i$ in $\hat{\cT}(S)$.}
\end{center}
\end{figure}
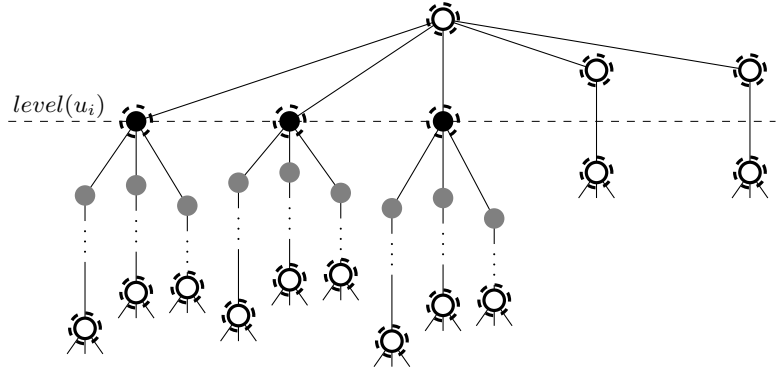

The meetings of the first kind are generated as follows.
Consider the following set of nodes of $\hat{\cT}$ (drawn as grey disks in Figure~\ref{rysunek1}).
\begin{eqnarray}
L = \{x\ :\ \textnormal{$x$ is the first node on the path in $\hat{\cT}$ from $\origin(u_i)$ to $\origin(v_j)$,} \nonumber\\
\textnormal{for some $i=1,\ldots,t$, $j=1,\ldots,t'$}\} \nonumber
\end{eqnarray}
We mark all the nodes of $L$.
Next, we identify all pairs of nodes of $L$ that know each other.
By Lemma~\ref{lem:bounded-responsibility} there are at most $\lambda^{3+\eta}t'=O(t')$  such pairs
and these pairs can be easily found by scanning, for each $x\in L$, all the meetings $x$ is responsible for and such that the node $x$ meets is in $L$.
In this way we identify all pairs of children $(v_i, v_j)$ such that $v_i$ knows $v_j$, namely if $x, y\in L$ and $x$ knows $y$ in $\hat{\cT}$, then $x\cap S$ knows $y\cap S$ in $\hat{\cT}(S)$.
Then, if $v_i$ knows $v_j$, the level of their meeting can be found in $O(\tau\log\lambda\log\log n)$ time using operation $\meet(\origin(v_i), \origin(v_j))$ from Lemma~\ref{lem:binary-jumping}.
Hence, finding the meetings of the first type takes $O(\lambda^{3+\eta}\log\lambda\ \tau t'\log\log n)$ time for one group of nodes, and $O(\lambda^{3+\eta}\log\lambda\ \tau k\log\log n)$ time in total.

Finding the meetings of the second type is easier.
Consider any second type meeting $(v_i, w, l)$. Let $u_j$ be the parent of $v_i$.
Then there is a meeting $(u_j, w, \level(u_j))$ stored in $u_j$.
Hence it suffices to consider, for each $u_j$ all its meetings at level $\level(u_j)$.
For every such meeting $(u_j, w, \level(u_j))$, and for every child $v_i$ of $u_j$ we can apply $\meet(\origin(v_i), \origin(w))$
from Lemma~\ref{lem:binary-jumping} to find the meeting of $v_i$ and $w$.
For the time complexity, note that by Property~\ref{item:acquaintance-limit} of the partition tree, a node $u_j$ meets $\lambda^{3+\eta}=O(1)$ nodes at level $\level(u_j)$. Since we can store the lists of meetings sorted by levels, we can extract all those meetings in $O(\lambda^{3+\eta})$ time. For each meeting we iterate over the children of $u_j$ (Property~\ref{item:childlimit} of the partition tree) and apply Lemma~\ref{lem:binary-jumping}.
This results in $O(\lambda^{3+\eta}\log\lambda\ \tau \log\log n)$ time per a child, hence $O(\lambda^{3+\eta}\log\lambda\ \tau k\log\log n)$ time in total.

After extracting all the meetings, we sort them by levels in $O(k\log k)$ time.

\ignore{
Now as all adjacency edges are extracted, we sort an adjacency list in each node according to the level of those edges.
}
We can claim now the following theorem.

\begin{theorem}
\label{thm:subtree}
For a given set $S \subseteq V$ ($|S|=k$) we can extract the $S$-subtree
of the compressed tree $\hat{\cT}$ in
time $O(\lambda^{3+\eta}\log\lambda\ \tau  k (\log k + \log\log n))=O(k (\log k + \log\log n))$.
\end{theorem}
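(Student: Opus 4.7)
The plan is to combine the running time bounds that were established for each of the two phases of the extraction algorithm described just before the statement.

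First I would handle the node/edge discovery phase. By Lemma~\ref{lem:extract-vertices}, the bottom-up procedure that repeatedly pops the lowest-level LCA from the dictionary $M$, merges the corresponding children, and re-inserts the new node into $P$, runs in $O(k \log k)$ time. This already gives us the skeleton of $\hat{\cT}(S)$ together with the $\origin(\cdot)$ pointers, which are essential for invoking $\meet$ later.

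Next I would bound the meeting-discovery phase. Here the nodes of $\hat{\cT}(S)$ are processed top-down, grouped by level. For one group $u_1,\ldots,u_t$ with children $v_1,\ldots,v_{t'}$ in $\hat{\cT}(S)$, the first type of meetings (those whose endpoints' parents are both in the current group) is handled by forming the set $L$ of ``first nodes on the path from $\origin(u_i)$ to $\origin(v_j)$'', and then using Lemma~\ref{lem:bounded-responsibility} to conclude that only $O(\lambda^{3+\eta} t')$ pairs in $L$ can know each other; each such pair yields a $\meet$ query costing $O(\eta \log\lambda \log\log n)$ by Lemma~\ref{lem:binary-jumping}. For the second type of meetings, each $u_j$ has $O(\lambda^{3+\eta})$ meetings at its own level (by Property~\ref{item:acquaintance-limit}) and $O(\lambda^{\eta+3})$ children (by Property~\ref{item:childlimit}), so we make only $O(\lambda^{3+\eta})$ calls to $\meet$ per child. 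Summing over all groups, the total number of children considered is exactly $|\hat{\cT}(S)|-1 = O(k)$, so both types together cost $O(\lambda^{3+\eta}\log\lambda\ \tau k \log\log n)$.

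Finally I would add the $O(k \log k)$ cost of sorting each node's meeting list by level. Putting the three contributions together yields
\[
O(k\log k) \;+\; O(\lambda^{3+\eta}\log\lambda\ \tau\, k \log\log n) \;+\; O(k \log k) \;=\; O(\lambda^{3+\eta}\log\lambda\ \tau\, k (\log k + \log\log n)),
\]
which, since $\lambda$, $\eta$ and $\tau$ are constants once the approximation parameter $\varepsilon$ is fixed, simplifies to $O(k(\log k + \log\log n))$.

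I expect the only subtle step to be verifying that summing $t'$ over all level-groups gives $O(k)$ rather than something larger: this is true because each node of $\hat{\cT}(S)$ is counted as a child exactly once (it has a unique parent in $\hat{\cT}(S)$), and there are at most $2k-1$ nodes in total. Everything else is a direct bookkeeping argument using the properties and lemmas already proved.
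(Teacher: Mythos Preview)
Your proposal is correct and follows essentially the same approach as the paper: the theorem is simply a summary of the inline running-time analysis of the two extraction phases, and you have recapitulated that analysis faithfully, including the key observation that summing $t'$ over all level-groups is $O(k)$ since each non-root node of $\hat{\cT}(S)$ is a child exactly once. The only cosmetic point is the $\eta$ versus $\tau$ factor in the $\meet$ cost, but this inconsistency is already present in the paper (Lemma~\ref{lem:binary-jumping} states $O(\eta\log\lambda\log\log n)$ while the surrounding text and the theorem use~$\tau$), so it is not a flaw in your argument.
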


\section{Pseudospanner construction and applications in approximation}
\label{sec:spanner}

In this section we use the subtree extraction procedure described in the previous section, to construct for any set $S \subseteq V$, a graph that is essentially a small constant stretch spanner for $S$. We then use it to give fast approximations algorithms for several problems.

\subsection{Pseudospanner construction}

\begin{definition}
Let $G=(V,E_G)$ be an undirected connected graph with a weight function $w_G:E_G\rightarrow \mathbb{R}_+$. A graph $H=(V,E_H)$, $E_H \subseteq E_G$ with a weight function $w_H:E_H\rightarrow \mathbb{R}_+$  is an \emph{$f$-pseudospanner} for $G$ if for every pair of vertices $u,v \in V$ we have $d_G(u,v) \le d_H(u,v) \le f \cdot d_G(u,v)$, where $d_G$ and $d_H$ are shortest path metrics induced by $w_G$ and $w_H$.
The number $f$ in this definition is called the \emph{stretch} of the pseudospanner. A pseudospanner for a metric space is simply a pseudospanner for the complete weighted graph induced by the metric space.
\end{definition}

\begin{remark}
Note the subtle difference between the above definition and the classical spanner definition. A pseudospanner $H$ is a subgraph of $G$ in terms of vertex sets and edge sets but it does not inherit the weight function $w_G$. We cannot construct spanners in the usual sense without maintaining the entire distance matrix, which would require prohibitive quadratic space. However, pseudospanners constructed below become classical spanners when provided the original weight function.

Also note, that it immediately follows from the definition of a pseudospanner that for all $uv \in E_H$ we have $w_G(u,v) \le w_H(u,v)$.
\end{remark}

In the remainder of this section we let $(V,d)$ be a metric space of size $n$, where $d$ is doubling with doubling constant $\lambda$. We also use $\hat{\cT}$ to denote the hierarchical tree data structure corresponding to $(V,d)$, and
$\eta$ and $\tau$ denote the parameters of $\hat{\cT}$. For any $S \subset V$, we use $\hat{\cT}(S)$ to denote the subtree of $\hat{\cT}$ corresponding to $S$, as described in the previous section.
Finally, we define a constant $C(\eta,\tau) = \left( 1 + \left(\frac{\tau}{\tau-1}\right)^2 2^{3-\eta}\right) \tau r_j$.

\begin{theorem}
\label{thm:spanner}
Given $\hat{\cT}$ and set $S \subseteq V$, where $|S|=k$, one can construct a $C(\eta,\tau)$-pseudospanner for $S$ in time $O(k(\log k + \log\log n))$. This spanner has size $O(k)$.
\end{theorem}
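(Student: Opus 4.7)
The plan is to build the pseudospanner directly from the extracted subtree $\hat{\cT}(S)$, spending constant time per meeting. First I would invoke Theorem~\ref{thm:subtree} to obtain $\hat{\cT}(S)$ together with all its meetings in $O(k(\log k + \log\log n))$ time. Then, bottom-up, I would assign to every node $Q$ of $\hat{\cT}(S)$ a representative $\opleader(Q) \in Q$: for a leaf $\{v\}$ take $v$, and for an internal node with children $V_1,\ldots,V_t$ take $\opleader(V_1)$. Finally, for every meeting $(Q,R,j_{QR})$ stored in $\hat{\cT}(S)$ I would add to $H$ the edge $\{\opleader(Q),\opleader(R)\}$ with weight $w_H := (1 + 4\tau 2^{-\eta}/(\tau-1))\, r_{j_{QR}}$. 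By Property~(1) the diameters of the corresponding sets are at most $2 r_{j_{QR}}\tau 2^{-\eta}/(\tau-1)$, and by the definition of ``knows'' there are points of the two sets within distance $r_{j_{QR}}$; the triangle inequality then yields $d(\opleader(Q),\opleader(R)) \le w_H$. Hence every edge weight dominates the true distance and $d_H(u,v) \ge d(u,v)$ for all $u,v$; the nontrivial direction of the pseudospanner property is the stretch upper bound.

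For the size and time: by the same counting as in Corollary~\ref{lem:num_edges}, each node of $\hat{\cT}(S)$ is responsible for at most $\lambda^{3+\eta}$ meetings (Property~(2) applied through $\origin(\cdot)$), so $\hat{\cT}(S)$ stores $O(k)$ meetings and $H$ has $O(k)$ edges. Assigning leaders and emitting the edges takes $O(k)$ time, so the overall cost is dominated by Theorem~\ref{thm:subtree}.

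The main obstacle is the stretch analysis. I would fix $u,v \in S$ and let $\{u\} = V_0 \subsetneq \cdots \subsetneq V_p$ and $\{v\} = W_0 \subsetneq \cdots \subsetneq W_q$ be their ancestor chains in $\hat{\cT}(S)$, where $V_p = W_q$ is the $\lca$. The candidate path in $H$ is $u = \opleader(V_0) \to \opleader(V_1) \to \cdots \to \opleader(V_{p-1}) \to \opleader(W_{q-1}) \to \cdots \to \opleader(W_0) = v$. The hop $\opleader(V_i) \to \opleader(V_{i+1})$ is either trivial (if the leader did not change at $V_{i+1}$) or uses the edge produced by the meeting between $V_i$ and the selected child of $V_{i+1}$; by Property~(3) those two siblings know each other at level at most $\level(V_{i+1})-1$, so the hop costs at most $(1 + 4\tau 2^{-\eta}/(\tau-1))\, r_{\level(V_{i+1})-1}$. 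Since the levels $\level(V_i)$ are strictly increasing and $r_\ell = \tau^\ell r_0$, summing along the $u$-branch telescopes into a geometric series bounded by $(1 + 4\tau 2^{-\eta}/(\tau-1))\cdot \tau/(\tau-1)\cdot r_{j^*}$, where $j^*$ is the meeting level of $V_{p-1}$ and $W_{q-1}$. A symmetric bound controls the $v$-branch, and the bridging edge contributes at most $(1 + 4\tau 2^{-\eta}/(\tau-1))\, r_{j^*}$. Combined with the lower bound $d(u,v) \ge r_{j^*-1}$ coming from Property~(4) (since $V_{p-1}$ and $W_{q-1}$ do not know each other at level $j^*-1$), this gives a constant ratio which matches $C(\eta,\tau)$ after arranging the constants. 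I expect the subtle point to be the bookkeeping around the compression of $\hat{\cT}(S)$: a single edge of $\hat{\cT}(S)$ may correspond to many consecutive levels of $\cT$, which is what makes the sibling-meeting-level inequality $\le \level(V_{i+1})-1$ the place where Property~(3) must be invoked carefully via $\origin(\cdot)$.
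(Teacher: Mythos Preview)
Your construction and the size/time analysis are essentially the paper's approach (the paper also adds explicit father--son edges, but these are subsumed by your sibling--meeting edges via Property~(3), so that difference is harmless). The gap is in the stretch argument, specifically in the claimed lower bound $d(u,v) \ge r_{j^*-1}$.

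The meeting level $j^*$ of $V_{p-1}$ and $W_{q-1}$ in $\hat{\cT}(S)$ is by definition at least $\max(\level(V_{p-1}),\level(W_{q-1}))$: a meeting of $V_{p-1}$ and $W_{q-1}$ can only be recorded at levels where both nodes exist as $S$-intersections. But the level $j_0=\meet(u,v)$ in $\cT$ at which the sets containing $u$ and $v$ first know each other can be strictly smaller than $\level(V_{p-1})$. In that case, at level $j^*-1$ the $\cT$-sets containing $u$ and $v$ \emph{do} know each other (knowing is monotone by Lemma~\ref{lem:parentknows}), so Property~(4) does not give $d(u,v)\ge r_{j^*-1}$; it only gives $d(u,v)\ge r_{j_0-1}$. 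Concretely, if $S=\{u,v,w\}$, $u$ and $v$ first know each other at level $5$, while $u$ and $w$ merge at level $10$ and $\{u,w\}$ merges with $\{v\}$ at level $20$, then $V_{p-1}=\{u,w\}$ with $\level(V_{p-1})=10$, so $j^*\ge 10$, yet $d(u,v)$ can be as small as $r_4$. Along your chosen path $d_H(u,v)$ is of order $r_{j^*}$, so the ratio is $\Theta(\tau^{j^*-j_0})$, which is unbounded.

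Your spanner $H$ is actually fine: in the example above, $H$ also contains the edge coming from the meeting $(\{u\},\{v\},5)$ in $\hat{\cT}(S)$, with weight of order $r_5$. What needs to change is only the analysis: do not climb all the way to the $\hat{\cT}(S)$-lca. Instead, as the paper does, work at the true $\cT$-level $j_0=\meet(u,v)$, let $S_2,S_2^*\in\cS_{j_0}$ be the sets containing $u,v$, and bridge between $\opleader(S_2\cap S)$ and $\opleader(S_2^*\cap S)$; these are nodes of $\hat{\cT}(S)$ that know each other at level $j_0$ (possibly proper descendants of $V_{p-1},W_{q-1}$). Then both the upper bound on $d_H(u,v)$ and the lower bound $d(u,v)\ge r_{j_0-1}$ are in terms of $r_{j_0}$, and the telescoping you describe yields exactly the constant $C(\eta,\tau)$.
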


The proof is in the appendix.

\begin{remark}
Similarly to Property~\ref{item:remstretchlimit} of the partition tree, we can argue that the above theorem gives a $(1+\varepsilon)$-pseudospanner for any $\varepsilon>0$.
Here, we need to take $\tau=1+\frac{\varepsilon}{3}$ and $\eta=O(\frac{1}{\varepsilon^3})$.
\end{remark}

\begin{remark}
It is of course possible to store the whole distance matrix of $V$ and construct a spanner for any given subspace $S$ using standard algorithms. However, this approach has a prohibitive $\Theta(n^2)$ space complexity.
\end{remark}

\subsection{Applications in Approximation}
\label{sec:spanner-app}

Results of the previous subsection immediately give several interesting approximation algorithms.
In all the corollaries below we assume the tree $\hat{\cT}$ is already constructed.

\begin{corollary}[Steiner Forest]
\label{cor:spanner-steiner-forest}
Given a set of points $S \subseteq V$, $|S|=k$, together with a set of requirements $R$ consisting of pairs of elements of $S$, a Steiner forest with total edge-length at most $2 C(\eta,\tau)$OPT=$(2+\varepsilon)$OPT, for any $\varepsilon>0$ can be constructed in time $O(k(\log^2 k + \log\log n))$.
\end{corollary}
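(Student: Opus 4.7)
My plan is to reduce the Steiner Forest query on $S$ to an instance on the pseudospanner of $S$ produced by Theorem~\ref{thm:spanner}, and then invoke a fast primal--dual 2-approximation on that small graph. Since $C(\eta,\tau)$ can be made arbitrarily close to $1$ by the remark following Theorem~\ref{thm:spanner}, the composition yields a $2C(\eta,\tau) = (2+\varepsilon)$-approximation.

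Concretely, I would first call Theorem~\ref{thm:spanner} on $S$, obtaining in $O(k(\log k + \log \log n))$ time a $C(\eta,\tau)$-pseudospanner $H$ of size $O(k)$, with weight function $w_H$ that dominates $w_G$ edge-wise and satisfies $d_H(u,v) \le C(\eta,\tau)\,d(u,v)$ for every pair of vertices. I would then run a fast primal--dual 2-approximation for Steiner Forest on $(H, R)$: classical Agrawal--Klein--Ravi / Goemans--Williamson implementations take $\Omega(n^2)$ time, which is too slow, but accelerated variants (for instance the one of Cole, Hariharan, Lewenstein and Porat) run in $O(m \log^2 n)$ time, which on $H$ amounts to $O(k \log^2 k)$. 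The resulting subgraph $F \subseteq E_H$ is returned as the answer.

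For the approximation guarantee, let $\mathrm{OPT}$ be the optimum Steiner forest cost in $(V,d)$ and $\mathrm{OPT}_H$ the optimum in $H$ under $w_H$. The primal--dual algorithm gives $w_H(F) \le 2\,\mathrm{OPT}_H$, and the pseudospanner stretch gives $\mathrm{OPT}_H \le C(\eta,\tau)\,\mathrm{OPT}$ via a standard shortest-path-substitution argument applied to an optimal $(V,d)$-forest; combined with the pointwise bound $w_G \le w_H$, the cost of $F$ measured under the original metric is at most $2C(\eta,\tau)\,\mathrm{OPT} \le (2+\varepsilon)\,\mathrm{OPT}$. The total running time is $O(k(\log k + \log \log n)) + O(k \log^2 k) = O(k(\log^2 k + \log \log n))$. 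The main obstacle I anticipate is pinpointing a 2-approximation for Steiner Forest meeting the $O(k \log^2 k)$ bound on $O(k)$-edge graphs; the stretch analysis itself is routine once Theorem~\ref{thm:spanner} is invoked, with any rerouting through Steiner vertices outside $S$ absorbed into the constant $C(\eta,\tau)$ because the underlying metric is doubling.
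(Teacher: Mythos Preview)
Your proposal is correct and follows exactly the same route as the paper: build the $C(\eta,\tau)$-pseudospanner of $S$ via Theorem~\ref{thm:spanner} and run the $O(m\log^2 n)$ Cole et al.\ Steiner Forest algorithm on it, so that the $2$-approximation on $H$ composes with the $C(\eta,\tau)$ stretch to give $(2+\varepsilon)\mathrm{OPT}$ in the stated time. Your write-up in fact spells out the stretch bookkeeping ($w_G\le w_H$, $\mathrm{OPT}_H\le C(\eta,\tau)\mathrm{OPT}$) that the paper leaves implicit.
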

\begin{proof}
We use the $O(m\log^2 n)$ algorithm of Cole et al.\ \cite{fast_steiner} (where $m$ is the number of edges) on the pseudospanner guaranteed by Theorem~\ref{thm:spanner}. This algorithm can give a guarantee $2+\epsilon$ for an arbitrarily small $\varepsilon$.\qed
\end{proof}

Similarly by using the MST approximation for TSP we get
\begin{corollary}[TSP]
\label{cor:spanner-tsp}
Given a set of points $S \subseteq V$, $|S|=k$, a Hamiltonian cycle for $S$ of total length at most $2 C(\eta,\tau)$OPT=$(2+\varepsilon)$OPT for any $\varepsilon>0$ can be constructed in time $O(k(\log k + \log\log n))$.
\end{corollary}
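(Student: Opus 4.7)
The plan is to apply the classical MST-based $2$-approximation for metric TSP, but on the pseudospanner $H$ for $S$ constructed via Theorem~\ref{thm:spanner}. First, extract $H$ in time $O(k(\log k + \log\log n))$; since $H$ has $O(k)$ edges and vertex set $S$, a minimum spanning tree $T$ of $H$ with respect to its weight function $w_H$ can be computed in $O(k\log k)$ time by any standard algorithm. Then perform a DFS traversal of $T$, producing a closed walk $W$ that uses every edge of $T$ exactly twice, hence of total $w_H$-weight $2\, w_H(T)$. Finally, shortcut $W$ to a Hamiltonian cycle $C$ on $S$ by skipping already-visited vertices, and report $C$ (its length is measured in the original metric $d$).

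The approximation ratio follows from two observations. First, $w_H(T) \le C(\eta,\tau)\cdot \mathrm{OPT}$: take an optimum TSP tour on $S$ in the metric $d$, delete one edge to obtain a Hamiltonian path of $d$-weight at most $\mathrm{OPT}$, and for every edge $(u,v)$ of this path substitute a shortest $u$-$v$ path in $H$, whose $w_H$-weight equals $d_H(u,v)\le C(\eta,\tau)\,d(u,v)$ by definition of a pseudospanner. The union of these paths is a connected subgraph of $H$ spanning $S$ of total $w_H$-weight at most $C(\eta,\tau)\cdot\mathrm{OPT}$, which bounds $w_H(T)$ from above. Second, each shortcut edge $(u,v)$ of $C$ replaces a subwalk of $W$ of $w_H$-weight at least $d_H(u,v)\ge d(u,v)$; summing over shortcuts gives $d(C)\le 2\,w_H(T)\le 2C(\eta,\tau)\cdot\mathrm{OPT}$. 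Choosing $\tau$ and $\eta$ as in the remark after Theorem~\ref{thm:spanner} yields $(2+\varepsilon)\,\mathrm{OPT}$ for any $\varepsilon>0$. The total running time is dominated by the pseudospanner construction and is $O(k(\log k + \log\log n))$.

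The only delicate point is that the pseudospanner carries a weight function $w_H$ that may differ from the metric distances $d$ on pairs of $S$, so the textbook MST argument has to be re-examined. The inequality $d(u,v)\le d_H(u,v)$ built into the definition of a pseudospanner is precisely what makes both steps above pass through: it bounds $w_H(T)$ from above using a witness in the metric $d$, and it bounds the $d$-length of the shortcut cycle from above using the $w_H$-length of the DFS walk. No further assumptions are needed.
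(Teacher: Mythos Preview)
Your proof is correct and follows exactly the approach the paper intends: the paper merely says ``by using the MST approximation for TSP'' on the pseudospanner of Theorem~\ref{thm:spanner}, and you have carefully spelled out how the two inequalities $d(u,v)\le d_H(u,v)\le C(\eta,\tau)\,d(u,v)$ make the standard double-the-MST-and-shortcut argument go through despite $w_H\neq d$.
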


\ignore{
In (uncapacitated) facility location problem we have a set $F$ of facilities and
a set $C$ of cities. For every facility $i\in F$ there is specified an {\em opening cost}, i.e.\ a nonnegative number $f_i$.
Furthermore, for each city-facility pair there is given a {\em connection  cost}.
Here we consider the metric version of the problem, where $F$ and $C$ are subsets of a metric space $(X,d)$ and connection costs are given by function $d$.
The objective is to open a subset of facilities in $F$, and connect each city to an open facility so that the total cost is minimized.
}

Currently, the best approximation algorithm for the facility location problem is
the $1.52$-approximation of Mahdian, Ye and Zhang~\cite{mahdian:1.52}.
A fast implementation using Thorup's ideas~\cite{thorup:quick-and-good} runs in
deterministic $O(m\log m)$ time, where $m=|F|\cdot |C|$, and
if the input is given as a weighted graph of $n$ vertices and $m$ edges, in $\tilde{O}(n+m)$ time, with high probability (i.e.\ with probability $\ge 1-1/n^{\omega(1)}$).
In an earlier work, Thorup~\cite{thorup:quick} considers also the $k$-center and $k$-median problems in the graph model. When the input is given as a weighted graph of $n$ vertices and $m$ edges, his algorithms run in $\tilde{O}(n+m)$ time, w.h.p.\ and have approximation guarantees of $2$ for the $k$-center problem and $12+o(1)$ for the $k$-median problem.
By using this latter algorithm with our fast spanner extraction we get the following corollary.

\begin{corollary}[Facility Location with restricted facilities]
\label{cor:spanner-facility-location}
Given two sets of points $C \subseteq V$ (cities) and $F \subseteq V$ (facilities) together with  opening cost $f_i$ for each facility $i\in F$, for any $\varepsilon>0$, a $(1.52+\varepsilon)$-approximate solution to the facility location problem can be constructed in time $O((|C|+|F|)(\log^{O(1)}(|C|+|F|)+\log\log|V|))$, w.h.p.
\end{corollary}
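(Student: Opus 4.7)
The plan is to reduce the problem on the full metric to a facility location instance on a small pseudospanner of $S = C \cup F$, and then apply the fast facility location algorithm cited just above the corollary. Set $k = |C| + |F|$ and pick $\varepsilon' > 0$ small enough that $1.52(1+\varepsilon') \le 1.52 + \varepsilon$. First I would invoke Theorem~\ref{thm:spanner} on $S$ with parameters $\tau, \eta$ chosen as in the remark following it, producing a $(1+\varepsilon')$-pseudospanner $H$ with $O(k)$ edges in time $O(k(\log k + \log \log n))$.

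Next I would run the Mahdian-Ye-Zhang $1.52$-approximation in Thorup's weighted-graph implementation on $H$, using $F$ with its given opening costs $f_i$ as the set of facilities and $C$ as the set of cities. Because $|V(H)| + |E(H)| = O(k)$, this step runs in $\tilde O(k) = O(k \log^{O(1)} k)$ time with high probability and returns a solution whose cost under the shortest-path metric $d_H$ of $H$ is at most $1.52 \cdot \mathrm{OPT}_{d_H}$, where $\mathrm{OPT}_{d_H}$ denotes the optimum in $d_H$ with the same opening costs. Adding the extraction and algorithm times yields the claimed total running time $O((|C|+|F|)(\log^{O(1)}(|C|+|F|) + \log\log |V|))$.

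For the approximation guarantee I would combine the pseudospanner property $d(u,v) \le d_H(u,v) \le (1+\varepsilon') d(u,v)$ with the observation that opening costs are intrinsic to facilities and hence appear identically in both instances. Evaluating the $d$-optimal solution under $d_H$ gives $\mathrm{OPT}_{d_H} \le (1+\varepsilon') \mathrm{OPT}_d$, while any feasible solution has $d$-cost no larger than its $d_H$-cost since distances only shrink when we switch from $d_H$ back to $d$. Chaining these inequalities bounds the $d$-cost of the returned solution by $1.52(1+\varepsilon') \mathrm{OPT}_d \le (1.52 + \varepsilon) \mathrm{OPT}_d$. The only subtlety worth flagging is the interplay of opening and connection costs in the composed approximation; this is not a real obstacle, because the pseudospanner uniformly inflates connection distances by at most $1+\varepsilon'$ while leaving opening costs untouched, so the two-step approximation factor multiplies cleanly.
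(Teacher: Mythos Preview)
Your proposal is correct and follows exactly the approach the paper intends: construct the $(1+\varepsilon')$-pseudospanner of $C\cup F$ via Theorem~\ref{thm:spanner} and then run Thorup's $\tilde O(n+m)$ graph-model implementation of the Mahdian--Ye--Zhang $1.52$-approximation on it, combining the two factors as you do. The paper does not spell out the proof beyond the sentence preceding the corollary, so your write-up is simply a careful elaboration of the same argument.
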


The application of our results to the variant of {\sc Facility Location} with unrestricted facilities is not so immediate.
We were able to obtain the following.

\begin{theorem}[Facility Location with unrestricted facilities]
\label{thm:only-cities-facility-location}
Assume that for each point of $n$-point $V$ there is assigned an opening cost $f(x)$.
Given a set of $k$ points $C \subseteq V$, for any $\varepsilon>0$, a $(3.04+\varepsilon)$-approximate solution to the facility location problem with cities' set $C$ and facilities' set $V$ can be constructed in time $O(k\log k (\log^{O(1)}k+\log\log n))$, w.h.p.
\end{theorem}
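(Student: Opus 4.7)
The plan is to reduce the unrestricted-facilities variant to the restricted version solved by Corollary~\ref{cor:spanner-facility-location}. For each $c \in C$ I would introduce the \emph{virtual} opening cost $g(c) = \min_{v \in V}(f(v) + d(c,v))$, which is the cheapest way to serve $c$ by a dedicated facility, and then compute approximate values $\tilde g(c)$ satisfying $g(c) \le \tilde g(c) \le (1+\varepsilon) g(c)$. I would then feed the restricted instance---cities $C$, potential facilities $C$ with opening costs $\tilde g(c)$, and distances taken from the pseudospanner of Theorem~\ref{thm:spanner}---into Corollary~\ref{cor:spanner-facility-location}. Finally, each opened virtual facility at $c$ is replaced in the output by the real facility $v(c) = \arg\min_v(f(v)+d(c,v))$ realizing $g(c)$; since $g(c) \le \tilde g(c)$ and distances satisfy the triangle inequality, this substitution does not increase the total cost.

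For the approximation bound I would prove that the restricted optimum $\mathrm{ROPT}$ is within a factor $2$ of $\mathrm{OPT}$. Given an optimal unrestricted solution $(F^*, \sigma^*)$, for each $f \in F^*$ I would pick $c_f$ to be the city assigned to $f$ that is closest to $f$, open the set $\{c_f : f \in F^*\}$ in the restricted instance, and reroute each $c$ to $c_{\sigma^*(c)}$. Since $g(c_f) \le f(f) + d(c_f, f)$ and the $c_f$ are distinct cities (each contributing at most its own OPT-connection term), the opening cost is at most $\mathrm{OPT}_{\mathrm{open}} + \mathrm{OPT}_{\mathrm{conn}}$, and the triangle inequality combined with the minimality of $c_f$ bounds the new connection cost by $2\,\mathrm{OPT}_{\mathrm{conn}}$. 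Composing with the $(1.52+\varepsilon_1)$-approximation of Corollary~\ref{cor:spanner-facility-location} and carefully splitting the cost between the opening and connection terms yields the $(3.04+\varepsilon)$ overall ratio after suitable tuning.

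The main technical obstacle is computing $\tilde g(c)$ for all $c \in C$ within the time budget, since $g(c)$ is a priori a minimum over all $n$ facilities. My plan is to augment $\hat{\cT}$ in preprocessing so that each node $A$ stores $\low(A) := \min_{v \in A} f(v)$ (an $O(n)$ overhead), and then, after extracting $\hat{\cT}(C)$ via Theorem~\ref{thm:subtree}, perform a top-down pass that assigns to each node $Q$ the value $\mu(Q) := \min\bigl(\mu(\parent(Q)),\ \min_{(Q,R,j)} \low(\origin(R)) + (1+\varepsilon) r_j\bigr)$, where the inner minimum ranges over the meetings of $Q$ stored in the extracted tree. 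Setting $\tilde g(c) := \mu(c)$ at leaves takes $O(k)$ total time because $\hat{\cT}(C)$ holds only $O(k)$ meetings. The subtle point is arguing that $\tilde g(c) \le (1+O(\varepsilon))\, g(c)$: the optimum facility $v^*$ realizing $g(c)$ may lie in a set of $\hat{\cT}$ that does not intersect $C$ and is therefore represented only by a higher-level ancestor in $\hat{\cT}(C)$; Properties~\ref{item:acquaintance-limit} and~\ref{item:remstretchlimit}, together with a judicious choice of $\tau$ and $\eta$, bound the induced distance overestimate.

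Putting the pieces together: extraction costs $O(k(\log k + \log\log n))$ by Theorem~\ref{thm:subtree}, the $\tilde g$ pass costs $O(k)$, pseudospanner construction costs $O(k(\log k + \log\log n))$ by Theorem~\ref{thm:spanner}, and the $(1.52+\varepsilon_1)$-approximation on the restricted instance costs $O(k \log^{O(1)} k + k \log\log n)$ with high probability by Corollary~\ref{cor:spanner-facility-location}, giving the claimed $O(k \log k (\log^{O(1)} k + \log\log n))$ running time w.h.p. The two main hurdles are the factor-$2$ analysis of the unrestricted-to-restricted reduction and the careful argument that $\tilde g$ can be accurately extracted from the meetings in $\hat{\cT}(C)$ alone without the help of a distance oracle.
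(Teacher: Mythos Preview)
Your reduction has two genuine gaps. First, the substitution step is not cost-preserving: if a virtual facility at $c$ serves $m$ cities $c_1,\ldots,c_m$, replacing $c$ by $v(c)$ changes the cluster cost from $\tilde g(c)+\sum_i d(c_i,c)$ to at most $f(v(c))+m\,d(c,v(c))+\sum_i d(c_i,c)$; since $\tilde g(c)\ge g(c)=f(v(c))+d(c,v(c))$, the cost can grow by $(m-1)\,d(c,v(c))$, so the triangle inequality alone does not justify ``this substitution does not increase the total cost''. Without that step, the chain $\mathrm{ROPT}\le 2\,\mathrm{OPT}$ and $\text{approx}\le 1.52\,\mathrm{ROPT}$ says nothing about the final unrestricted solution. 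Second, your computation of $\tilde g(c)$ from $\hat{\cT}(C)$ cannot be a $(1+\varepsilon)$-approximation of $g(c)$: the meetings stored in $\hat{\cT}(C)$ are, by definition, only between nodes whose $\origin$ intersects $C$. If the facility $v^\star$ realizing $g(c)$ lies, at the level $j^\star$ where $c$ first knows it, in a set $S^\star$ with $S^\star\cap C=\emptyset$, then no meeting $(Q,R,j)$ of $\hat{\cT}(C)$ has $v^\star\in\origin(R)$ at scale $j^\star$; your formula detects $v^\star$ only once it is absorbed into a set containing some city, and the corresponding $r_j$ may exceed $d(c,v^\star)$ by an unbounded factor. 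Properties~\ref{item:acquaintance-limit} and~\ref{item:remstretchlimit} control errors \emph{within} a level, not the gap between $j^\star$ and the first level at which $v^\star$ shares a set with a city.

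The paper avoids both issues by shifting the facility search into preprocessing rather than trying to recover it from the extracted subtree. For every node $v$ of the full tree it stores $\low(v)$, the cheapest facility among \emph{all} sets that $v$ knows (not just within $v$ itself), and for every $x\in V$ it precomputes a short list $F(x)$ of $O(\log n)$ candidate \emph{real} facilities along the root-path of $x$. At query time it takes $F(C)=\bigcup_{c\in C}F_k(c)\cup\{\low(\text{root})\}$, a set of $O(k\log k)$ real facilities, proves directly (Theorem~\ref{th:fl-reduce}) that restricting the facility set to $F(C)$ loses at most a factor $2+\varepsilon$, and then invokes Corollary~\ref{cor:spanner-facility-location}. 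Since the restricted instance already uses real facilities, no virtual-to-real substitution is ever needed.
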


The above result is described in Appendix~\ref{sec:unres-fl}.
Our approach there is a reduction to the variant with restricted facilities.
The general, rough idea is the following: during the preprocessing phase, for every point $x\in V$ we compute a small set $F(x)$ of facilities that seem a good choice for $x$, and when processing a query for a set of cities $C$, we just apply Corollary~\ref{cor:spanner-facility-location} to cities' set $C$ and facilities' set $\bigcup_{c\in C}F(c)$.

\begin{corollary}[$k$-center and $k$-median]
\label{cor:spanner-k-center}
Given a set of points $C \subseteq V$ and a number $r \in \mathbb{N}$, for any $\varepsilon>0$, one can construct:
\begin{enumerate}
 \item[(i)] a $(2+\varepsilon)$-approximate solution to the $r$-center problem, or
 \item[(ii)] a $(12+\varepsilon)$-approximate solution to the $r$-median problem
\end{enumerate}
in time $O(|C|(\log|C|+\log\log|V|))$, w.h.p.
\end{corollary}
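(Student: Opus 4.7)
The plan is to combine Theorem~\ref{thm:spanner} with Thorup's graph algorithms from~\cite{thorup:quick} in a black-box fashion, essentially mirroring the strategy already used for Corollaries~\ref{cor:spanner-steiner-forest},~\ref{cor:spanner-tsp} and~\ref{cor:spanner-facility-location}. First, given the query set $C$ with $|C|=k$, invoke Theorem~\ref{thm:spanner} to extract a $(1+\delta)$-pseudospanner $H$ for $C$ in time $O(k(\log k + \log\log|V|))$, where $\delta>0$ will be chosen small enough in terms of $\varepsilon$ (using the remark after Theorem~\ref{thm:spanner}). The graph $H$ has $O(k)$ vertices and $O(k)$ edges, and its shortest-path metric $d_H$ satisfies $d(u,v)\le d_H(u,v)\le (1+\delta)d(u,v)$ for all $u,v\in C$.

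Next, I would run Thorup's algorithm on the weighted graph $H$: the $2$-approximation for $r$-center in case (i) and the $(12+o(1))$-approximation for $r$-median in case (ii). Since $H$ has $O(k)$ vertices and edges, Thorup's $\tilde{O}(n+m)$-time procedures run on $H$ in time $\tilde{O}(k)$ with high probability, which fits inside the claimed $O(k(\log k + \log\log|V|))$ bound once added to the extraction time.

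The correctness argument is the routine stretch-composition step. Let $S^\star$ be an optimal set of $r$ centers (resp.\ medians) for $(C,d)$ with cost $\mathrm{OPT}$, and let $S'$ be the solution returned by Thorup's algorithm applied to $(C,d_H)$. For $r$-center, the optimum under $d_H$ is at most $(1+\delta)\mathrm{OPT}$, so $\max_{c\in C}\min_{s\in S'}d(c,s)\le \max_{c\in C}\min_{s\in S'}d_H(c,s)\le 2(1+\delta)\mathrm{OPT}$; choosing $\delta\le \varepsilon/2$ yields the $(2+\varepsilon)$ bound. The $r$-median case is analogous with factor $12+o(1)$ instead of $2$, and $\delta$ chosen so that $(12+o(1))(1+\delta)\le 12+\varepsilon$.

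The main obstacle, such as it is, is ensuring that Thorup's algorithms can actually be executed on $H$ within the claimed budget: they were designed for inputs given as weighted graphs with vertex and edge lists, which is exactly the form in which Theorem~\ref{thm:spanner} delivers $H$, so this is not a genuine obstruction but merely a remark needed to close the argument. All other steps are routine. The query can be formally stated and the proof concluded with a short paragraph combining the two time bounds and the two approximation computations.
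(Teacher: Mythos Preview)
Your proposal is correct and follows exactly the approach the paper intends: extract the $(1+\delta)$-pseudospanner via Theorem~\ref{thm:spanner} and run Thorup's $\tilde{O}(n+m)$ graph algorithms for $r$-center and $r$-median on it, absorbing the $(1+\delta)$ stretch into the $\varepsilon$. The paper does not spell out a separate proof for this corollary, but the sentence preceding Corollary~\ref{cor:spanner-facility-location} (``By using this latter algorithm with our fast spanner extraction\ldots'') makes clear that the same black-box combination is meant here as well.
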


\section{Dynamic Minimum Spanning Tree and Steiner Tree}
\label{sec:dynamic}

\newcommand{\OPT}{\ensuremath{\textrm{OPT}}}

In this section we give one last application of our hierarchical data structure. It has a different flavour from the other applications presented in this paper since it is not based on constructing a spanner, but uses the data structure directly.
We solve the Dynamic Minimum Spanning Tree / Steiner Tree (DMST/DST) problem,
where we need to maintain a spanning/Steiner tree of a subspace $X \subseteq V$ throughout a sequence of vertex additions and removals to/from $X$.

The quality of our algorithm is measured by the total cost of the tree produced relative to the optimum tree, and time required to add/delete vertices.
Let $|V|=n$, $|X|=k$.
Our goal is to give an algorithm that maintains a constant factor approximation of the optimum tree, while updates are polylogarithmic in $k$, and do not depend (or depend only slightly) on $n$.
It is clear that it is enough to find such an algorithm for DMST. Due to space limitations, in this section we only formulate the results.
Precise proofs are gathered in Appendix \ref{app:dynamic}.

  \begin{theorem}\label{thm:dynamic-mst}
  Given the compressed tree $\hat{\cT}(V)$, we can maintain an $O(1)$-approximate Minimum Spanning Tree for a subset $X$ subject to
  insertions and deletions of vertices.
  The insert operation works in $O(\log^5k + \log\log n)$ time and
  the delete operation works in $O(\log^5k)$ time, $k=|X|$.
  Both times are expected and amortized.
  \end{theorem}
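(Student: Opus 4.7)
The plan is to reduce dynamic MST on $X$ in the metric $(V,d)$ to ordinary graph-theoretic dynamic MST on a pseudospanner $H(X)$ built on top of $\hat{\cT}(X)$ in the spirit of Theorem~\ref{thm:spanner}. Since $H(X)$ has $O(k)$ edges and constant stretch, the graph MST of $H(X)$ is an $O(1)$-approximation of the optimum MST of $X$ under $d$. One can then plug $H(X)$ into the fully dynamic MST structure of Holm, Lichtenberg and Thorup, which supports each edge insertion/deletion in $O(\log^4 k)$ amortized time. The theorem therefore reduces to maintaining $\hat{\cT}(X)$ and $H(X)$ so that each point update to $X$ triggers only $O(\log k)$ edge changes in $H(X)$, plus a single $O(\log\log n)$ cost for locating a newly inserted point in the ambient tree $\hat{\cT}(V)$.

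To maintain $\hat{\cT}(X)$, I would dynamise the bottom-up construction of Section~\ref{sec:fast-subtree-extraction}. Keep $X$ in a balanced BST ordered by inorder position of its leaves in $\hat{\cT}(V)$. To insert a point $v$, find its inorder predecessor and successor in $O(\log k)$, take the two relevant lowest common ancestors in $\hat{\cT}(V)$ in $O(1)$ each via the preprocessing of~\cite{lca}, and splice $v$ into $\hat{\cT}(X)$ by creating at most one new leaf and at most one new internal node. Their levels and $\origin$ pointers come directly from the LCA queries; by Lemma~\ref{lem:bounded-responsibility} each new node has $O(1)$ meetings, which are found by a constant number of $\meet$ calls --- exactly one of these invocations of Lemma~\ref{lem:binary-jumping} contributes the $O(\log\log n)$ term in the insertion bound. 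Deletion is symmetric, only removing a leaf and possibly contracting its parent, which requires no $\meet$ call and so avoids the $\log\log n$ summand altogether.

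It remains to show that a single point update induces only $O(\log k)$ edge changes in the pseudospanner. Newly created or destroyed nodes of $\hat{\cT}(X)$ each participate in $O(1)$ meetings by Properties~(\ref{item:acquaintance-limit})--(\ref{item:childlimit}); in addition, inserting $v$ can modify some meetings along the ancestor chain of $v$, because the new leaf can force previously disjoint ancestral sets to meet at earlier levels. A level-by-level accounting using the doubling property shows that only $O(\log k)$ ancestral meetings are genuinely affected, since above some $O(\log k)$ threshold the two sets involved in a meeting already contain many points of $X$ and are indifferent to the fate of $v$. Feeding the resulting $O(\log k)$ edge updates into the Holm--Lichtenberg--Thorup structure costs $O(\log k)\cdot O(\log^4 k)=O(\log^5 k)$ amortized, yielding both claimed bounds.

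The main obstacle is exactly this $O(\log k)$ amortized bound on spanner-edge changes. The compressed tree $\hat{\cT}(X)$ can be much deeper than $\log k$ --- its depth inherits the stretch-dependent depth of $\hat{\cT}$ --- so one cannot afford to rescan the full ancestor chain of $v$. The key is to argue that meetings lying high in $\hat{\cT}(X)$ are insensitive to a single point update: above the $O(\log k)$-deep ``active zone'' the two sets involved in any affected meeting differ by at most $v$, and the corresponding spanner edges are either already present or are not needed. Turning this geometric observation into an amortized potential-function argument that remains valid over long, mixed sequences of insertions and deletions is the technically delicate part; everything else is routine bookkeeping on top of Lemma~\ref{lem:binary-jumping} and a black-box invocation of the Holm--Lichtenberg--Thorup data structure.
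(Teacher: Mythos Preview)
Your approach is genuinely different from the paper's, and it has a real gap.

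The paper does \emph{not} maintain the pseudospanner $H(X)$ of Theorem~\ref{thm:spanner}. Instead it picks a \emph{random} root $r\in X$, splits $X$ into layers $L_i(X)=\{x:\meet(x,r)=i+1\}$, and for each layer builds a sparse auxiliary graph out of ``buckets'' $B_{l,S}$ indexed by $O(\log k)$ consecutive levels $l\in[i-\log_\tau k,\,i+O(1)]$; vertices sharing a bucket are linked by a path. Each vertex lies in $O(\log k)$ buckets by construction, so an insertion or deletion touches $O(\log k)$ edges per auxiliary graph, and the Holm--de Lichtenberg--Thorup structure is applied \emph{per layer}. The random root is what makes the bounds \emph{expected and amortized}: when the root is deleted (probability $\le 1/k$) everything is rebuilt. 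Your write-up has no randomness, so you never account for the word ``expected'' in the statement.

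The gap in your argument is the claimed $O(\log k)$ bound on spanner-edge changes. You write that ``by Lemma~\ref{lem:bounded-responsibility} each new node has $O(1)$ meetings''; that lemma bounds the meetings a node is \emph{responsible} for, not the meetings it \emph{participates} in. A new leaf $\{v\}$ can participate in one meeting for every distinct $\hat\cT(X)$-node that comes within range while $\{v\}$ climbs from level $0$ to $\level(N)$, and this count is governed by the depth of $\hat\cT(X)$, which can be $\Theta(k)$. Your ``active zone'' heuristic---that above an $O(\log k)$ threshold the sets involved already contain many points of $X$---is false in general: in a caterpillar-shaped $\hat\cT(X)$ every internal node has exactly two children and may contain only $O(1)$ points more than its child. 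Turning this into an amortized potential argument is precisely the hard part you defer, and it is not clear it can be done for the pseudospanner of Theorem~\ref{thm:spanner} as stated. The paper's bucket construction sidesteps this entirely by \emph{hard-coding} the $O(\log k)$ level window, at the price of a weaker (but still $O(1)$) approximation factor justified by a separate layering lemma (Lemma~\ref{lem:a-la-jia}).
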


\bibliographystyle{plain}
{\small
\bibliography{jiatree}
}

\appendix
\newpage
\section{Related Work}
\label{app:related_work}
In the next few paragraphs we review different approaches to this problem, state the differences and try to point out the advantage of the results presented here.

\paragraph{Universal Algorithms}
In the case of {\sc Steiner Tree} and {\sc TSP} results pointing in the direction studied here have been already obtained. In the so called, universal approximation algorithms introduced by Jia {\it et. al}~\cite{jia}, for each element of the request we need to fix an universal solution in advance.
More precisely, in the case of {\sc Steiner Tree} problem for each $v\in V$ we fix a path $\pi_v$, and a solution to $S$ is given as $\{\pi_v:v\in S\}$. Using universal algorithms we need very small space to remember the precomputed solution and we are usually able to answer queries efficiently, but the corresponding approximation ratios are relatively weak, i.e, for {\sc Steiner Tree} the approximation ratio is $O(\log^4 n/\log \log n)$. Moreover, there is no direct way of answering queries in $\tilde{O}(k)$ time, and in order to achieve this bound one needs to use similar techniques as we use in Section~\ref{sec:fast-subtree-extraction}. 
In our model we loosen the assumption that the solution itself has to be precomputed beforehand, but the data output of the preprocessing is of roughly the same size (up to polylogarithmic factors). Also, we allow the algorithm slightly more time for answering the queries and, as a result are able to improve  the approximation ratio substantially --- from polylogarithmic to a constant.

\paragraph{Spanners and Distance Oracles}
The question whether the graph can be approximately represented using less space than its size was previously captured by the notion of spanners and approximate distance oracles. Both of these data structures represent the distances in the graphs up to a given multiplicative factor $f$. The difference is that the spanner needs to be a subgraph of the input graph hence distances between vertices are to be computed by ourselves, whereas the distance oracle can be an arbitrary data structure that can compute the distances when needed. However, both are limited in size. For general graphs $(2t-1)$-spanners (i.e., the approximation factor is $f=2t-1$) are of size $O(n^{1+1/t})$ and can be constructed in randomized linear time as shown by Baswana and Sen~\cite{BS03}. On the other hand, Thorup and Zwick~\cite{tz05} have shown that the $(2t-1)$-approximate oracles of size $O(tn^{1+1/t})$, can be constructed in $O(tmn^{1+1/t})$ time, and are able to answer distance queries in $O(t)$ time. It seems that there is no direct way to obtain, based on these results, an algorithm that could answer our type of queries faster then $O(k^2)$.

The construction of spanners can be improved in the case of doubling metric. The papers~\cite{spanner_d1,spanner_d2} give a construction of $(1+\epsilon)$-spanners that have linear size in the case when $\epsilon$ and the doubling dimension of the metric are constant. Moreover, Har-Peled and Mendel~\cite{spanner_d1} give $O(n \log n)$ time construction of such spanners.
A hierarchical structure similar to that of~\cite{kr-lee} and the one we use in this paper was also used by Roditty~\cite{roditty} to maintain a dynamic spanner of a doubling metric, with a $O(\log n)$ update time. However, all these approaches assume the existence of a distance oracle.
When storing the whole distance matrix, these results, combined with known approximation algorithms in the classical setting~\cite{mahdian:1.52,thorup:quick-and-good,thorup:quick,fast_steiner}, imply a data-structure that can answer {\sc Steiner Tree}, {\sc Facility Location} with restricted facilities and {\sc $k$-Median} queries in $O(k\log k)$ time. However, it does not seem to be easy to use this approach to solve the variant of {\sc Facility Location} with unrestricted facilities. To sum  up, spanners seem to be a good solution in our model in the case when a $O(n^2)$ space is available for the data structure. The key advantage of our solution is the low space requirement. On the other hand, storing the spanner requires nearly linear space, but then we need $\tilde{O}(n)$ time to answer each query. The distance matrix is unavailable and we will need to process the whole spanner to respond to a query on a given set of vertices.  

\paragraph{Sublinear Approximation Algorithms}
Another way of looking at the problem is the attempt to devise sublinear algorithm that would be able to solve approximation problems for a given metric. This study was started by Indyk~\cite{indyk99} who gave constant approximation ratio $O(n)$-time algorithms for: {\sc Furthest Pair}, {\sc $k$-Median} (for constant $k$), {\sc Minimum Routing Cost Spanning Tree}, {\sc Multiple Sequence Alignment}, {\sc Maximum Traveling Salesman Problem}, {\sc Maximum Spanning Tree} and {\sc Average Distance}. Later on B\u{a}doiu {\it et. al}~\cite{indyk05} gave an $O(n\log n)$ time algorithm for computing the cost of the uniform-cost metric {\sc Facility Location} problem. These algorithms work much faster that the $O(n^2)$-size metric description. However, the paper  contains many negative conclusions as well. The authors show that for the following problems $O(n)$-time constant approximation algorithms do not exists: general metric {\sc Facility Location}, {\sc Minimum-Cost Matching} and {\sc $k$-Median} for $k=n/2$. In contrary, our results show that if we allow the algorithm to preprocess partial, usually fixed, data we can answer queries in sublinear time afterwards.

\paragraph{Dynamic Spanning Trees}
The study of online and dynamic Steiner tree was started in the paper of \cite{onlie-steiner-tree}. However, the model considered there was not taking the computation time into account, but only minimized the number of edges changed in the Steiner tree. More recently the Steiner tree problem was studied in a setting more related to ours~\cite{dst2,dst3,dst5,dst4}. The first three of these paper study the approximation ratio possible to achieve when the algorithm is given an optimal solution together with the change of the data. The efficiency issue is only raised in~\cite{dst4}, but the presented algorithm in the worst case can take the same as computing the solution from scratch. The problem most related to our results is the dynamic minimum spanning tree (MST) problem. The study of this problem was finished by showing deterministic algorithm supporting edge updates in polylogarithmic time in~\cite{hlt:dynamic-mst}.  The dynamic Steiner tree problem is a direct generalization of the dynamic MST problem, and we were able to show similar time bounds. However, there are important differences between the two problems that one needs to keep in mind. In the case of MST, by definition, the set of terminals remains unchanged, whereas in the dynamic Steiner tree we can change it. On the other hand we cannot hope to get polylogarithmic update times if we allow to change the edge weights, because this would require to maintain dynamic distances in the graph. The dynamic distance problem seems to require polynomial time for updates~\cite{dynamic-shortest-paths}. 

\section{Partition tree --- precise definitions and proofs}\label{app:tree}

To start with, let us recall partition and partition scheme definitions.

\begin{definition}[Jia et al \cite{jia}, Definition 1] A $(r, \sigma, I)$-partition is
a partition of $V$ into disjoint subsets ${S_i}$ such that $diam\ S_i \leq r\sigma$ for all $i$
and for all $v \in V$, the ball $B_r(v)$ intersects at most $I$ sets in the partition.

A $(\sigma, I)$ partition scheme is an algorithm that produces $(r, \sigma, I)$-partition
for arbitrary $r \in \cR, r > 0$.
\end{definition}

\begin{lemma}[similar to Jia et al \cite{jia}, Lemma 2]\label{lem:doubpart}
Let $\eta \geq 0$ be a nonnegative integer.
For $V$ being a doubling metric space with doubling constant $\lambda$, there exists $(2^{-\eta}, \lambda^{3+\eta})$ partition scheme that works in polynomial time.
Moreover, for every $r$ the generated partition $\cS_r$ has the following property: for every $S \in \cS_r$ there exists $\opleader(S) \in S$ such that
$S \subset B_{2^{-\eta-1}r}(\opleader(S))$.
\end{lemma}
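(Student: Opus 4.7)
My plan is to build the partition from a greedy $\epsilon$-net. Fix $r>0$ and set $\epsilon = r\cdot 2^{-\eta-1}$. First I would construct a maximal $\epsilon$-separated subset $P \subseteq V$: initialize $P=\emptyset$ and, while some $v\in V$ satisfies $d(v,P) \geq \epsilon$, add such a $v$ to $P$. The resulting $P$ satisfies two standard properties: any two distinct $p, p'\in P$ are at distance at least $\epsilon$, and every $v\in V$ has some $p\in P$ within distance less than $\epsilon$ (by maximality). For finite $V$ this greedy process terminates in polynomial time.

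Next I would define the partition by assigning each $v\in V$ to a nearest element of $P$, breaking ties arbitrarily; let $S_p$ denote the resulting block indexed by $p\in P$, and set $\opleader(S_p)=p$. By the assignment rule together with the covering property of $P$, we have $S_p \subseteq B_{\epsilon}(p) = B_{2^{-\eta-1}r}(p)$, which is exactly the claimed leader inclusion; it also gives $\operatorname{diam} S_p \leq 2\epsilon = 2^{-\eta}r$, yielding the parameter $\sigma = 2^{-\eta}$.

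The main step I expect to require care is bounding, for arbitrary $v \in V$, the number of blocks $S_p$ that $B_r(v)$ meets. If $B_r(v)\cap S_p \neq \emptyset$, pick $u$ in the intersection: then $d(v,u)\leq r$ and $d(u,p) \leq \epsilon \leq r$, so $d(v,p) \leq 2r$ by the triangle inequality. Hence it suffices to bound $|P \cap B_{2r}(v)|$. Iterating the doubling property $\eta+3$ times, $B_{2r}(v)$ can be covered by $\lambda^{\eta+3}$ balls of radius $2r / 2^{\eta+3} = \epsilon/2$. Since any two distinct net points are at distance at least $\epsilon$, each such covering ball contains at most one element of $P$ (with a standard infinitesimal tightening of the net-separation threshold to avoid the boundary case of equality). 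Therefore $|P \cap B_{2r}(v)| \leq \lambda^{3+\eta}$, establishing $I = \lambda^{3+\eta}$.

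Altogether this gives a $(2^{-\eta}, \lambda^{3+\eta})$ partition scheme with the leader property, computable in polynomial time (the greedy net takes $O(|V|^2)$ time, the assignment another $O(|V|^2)$). The only delicate point in the plan is the strict-versus-non-strict inequality when counting net points inside a ball of radius $\epsilon/2$, which I would resolve either by using open balls of slightly smaller radius or by replacing $\epsilon$ by $(1-\delta)\epsilon$ for arbitrarily small $\delta>0$; everything else is a direct combinatorial consequence of the doubling condition.
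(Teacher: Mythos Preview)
Your proposal is correct and follows essentially the same approach as the paper: a greedy net of centers at scale $2^{-\eta-1}r$ together with a packing argument using $\eta+3$ iterations of the doubling property to cover $B_{2r}(v)$ by $\lambda^{3+\eta}$ balls of radius $2^{-\eta-2}r$. The only cosmetic difference is that the paper carves out balls greedily (taking $S_i = B_{2^{-\eta-1}r}(v_i)\cap V_i$ and removing it) rather than forming a maximal net and then doing nearest-neighbor assignment; this automatically yields the strict separation $d(v_i,v_j) > 2^{-\eta-1}r$, so the boundary issue you flag at the end simply does not arise.
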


\begin{proof}
Take arbitrary $r$. Start with $V_0 = V$. At step $i$ for $i=0,1,\ldots$ take any $v_i \in V_i$ and take $S_i = B_{2^{-\eta-1}r}(v_i) \cap V_i$.
Set $V_{i+1} = V_i \setminus S_i$ and proceed to next step. Obviously, $S_i \subset B_{2^{-\eta-1}r}(v_i)$, so $diam\ S_i < 2^{-\eta}r$ and
we set $\opleader(S_i) = v_i$.

Take any $v \in V$ and consider all sets $S_i$ crossed by ball $B_r(v)$. Every such set is contained in $B_{(1 + 2^{-\eta})r}(v) \subset B_{2r}(v)$, which can be covered
by at most $\lambda^{3+\eta}$ balls of radius $2^{-\eta-2}r$. But for every $i \neq j$, $d(v_i, v_j) > 2^{-\eta-1}r$, so every leader of set crossed by $B_r(v)$
must be in a different ball. Therefore there are at most $\lambda^{3+\eta}$ sets crossed.\qed
\end{proof}



Let us define the space partition tree $\cT$.

\begin{algorithm}\label{alg:firsttree}
Assume we have doubling metric space $(V, d)$ and $(2^{-\eta}, \lambda^{3+\eta})$ partition scheme form Lemma \ref{lem:doubpart}.
Let us assume $\eta \geq 2$ and let $\tau$ be a real constant satisfying:
\begin{itemize}
\item $2\frac{\tau 2^{-\eta}}{\tau - 1} \leq 1$, i.e, $\tau \geq \frac{1}{2^{\eta-1} - 1} + 1$.
\item $\tau \leq 2^\eta$.
\end{itemize}
Then construct space partition tree $\cT$ as follows:
\begin{enumerate}
\item Start with partition $\cS_0 = \{\{v\} : v \in V\}$, and $r_0 < \min \{d(u, v): u, v \in V, u \neq v\}$.
For every $\{v\} \in \cS_0$ let $\opleader(\{v\}) = v$. Let $\cS_0' = \cS_0$.
\item Let $j := 0$.
\item While $\cS_j$ has more than one element do:
  \begin{enumerate}
  \item Fix $r_{j+1} := \tau r_j = \tau^j r_0$.
  \item Let $\cS_{j+1}'$ be a partition of the set $L_j = \{\opleader(S): S \in \cS_j\}$ generated by given partition scheme for $r=2r_{j+1}$.
  \item Let $\cS_{j+1} := \{\bigcup \{S : \opleader(S) \in S'\}: S' \in \cS_{j+1}'\}$.
  \item Set $\opleader(\bigcup \{S : \opleader(S) \in S'\}) = \opleader(S')$ for any $S' \in \cS_{j+1}'$.
  \item $j := j + 1$.
  \end{enumerate}
\end{enumerate}
\end{algorithm}

Note that for every $j$, $\cS_j$ is a partition of $V$.
We will denote by $\opleader_j(v)$ the leader of set $S \in \cS_j$ that $v \in S$.

\begin{definition}
We will say that $S^* \in \cS_{j+1}$ is a parent of $S \in \cS_j$ if $\opleader(S) \in S^*$ (equally $S \subset S^*$).
This allows us to consider sets $\cS_j$ generated by Algorithm \ref{alg:firsttree} as nodes of a tree $\cT$ with root being the set $V$.
\end{definition}

\begin{lemma}\label{lem:leaderdist}
For every $j$ and for every $v \in S$ the following holds:
$$d(v, \opleader_j(v)) < \frac{\tau 2^{-\eta}}{\tau - 1} r_j.$$
\end{lemma}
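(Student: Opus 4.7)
The plan is to prove the bound by induction on $j$, leveraging the recursive definition of $\opleader_j$ in Algorithm \ref{alg:firsttree} together with the leader-proximity property guaranteed by Lemma \ref{lem:doubpart}.

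For the base case $j=0$, every set in $\cS_0$ is a singleton $\{v\}$ with $\opleader(\{v\}) = v$, so $d(v, \opleader_0(v)) = 0$ and the bound holds trivially.

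For the inductive step, fix $v$ and let $S \in \cS_j$ with $v \in S$, and let $S^* \in \cS_{j+1}$ be the parent of $S$. By the construction in Step 3(b), $\opleader(S^*)$ is the leader (in the sense of Lemma \ref{lem:doubpart}) of the block $S' \in \cS_{j+1}'$ obtained by partitioning $L_j$ with radius parameter $r = 2r_{j+1}$. The leader property of Lemma \ref{lem:doubpart} then gives
\[
d\bigl(\opleader(S),\, \opleader(S^*)\bigr) \;<\; 2^{-\eta-1}\cdot 2r_{j+1} \;=\; 2^{-\eta} r_{j+1}.
\]
Combining this with the inductive hypothesis $d(v, \opleader(S)) < \frac{\tau 2^{-\eta}}{\tau-1} r_j$ via the triangle inequality yields
\[
d(v, \opleader(S^*)) \;<\; \frac{\tau 2^{-\eta}}{\tau-1} r_j + 2^{-\eta} r_{j+1}.
\]

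The only thing left is the routine algebraic check that this equals $\frac{\tau 2^{-\eta}}{\tau-1} r_{j+1}$: using $r_{j+1} = \tau r_j$, the right-hand side simplifies to
\[
2^{-\eta} \tau r_j \left( \frac{1}{\tau-1} + 1 \right) \;=\; \frac{\tau^2 2^{-\eta}}{\tau-1} r_j \;=\; \frac{\tau 2^{-\eta}}{\tau-1} r_{j+1},
\]
closing the induction. There is really no hard step here --- the argument is essentially a geometric-series telescoping, and the constant $\frac{\tau 2^{-\eta}}{\tau-1}$ in the statement is precisely calibrated so that the inductive increment absorbs cleanly. The only thing to watch is that the hypothesis $\tau > 1$ (implicit in the algorithm's constraints on $\tau$ and $\eta$) is needed for the denominator to be positive.
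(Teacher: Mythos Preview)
Your proof is correct and is essentially the same argument as the paper's: the paper writes out the telescoping sum $d(v,\opleader_j(v)) \le \sum_{i=1}^{j} d(\opleader_i(v),\opleader_{i-1}(v))$ and bounds it directly as a finite geometric series, while you package the identical computation as an induction on $j$. The key input in both cases is the per-step bound $d(\opleader_{i-1}(v),\opleader_i(v)) < 2^{-\eta}r_i$ coming from Lemma~\ref{lem:doubpart}.
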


\begin{proof}
Note that
$$d(v, \opleader_j(v)) \leq \sum_{i=1}^{j} d(\opleader_i(v), \opleader_{i-1}(v))$$
We use bound from Lemma \ref{lem:doubpart}:
$$\sum_{i=1}^{j} d(\opleader_i(v), \opleader_{i-1}(v)) \leq \sum_{i=1}^{j} 2^{-\eta-1} \cdot 2\tau^i r_0 = 2^{-\eta} \tau \frac{\tau^j - 1}{\tau - 1} r_0 < \frac{\tau 2^{-\eta}}{\tau-1}r_j.$$\qed
\end{proof}

\begin{lemma}\label{lem:doubknow}
For every $j$, for every $S \in \cS_j$, the union of balls $\bigcup \{B_{r_j}(v): v \in S\}$
crosses at most $\lambda^{3 + \eta}$ sets from the partition $\cS_j$.
\end{lemma}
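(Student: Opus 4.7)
The plan is to reduce the claim to the partition scheme property from Lemma~\ref{lem:doubpart} applied at the ``leader level'' $L_{j-1}$. Formally, fix $j \ge 1$ (the case $j=0$ is vacuous since $r_0 < \min d$ forces $B_{r_0}(v) = \{v\}$, so $\bigcup_{v\in S} B_{r_0}(v) = S$ crosses only $S$ itself). By construction, $\cS_j$ is in canonical bijection with $\cS_j'$, a $(2r_j, 2^{-\eta}, \lambda^{3+\eta})$-partition of $L_{j-1}$, via $T \leftrightarrow T' := \{\opleader(T''): T'' \in \cS_{j-1},\ T'' \subseteq T\}$. Set $\ell := \opleader(S) \in L_{j-1}$.

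The core step is to show: for every $T \in \cS_j$ crossed by $\mathcal{B} := \bigcup_{v\in S} B_{r_j}(v)$, the associated $T' \in \cS_j'$ meets the ball $B_{2r_j}(\ell)$. Given $T$ crossed, pick $v \in S$ and $w \in T$ with $d(v,w) < r_j$, and let $\ell_w := \opleader_{j-1}(w)$. Since $w$ belongs to some $T'' \in \cS_{j-1}$ with $T'' \subseteq T$, we have $\ell_w = \opleader(T'') \in T'$. Now bound $d(\ell,\ell_w)$ with three triangle-inequality steps:
\begin{align*}
d(\ell,\ell_w) &\le d(\ell,v) + d(v,w) + d(w,\ell_w) \\
&< \tfrac{\tau 2^{-\eta}}{\tau-1}\, r_j + r_j + \tfrac{\tau 2^{-\eta}}{\tau-1}\, r_{j-1}
= r_j\!\left(1 + \tfrac{(\tau+1)\, 2^{-\eta}}{\tau-1}\right),
\end{align*}
using Lemma~\ref{lem:leaderdist} at levels $j$ and $j-1$ together with $r_{j-1}=r_j/\tau$. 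The algorithm's hypothesis $2\tau 2^{-\eta}/(\tau-1) \le 1$, combined with $\tau+1 \le 2\tau$, gives $(\tau+1)\, 2^{-\eta}/(\tau-1) \le 1$, hence $d(\ell,\ell_w) < 2r_j$, i.e.\ $\ell_w \in B_{2r_j}(\ell) \cap L_{j-1}$.

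To finish: distinct crossed sets $T$ yield distinct $T'$, each containing a point of $L_{j-1}$ inside $B_{2r_j}(\ell)$. By Lemma~\ref{lem:doubpart} applied to the partition scheme that produced $\cS_j'$ on $L_{j-1}$ (with parameter $r = 2r_j$), the ball $B_{2r_j}(\ell)$ meets at most $\lambda^{3+\eta}$ sets of $\cS_j'$, so $\mathcal{B}$ crosses at most $\lambda^{3+\eta}$ sets of $\cS_j$, as required. The main obstacle, and the only nontrivial point, is verifying that the recursively accumulated ``leader drift'' at two consecutive levels plus the original radius $r_j$ still fits inside $2r_j$; this is precisely where the constraint on $\tau$ in Algorithm~\ref{alg:firsttree} is used.
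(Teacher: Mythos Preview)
Your proof is correct and follows essentially the same approach as the paper: reduce to the $(2r_j,2^{-\eta},\lambda^{3+\eta})$-partition property of $\cS_j'$ on $L_{j-1}$ by showing that every crossed set $T\in\cS_j$ has a witness in $T'\cap B_{2r_j}(\opleader(S))$, via the triangle inequality and Lemma~\ref{lem:leaderdist}, then invoke the constraint $2\tau 2^{-\eta}/(\tau-1)\le 1$. The only cosmetic difference is the choice of witness: the paper takes the level-$j$ leader $\opleader_j(v^*)\in T'$ (giving the symmetric bound $r_j(1+2\tfrac{\tau 2^{-\eta}}{\tau-1})$), whereas you take the level-$(j-1)$ leader $\opleader_{j-1}(w)\in T'$ (giving the slightly tighter asymmetric bound $r_j(1+\tfrac{(\tau+1)2^{-\eta}}{\tau-1})$); both land below $2r_j$ for the same reason.
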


\begin{proof}
For $j = 0$ this is obvious, since $r_0$ is smaller than any $d(u,v)$ for $u \neq v$. Let us assume $j > 0$.

Let $v \in S \in \cS_j$, $v^* \in S^* \in \cS_j$, $S \neq S^*$ and $d(v, v^*) < r_j$.
Then, using Lemma \ref{lem:leaderdist},
$$d(\opleader_j(v), \opleader_j(v^*)) \leq d(\opleader_j(v), v) + d(v, v^*) + d(v^*, \opleader_j(v^*)) <$$
$$< r_j \Big(1 + 2\frac{\tau 2^{-\eta}}{\tau - 1}r_j\Big) < 2r_j$$
Since, by partition properties, $B_{2r_j}(\opleader_j(v))$ crosses at most $C$ sets
from $\cS_j'$ and $\opleader_j(v^*) \in B_{2r_j}(\opleader_j(v))$, this finishes the proof.\qed
\end{proof}

\begin{definition}
We say that a set $S \in \cS_j$ {\em{knows}} a set $S' \in \cS_j$ if
$\bigcup \{B_{r_j}(v): v \in S\} \cap S' \neq \emptyset$.
We say that $v \in V$ knows $S' \in \cS_j$ if $v \in S \in \cS_j$ and $S$ knows $S'$ or $S = S'$.
\end{definition}

Note that Lemma \ref{lem:doubknow} implies the following:
\begin{corollary}\label{cor:acquaintance-limit}
A set (and therefore a node too) at a fixed level $j$ has at most $\lambda^{3+\eta}$ acquaintances.
\end{corollary}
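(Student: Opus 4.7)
The statement follows essentially immediately from Lemma~\ref{lem:doubknow} combined with the definition of \emph{knows} given just before the corollary, so my plan is very short.

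First, I would unpack the definition: a set $S \in \cS_j$ knows a set $S' \in \cS_j$ exactly when $\bigcup\{B_{r_j}(v) : v \in S\} \cap S' \neq \emptyset$. Consequently, the number of sets in $\cS_j$ that $S$ knows equals the number of sets of the partition $\cS_j$ which are crossed by the union of balls $\bigcup\{B_{r_j}(v) : v \in S\}$ (this count includes $S$ itself, since the union trivially meets $S$).

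Next, I would apply Lemma~\ref{lem:doubknow} verbatim: that lemma asserts that the union of balls $\bigcup\{B_{r_j}(v) : v \in S\}$ crosses at most $\lambda^{3+\eta}$ sets from $\cS_j$. Composing the two observations yields the desired bound on the number of acquaintances of $S$. The parenthetical claim about nodes is automatic, because each node of $\cT$ at level $j$ corresponds to exactly one element of $\cS_j$, and acquaintance is defined via the underlying sets; the extension to individual points $v \in V$ (that each $v$ knows $\le \lambda^{3+\eta}$ sets at level $j$) is likewise immediate since $v$'s acquaintances are, by definition, those of the unique $S \in \cS_j$ containing $v$.

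There is no real obstacle here; all the combinatorial work has already been done in proving Lemma~\ref{lem:doubknow} (where the doubling property and the triangle inequality were used to show that leaders of crossed sets are separated and hence lie in distinct balls of a small covering). The corollary is just the translation of that packing bound into the language of the acquaintance relation.
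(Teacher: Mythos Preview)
Your proposal is correct and matches the paper's approach exactly: the paper simply states that the corollary is implied by Lemma~\ref{lem:doubknow}, and your unpacking of the definition of ``knows'' followed by a direct application of that lemma is precisely this implication spelled out.
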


\begin{lemma}\label{lem:parentknows}
Let $S \in \cS_j$ be a child of $S^* \in \cS_{j+1}$ and let $S$ know $S'\in \cS_j$.
Then either $S' \subset S^*$ or $S^*$ knows the parent of $S'$.
\end{lemma}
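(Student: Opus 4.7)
The plan is to unfold the definitions and use the fact that $r_{j+1}=\tau r_j \ge r_j$ (since $\tau\ge 1$), so every witness to $S$ knowing $S'$ at level $j$ automatically witnesses the parents knowing each other at level $j+1$ unless they happen to coincide.

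More concretely, let $T\in\cS_{j+1}$ be the parent of $S'$. By definition of parent we have $S\subset S^*$ and $S'\subset T$. I would split into two cases:

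\textbf{Case 1: $T=S^*$.} Then $S'\subset T=S^*$, and we are done with the first alternative.

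\textbf{Case 2: $T\neq S^*$.} I would like to conclude that $S^*$ knows $T$. Since $S$ knows $S'$, by the definition of ``knows'' there exist $v\in S$ and $v'\in S'$ with $d(v,v')<r_j$. Because $S\subset S^*$ and $S'\subset T$, this gives $v\in S^*$, $v'\in T$, and
$$d(v,v')<r_j\le \tau r_j=r_{j+1}.$$
Hence $v'\in B_{r_{j+1}}(v)\cap T$, so the union $\bigcup\{B_{r_{j+1}}(u):u\in S^*\}$ intersects $T$, which is precisely the statement that $S^*$ knows $T$.

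There is essentially no hard step here: the lemma is a ``monotonicity in $j$'' statement for the knowledge relation along the parent map. The only thing to be careful about is that the definition of ``knows'' is one-sided (witnessed by a ball around a point of $S$), but since $S\subset S^*$ any such witness in $S$ is automatically a witness in $S^*$, so the argument goes through without needing symmetry. I would finish with a \qed after the displayed inequality.
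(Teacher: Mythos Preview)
Your proof is correct and essentially identical to the paper's: both assume $S'\not\subset S^*$, take a witness pair $v\in S$, $v'\in S'$ with $d(v,v')<r_j$, and use $r_j\le r_{j+1}$ together with $S\subset S^*$, $S'\subset\parent(S')$ to conclude that $S^*$ knows the parent of $S'$. The paper's version is slightly terser (it omits the explicit Case~1 since $S'\subset S^*$ is exactly the negation of the assumption), but the argument is the same.
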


\begin{proof}
Assume that $S'$ is not a child (subset) of $S^*$ and let $S^{**} \in \cS_{j+1}$ be the parent of $S'$. Since $S$ knows $S'$,
there exist $v \in S$, $v' \in S'$ satisfying $d(v, v') < r_j$. But $r_j < r_{j+1}$ and $v \in S^*$ and $v' \in S^{**}$.
\qed
\end{proof}

\begin{lemma}\label{lem:childlimit}
Set $S^* \in \cS_j$ has at most $\lambda^{3+\eta}$ children in the tree $\cT$.
\end{lemma}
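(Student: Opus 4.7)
The plan is to bound the number of children of $S^* \in \cS_j$ by bounding the size of the associated set $S^*_L \in \cS_j'$ via a doubling-based packing argument.

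By the recursive definition in Algorithm~\ref{alg:firsttree}, the children of $S^*$ are in bijection with the elements of the set $S^*_L \in \cS_j'$ such that $S^* = \bigcup\{S : \opleader(S) \in S^*_L\}$. Specifically, each child $S \in \cS_{j-1}$ contributes its leader $\opleader(S) \in S^*_L$, and distinct children contribute distinct leaders. It therefore suffices to bound $|S^*_L|$. First I would invoke Lemma~\ref{lem:doubpart} at step $j$ (partition scheme applied to $L_{j-1}$ with radius $2r_j$) to get
\[ S^*_L \subseteq B_{2^{-\eta-1}\cdot 2r_j}(\opleader(S^*)) = B_{2^{-\eta}r_j}(\opleader(S^*)). \]
Then I would invoke the same Lemma at step $j-1$ to get that any two distinct elements of $L_{j-1}$ lie at distance $\geq 2^{-\eta-1}\cdot 2r_{j-1} = 2^{-\eta}r_{j-1} = 2^{-\eta}r_j/\tau$. (For the base case $j=1$, we have $L_0=V$ and distinct elements of $V$ lie at distance strictly greater than $r_0$ by choice of $r_0$.) So the elements of $S^*_L$ form a packing in a ball of radius $R := 2^{-\eta}r_j$ with pairwise distance at least $D := 2^{-\eta}r_j/\tau$.

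Next, I would apply the doubling property $k$ times: $B_R(\opleader(S^*))$ is covered by at most $\lambda^k$ balls of radius strictly less than $R/2^k$. Choosing $k$ so that $R/2^k < D/2$, i.e.\ $2^k > 2\tau$, guarantees that every covering ball has diameter strictly less than $D$, hence contains at most one element of $S^*_L$. Since Algorithm~\ref{alg:firsttree} imposes $\tau \leq 2^\eta$, the choice $k = \eta+2$ satisfies $2^k = 2^{\eta+2} > 2\cdot 2^\eta \geq 2\tau$, which gives
\[ |S^*_L| \leq \lambda^{\eta+2} \leq \lambda^{3+\eta}. \]

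The only delicate point is making the strict/non-strict comparisons line up so that two points in a covering ball really are at distance strictly less than $D$ (otherwise a pair at distance exactly $D$ could occupy the same ball); this is handled by taking covering balls of radius strictly less than $D/2$, which is allowed because doubling covers are described by non-strict radii and we have an extra factor of $2$ of slack from choosing $k=\eta+2$ rather than the marginal $k=\lceil 1+\log_2\tau\rceil$. No other nontrivial issue arises, so the main obstacle is simply bookkeeping of constants, and the slack between $\lambda^{\eta+2}$ and the stated bound $\lambda^{\eta+3}$ absorbs any off-by-one.
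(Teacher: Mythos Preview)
Your proof is correct, but it takes a different route from the paper's. You argue by a direct doubling--packing bound: the leaders of the children all lie in a ball of radius $2^{-\eta}r_j$ around $\opleader(S^*)$ and are pairwise $> 2^{-\eta}r_{j-1}$ apart, so covering by $\lambda^{\eta+2}$ small balls finishes it. The paper instead observes that, since $\tau \le 2^\eta$, each child's leader lies within distance $2^{-\eta}r_j \le r_{j-1}$ of $\opleader(S^*)$, and then simply invokes the already-established acquaintance bound (Lemma~\ref{lem:doubknow}): the ball $B_{r_{j-1}}(\opleader(S^*))$ crosses at most $\lambda^{3+\eta}$ sets of $\cS_{j-1}$, and each child is one of those sets. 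So the paper reuses existing machinery and is shorter, while your argument is self-contained and in fact yields the slightly sharper $\lambda^{\eta+2}$.

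One small remark: the pairwise separation $> 2^{-\eta}r_{j-1}$ of the points of $L_{j-1}$ is not part of the \emph{statement} of Lemma~\ref{lem:doubpart}; it is a property of the greedy construction in its proof (each new center is outside the previous radius-$2^{-\eta-1}r$ ball). You should cite the proof rather than the lemma there, or state the one-line derivation explicitly.
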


\begin{proof}
By construction of level $j$, let $S \in \cS_{j-1}$ be such a set that $\opleader(S) = \opleader(S^*)$
(in construction step we divided sets of leaders $L_{j-1}$ into partition $\cS_j'$). Let $S' \in \cS_{j-1}$
be another child of $S^*$. Then, by construction and assumption that $\tau \leq 2^\eta$:
$$d(\opleader(S'), \opleader(S)) < 2r_j \cdot 2^{-\eta-1} = 2^{-\eta}r_j \leq r_{j-1}.$$

However, by Lemma \ref{lem:doubknow}, $B_{r_{j-1}}(\opleader(S))$ crosses at most $\lambda^{3+\eta}$ sets at level $j-1$.
That finishes the proof.\qed
\end{proof}

\begin{lemma}\label{lem:stretchlimit}
Let $v, v^* \in V$ be different points such that $v \in S_1 \in \cS_j$, $v \in S_2 \in \cS_{j+1}$ and $v^* \in S_1^* \in \cS_j$, $v^* \in S_2^* \in \cS_{j+1}$ and
$S_2$ knows $S_2^*$ but $S_1$ does not know $S_1^*$. Then
$$r_j \leq d(v, v^*) < \Big( 1 + \frac{4\tau 2^{-\eta}}{\tau - 1}\Big) \tau r_j.$$

For $\tau = 2$ and $\eta = 2$ this implies $r_j \leq d(v, v^*) \leq 6r_j$.
\end{lemma}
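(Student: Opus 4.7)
The plan is to prove the two bounds separately, both using only the definition of ``knows'' together with Lemma~\ref{lem:leaderdist} to control the diameter of sets at a given level.

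For the lower bound, I would unpack the hypothesis that $S_1$ does not know $S_1^*$. By definition, this means $\bigcup\{B_{r_j}(u) : u \in S_1\}$ is disjoint from $S_1^*$. In particular, since $v \in S_1$ and $v^* \in S_1^*$, the point $v^*$ lies outside $B_{r_j}(v)$, which immediately gives $d(v,v^*) \geq r_j$.

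For the upper bound, I would use that $S_2$ knows $S_2^*$ to pick witnesses $w \in S_2$, $w^* \in S_2^*$ with $d(w,w^*) < r_{j+1} = \tau r_j$, and then estimate via the triangle inequality:
$$d(v, v^*) \leq d(v, w) + d(w, w^*) + d(w^*, v^*).$$
The middle term is already bounded by $\tau r_j$. For the two outer terms, I would observe that $v, w$ both belong to $S_2$ and share the same level-$(j{+}1)$ leader, so Lemma~\ref{lem:leaderdist} (applied at level $j{+}1$) gives
$$d(v, w) \leq d(v, \opleader_{j+1}(v)) + d(\opleader_{j+1}(w), w) < \frac{2\tau 2^{-\eta}}{\tau - 1}\, r_{j+1} = \frac{2\tau^2 2^{-\eta}}{\tau - 1}\, r_j,$$
and an identical bound for $d(w^*, v^*)$ since $w^*, v^* \in S_2^*$. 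Summing gives
$$d(v,v^*) < \tau r_j + \frac{4\tau^2 2^{-\eta}}{\tau - 1}\, r_j = \Big(1 + \frac{4\tau 2^{-\eta}}{\tau - 1}\Big)\tau r_j,$$
as required. Plugging in $\tau = 2$ and $\eta = 2$ yields the constant $6$.

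There is no real obstacle here: once the lower bound is read off directly from the definition of ``knows,'' the only substantive step is the triangle-inequality estimate controlled by Lemma~\ref{lem:leaderdist}. The mild subtlety is making sure the leader bound is invoked at the correct level (level $j{+}1$, not $j$), since the hypothesis that $S_1$ does not know $S_1^*$ prevents us from using the level-$j$ diameter estimate for $v$ and $v^*$ directly.
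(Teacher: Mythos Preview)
Your proof is correct and follows essentially the same approach as the paper: the lower bound is read off directly from the definition of ``knows,'' and the upper bound is obtained by choosing witnesses $w\in S_2$, $w^*\in S_2^*$ with $d(w,w^*)<r_{j+1}$ and routing through the level-$(j{+}1)$ leaders via Lemma~\ref{lem:leaderdist}. The only cosmetic difference is that the paper writes the triangle inequality as a five-term chain through $\opleader(S_2)$ and $\opleader(S_2^*)$ explicitly, whereas you group those into the bounds on $d(v,w)$ and $d(w^*,v^*)$.
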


\begin{proof}
Since $S_1$ and $S_1^*$ do not know each other, $v$ and $v^*$ are in distance at least $r_j$. Since $S_2$ knows $S_2^*$, there
exist $u \in S_2$ and $u^* \in S_2^*$ such that $d(u, u^*) < r_{j+1}$. Therefore
$$d(v, v^*) \leq$$
$$\leq d(v, \opleader(S_2)) + d(\opleader(S_2), u) + d(u, u^*) + $$
$$+d(\opleader(S_2^*), u^*) + d(\opleader(S_2^*), v^*) < $$
$$< 4 \cdot \frac{\tau 2^{-\eta}}{\tau - 1} r_{j+1} + r_{j+1} = \Big( 1 + \frac{4\tau 2^{-\eta}}{\tau - 1}\Big) \tau r_j.$$
\qed
\end{proof}

\begin{remark}\label{rem:stretchlimit}
Imagine we want in Lemma \ref{lem:stretchlimit} to obtain bound $r_j \leq d(v, v^*) \leq (1+\varepsilon) r_j$ for some
small $1 > \varepsilon > 0$. Take $\tau = 1 + \frac{\varepsilon}{3}$.
We want here the following:
$\frac{4\tau 2^{-\eta}}{\tau - 1} < \frac{\varepsilon}{3},$
i.e.,
$2^{-\eta} < \frac{\varepsilon^2}{12(1+\varepsilon)} < \frac{\varepsilon^2}{24}.$
Then we have
$$d(v, v^*) < \Big( 1 + \frac{4\tau 2^{-\eta}}{\tau - 1}\Big) \tau r_j < \Big(1+\frac{\varepsilon}{3}\Big)^2 r_j < (1+\varepsilon)r_j.$$
Note, that to obtain this we need $2^\eta = O(\frac{1}{\varepsilon^2})$.
Note, that conditions in Algorithm \ref{alg:firsttree} for $\eta$ and $\tau$ are much weaker than we assumed here.
\end{remark}

\section{Implementation of the $\meet$ and $\jump$ operations}
\label{app:meet}

In this section we provide realizations of $\meet$ and $\jump$ operations that work fast, i.e., roughly in $O(\log \log n)$ time.

Let us now recall the semantics of the $\meet$ operation, which was used in the fast subtree extraction in Section~\ref{sec:fast-subtree-extraction}.
For nodes $u$ and $v$, by $u(j)$ and $v(j)$ we denote the ancestor of $u$ (resp. $v$) in the tree at level $j$.
The $\meet(v, u)$ operation returns the lowest level $j$ such that $u(j)$ and $v(j)$ knows each other.
This operation can be performed in $O(\lambda^{\eta + 3} \log \log n)$ time.

Operation $\jump$ is used by the dynamic algorithms in Section~\ref{sec:dynamic}, and its semantics is as follows.
In the compressed tree, for each set $S$ we store
a list of all meetings of $S$, sorted by level. The $\jump(v, i)$, given node $v$ and level $i$ outputs
the set $S$ and a meeting $(S, S', j)$ such that $v \in S$ and $j$ is the lowest possible level such that $i \leq j$.
Informally speaking, it looks for the first meeting of a set containing $v$ such that its level is at least $i$.
The $\jump$ operation works in $O(\log \log n + \log \log \log \metricstretch)$.
If we require that there is some meeting at level $i$ somewhere, maybe distant from $v$, in the tree,
the time reduces to $O((\log \eta + \log \log \lambda)\log \log n)$.

\subsection{Path partition}

In order to implement the $\jump$ and $\meet$ operations efficiently we need to store additional information concerning
the structure of $\hat{\cT}$, namely a path partition. The following lemma defines the notion.

\begin{lemma}
\label{lem:tree-partition}
The set of edges of the tree $\hat{\cT}$ can be partitioned into a set of paths $\cP=\{P_1,\ldots,P_m\}$
such that each path starts at some node of $\hat{\cT}$ and goes down the tree only
and for each node $v$ of the tree $\hat{\cT}$ the path from $v$ to the root
contains edges from at most $\lceil \log_2n \rceil$ paths of the path decomposition $\cP$.
Moreover $\cP$ can be found $O(n)$ time.
\end{lemma}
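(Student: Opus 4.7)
The plan is to implement the standard heavy–path decomposition of Sleator and Tarjan on $\hat{\cT}$. First I would compute the subtree size $\size(v)$ for every node $v$ of $\hat{\cT}$ by a single bottom–up sweep, which takes $O(n)$ time since by Corollary~\ref{lem:num_edges} the tree has $O(n)$ nodes. Then, for every internal node $u$ I mark the edge to one child of maximum $\size$ (ties broken arbitrarily) as \emph{heavy}; every other outgoing edge is \emph{light}. This labelling is also done in $O(n)$ total time.

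Next I would define $\cP$ so that the edges of $\hat{\cT}$ split into descending paths: a path $P_i$ consists of a maximal downward chain of heavy edges together with the single light edge entering its topmost endpoint (unless that endpoint is the root). Equivalently, each edge is assigned to the unique maximal heavy chain containing its \emph{lower} endpoint. Since every edge has a well–defined lower endpoint and every node lies in exactly one maximal heavy chain, this yields a genuine partition of the edge set into downward paths, and the construction is clearly $O(n)$.

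The key step is bounding the number of paths of $\cP$ intersected by any root–to–$v$ path $\pi$. For this I would argue that whenever $c$ is a light child of a node $u$, the heavy child $c^\star$ of $u$ satisfies $\size(c^\star) \ge \size(c)$; since both $c$ and $c^\star$ sit in $u$'s subtree, $2\size(c) \le \size(c) + \size(c^\star) \le \size(u) - 1 < \size(u)$, so the subtree size strictly more than halves along each light edge. Thus if $\pi$ crosses $\ell$ light edges, $1 \le \size(v) < n / 2^\ell$, giving $\ell < \log_2 n$, i.e.\ $\ell \le \lceil \log_2 n \rceil - 1$. Traversing $\pi$ top–down, we start inside the path of $\cP$ containing the root and move to a new path of $\cP$ precisely at each light edge (because that light edge belongs, by convention, to the path of its lower endpoint); hence the number of distinct elements of $\cP$ meeting $\pi$ equals $\ell + 1 \le \lceil \log_2 n \rceil$.

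I expect no real obstacle here — this is essentially heavy–light decomposition transcribed to $\hat{\cT}$ — the only subtle point being the bookkeeping convention that attaches each light edge to the path immediately below it, so that crossing a light edge counts as exactly one change of path rather than two. Everything else (subtree sizes, marking heavy edges, forming the maximal heavy chains) is a straightforward linear–time traversal of $\hat{\cT}$.
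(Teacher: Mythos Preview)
Your proposal is correct and follows essentially the same route as the paper: both construct the Sleator--Tarjan heavy-path decomposition (extend each incoming path through the largest-subtree child, start a new path at every other child) and bound the number of path changes on a root-to-$v$ walk by the subtree-size doubling argument. One small slip: you write $\size(v) < n/2^\ell$, but $\hat{\cT}$ has up to $2n-1$ nodes, not $n$, so the root's $\size$ is $\le 2n-1$; this only shifts the bound by an additive $1$ and is exactly the constant the paper uses (``since there are at most $2n-1$ vertices'').
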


\begin{proof}
We use a concept similar to the one used by Sleator and Tarjan in~\cite{tarjan}.
We start from the root and each edge incident to the root is a beginning of a new path.
We then proceed to decompose each subtree of the root recursively.
When considering a subtree rooted at a node $v$ we lengthen the path
going down from the parent of $v$ by one edge going to the subtree containing
the largest number of nodes (breaking ties arbitrarily).
Each of the remaining edges leaving $v$ starts a new path.

It is easy to see that each path goes down the tree only. Now consider a node $v$.
When we go up from $v$ to the root, every time we reach an end of some path from
$\cP$, the size of the subtree rooted at the node we move
into doubles. This ends the proof since there are at most $2n-1$ vertices.\qed
\end{proof}

We now describe additional information related to the path decomposition that we need to store.
Each node $v$ of $\hat{\cT}$ maintains a set $\paths$, where $(i,level) \in \paths(v)$ if
the path from $v$ to the root contains at least one edge of the path $P_i$, and the lowest such
edge has its bottom endpoint at level $level$. In other words, $P_i$ enters the path from
$v$ to the root at level $level$.
We use two different representations of the set $\paths$ simultaneously. One is a dictionary
implemented as a hash table, and the other is an array sorted by $level$.
Because of the properties of the path decomposition $\cP$ from Lemma~\ref{lem:tree-partition}
for each node $v$ we have $|\paths(v)| \le \lceil \log_2(n) \rceil$.

Let $P_i \in \cP$ be a path with vertices $\{v_1,\ldots,v_t\}$ (given in order of increasing level).
We define $\interior(P_i)$ to be the set $\{v_1,\ldots,v_{t-1}\}$, i.e.\ we exclude the top vertex of $P_i$.
We also define $\toplevel(P_i)$ to be the level of $v_{t-1}$, i.e.\ the highest level among interior nodes
of $P_i$.

\subsection{The $\meet$ operation}

In order to benefit from the path decomposition to implement $\meet$ operation,
we also need to store adjacency information for
paths, similar to the information we store for single nodes.
Let $P_a,P_b \in \cP$ be two paths, such that their interior
nodes know each other at level $j_{ab}$, but not at level $j_{ab}-1$.
Then the triple $(P_a,P_b,j_{ab})$ is called a {\em meeting} of
$P_a$ and $P_b$ at level $j_{ab}$. We also say that $P_a$ and $P_b$ {\em meet}
at level $j_{ab}$), or that they know each other.
This definition is just a generalisation of a similar definition
for pairs of nodes of $\cT$. We may also define a notion of {\em responsibility}
for paths which is analogous to the definition for nodes
and formulate a lemma analogous to Lemma~\ref{lem:know-query}.

\begin{lemma}
\label{lem:know-path}
One can augment the tree $\hat{\cT}$ with additional information of size $O(n\lambda^{3 + \eta})$,
so that for any pair of paths $P_x, P_y \in \cP$ one can decide if $P_x$ and $P_y$ know each other, and if that is the case the level of the meeting is returned.
The whole query takes $O(\eta\log\lambda)$ time.
\end{lemma}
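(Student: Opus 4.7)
The plan is to replicate the proof of Lemma~\ref{lem:know-query} with paths in place of single nodes. The data structure is as follows: for each path $P \in \cP$, maintain a dictionary $D(P)$ storing every meeting $(P, P', j_{PP'})$ for which $P$ is \emph{responsible}, keyed by $P'$. A query for $(P_x, P_y)$ is answered by two dictionary lookups, namely for $P_y$ in $D(P_x)$ and for $P_x$ in $D(P_y)$, returning the stored level if either hits. Each lookup takes $O(\log|D(\cdot)|)$ time, so once we show $|D(P)| \le \lambda^{3+\eta}$ for every $P$, the $O(\eta\log\lambda)$ query-time bound follows, and the $O(n\lambda^{3+\eta})$ space bound follows because $|\cP| = O(n)$.

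Define responsibility in direct analogy with the node case: $P_a$ is responsible for $(P_a, P_b, j_{ab})$ iff $\toplevel(P_a) \le \toplevel(P_b)$, with both paths responsible in case of equality. The crucial claim is the path-analog of Lemma~\ref{lem:bounded-responsibility}: each path is responsible for at most $\lambda^{3+\eta}$ meetings. To prove this, suppose $P_a$ is responsible for $(P_a, P_b, j_{ab})$. Then $j_{ab} \le \toplevel(P_a) \le \toplevel(P_b)$, so both paths have interior nodes at every level $j \in \{j_{ab}, \ldots, \toplevel(P_a)\}$. Starting from the witnessing interior nodes $w_a \in \interior(P_a)$ and $w_b \in \interior(P_b)$ at level $j_{ab}$, we lift the meeting upward using Lemma~\ref{lem:parentknows} level by level. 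The ``merge'' case of that lemma never triggers along this lift, since interior vertex sets of distinct paths in $\cP$ are pairwise disjoint (each vertex of $\hat{\cT}$ belongs to at most one path's interior, namely the one containing its up-edge). Hence the lift reaches level $\toplevel(P_a)$, where the top interior node $v_{t-1}$ of $P_a$ knows the interior node of $P_b$ at that same level.

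Once the lift is established, Property~(2) of the partition tree bounds the number of sets that $v_{t-1}$ knows at level $\toplevel(P_a)$ by $\lambda^{3+\eta}$, and each such known set is the interior node of at most one path. So there are at most $\lambda^{3+\eta}$ candidates for $P_b$, proving the bound on $|D(P_a)|$ and hence the lemma.

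The main obstacle is the lifting argument: one has to verify that when we track ancestors of $w_a$ and $w_b$ upward through $\cT$, they stay interior vertices of $P_a$ and $P_b$ respectively (and therefore distinct) all the way up to level $\toplevel(P_a)$. This is the path-level counterpart of the one-line observation in the node proof that ``$S_a$ knows $S_b$ at level $\level(p_a)-1$,'' and it is the only genuinely new ingredient compared to Lemma~\ref{lem:know-query}; everything else (dictionary layout, space accounting, query procedure) is formally identical.
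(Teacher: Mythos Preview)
Your overall plan --- store each path's responsible meetings in a dictionary and answer by two lookups --- is exactly what the paper intends, and the space and time bounds follow once $|D(P)| \le \lambda^{3+\eta}$ is shown. The gap is in your responsibility definition and the ensuing claim $j_{ab} \le \toplevel(P_a)$. By the paper's convention $\toplevel(P_a) = \level(v_{t-1})$, but the set $\set(v_{t-1})$ persists in $\cT$ up through level $\level(v_t)-1$; hence the interior of $P_a$ covers $\cT$-levels $[\level(v_1),\, \level(v_t)-1]$, and the path-meeting level $j_{ab}$ may sit strictly above $\toplevel(P_a)$. For a concrete instance, take $P_a$ with only two $\hat{\cT}$-nodes at levels $0$ and $10$: then $\toplevel(P_a)=0$, yet a meeting with another path can perfectly well occur at $\cT$-level $7$. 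In that situation your ``lift to level $\toplevel(P_a)$'' goes in the wrong direction, and at level $\toplevel(P_a)$ the two paths in fact do \emph{not} know each other, so the bound via Property~\ref{item:acquaintance-limit} at that level is vacuous.

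The fix is to mirror the node case literally: declare $P_a$ responsible when the level of its \emph{top} vertex $v_t$ (not $v_{t-1}$) is at most that of $P_b$'s top vertex, and then lift to level $\level(v_t)-1$. This is the last $\cT$-level at which $P_a$'s interior exists; since $P_b$'s top vertex is no lower, $P_b$'s interior also exists there, and the merge case of Lemma~\ref{lem:parentknows} is excluded exactly by your disjoint-interiors argument. At that single fixed level, Property~\ref{item:acquaintance-limit} bounds the number of sets known to $v_{t-1}$ by $\lambda^{3+\eta}$, each the interior node of at most one path. This is the direct path analog of ``$S_a$ knows $S_b$ at level $\level(p_a)-1$'' from the node proof, with $v_t$ playing the role of $p_a$.
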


Now, suppose we are given two nodes $u,v \in \hat{\cT}$ and we are to compute $\meet(u,v)$.
The following lemma provides a crucial insight into how this can be done.

\begin{lemma}
\label{lem:path-part}
Let $(i,j) \in \paths(u)$, which means that the path $P_i$ reaches
the path from $u$ to the root at level $j$ and assume
that nodes $u,v$ start to know each other
at level $j_{uv}=\meet(u,v)$, where $j_{uv} \le \toplevel(P_i)$.
Then either $(i,\ell)\in \paths(v)$ for some $\ell$, or
there exists $i'$, such that paths $P_i$ and $P_{i'}$
know each other, $P_i$ is responsible for their meeting, and $(i',\ell) \in \paths(v)$ for some $\ell$.
Moreover, this condition can be checked in $O(\lambda^{\eta + 3})$ time.
\end{lemma}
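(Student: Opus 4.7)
The plan is to identify, at the $\cT$-level $j_{uv}$ where $u$'s and $v$'s ancestors first know each other, the two $\hat{\cT}$-nodes involved, and to read off from $v$'s side the path $P_{i'}$ promised by the second alternative. First I would let $X$ (resp.\ $Y$) denote the unique $\hat{\cT}$-node on the root path of $u$ (resp.\ $v$) whose $\cT$-level range contains $j_{uv}$. By the definition of $\meet$, $\set(X)$ and $\set(Y)$ know each other at $\cT$-level $j_{uv}$, and $X \ne Y$: if they coincided, then by Property~\ref{item:childlimit} the children of $X=Y$ in $\cT$ pairwise know each other, so the ancestors of $u$ and $v$ one $\cT$-level lower would already know each other, contradicting the minimality of $j_{uv}$.

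Next I would establish that $X$ is an interior node of $P_i$. Since $(i,j) \in \paths(u)$, the edges of $P_i$ lying on the root path of $u$ form a contiguous subpath from $v_s$ at $\hat{\cT}$-level $j$ up to the top $v_t$ of $P_i$, and $v_s$ is an interior node of $P_i$ (because the edge $(v_s, v_{s+1})$ is in $P_i$ and on the root path by the choice of $j$). Consequently, the interior nodes of $P_i$ lying on the root path of $u$ occupy exactly the $\hat{\cT}$-levels in $[j, \toplevel(P_i)]$. The bound $\level(X) \le j_{uv} \le \toplevel(P_i)$, combined with the relation $\level(X) \ge j$ that comes from how the lemma is invoked (namely, $P_i$ is the path on $u$'s root path whose level range actually covers $j_{uv}$), forces $X \in \{v_s, \dots, v_{t-1}\}$.

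Now let $P_{i'}$ be the element of $\cP$ that contains the $\hat{\cT}$-edge from $Y$ to its parent; then $Y$ is an interior node of $P_{i'}$, and since that edge lies on the root path of $v$, $(i', \ell) \in \paths(v)$ for some $\ell \le \level(Y)$. If $P_{i'}=P_i$, the first alternative of the lemma holds. Otherwise, $X \in P_i$ and $Y \in P_{i'}$ are interior nodes of two distinct paths knowing each other at $\cT$-level $j_{uv}$, so by definition $P_i$ and $P_{i'}$ know each other; the convention for path-meeting responsibility (the path analogue of node-responsibility underlying Lemma~\ref{lem:bounded-responsibility}) then makes $P_i$ responsible.

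For the time bound, test $(i,\cdot) \in \paths(v)$ in $O(1)$ using the hash-table representation of $\paths(v)$; if this fails, enumerate the $O(\lambda^{3+\eta})$ meetings stored at $P_i$ (the ones for which $P_i$ is responsible, by the path-analogue of Lemma~\ref{lem:bounded-responsibility}) and for each neighboring path $P_{i'}$ check $(i',\cdot) \in \paths(v)$ in $O(1)$ time, for a total of $O(\lambda^{\eta+3})$. I expect the main obstacle to be pinning down the precise responsibility convention for path-meetings so that $P_i$ is guaranteed to be responsible in the second case; once the convention is fixed analogously to nodes, the argument above is essentially bookkeeping.
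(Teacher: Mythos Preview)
Your argument has a genuine gap stemming from the choice to work at level $j_{uv}$ rather than at level $\toplevel(P_i)$, which is what the paper does.

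You need the $\hat{\cT}$-node $X$ on $u$'s root path covering level $j_{uv}$ to be an interior node of $P_i$, and you justify this via ``the relation $\level(X)\ge j$ that comes from how the lemma is invoked.'' But that relation is \emph{not} part of the hypotheses: the lemma is stated for an arbitrary $(i,j)\in\paths(u)$ with $j_{uv}\le\toplevel(P_i)$, and it is precisely in this generality that it is applied. In the binary search that follows (Lemma~\ref{lem:binary-jumping}), one tests the condition for every $(i,j)\in\paths(u)$, in particular for paths sitting strictly above the level $j_{uv}$, where $j>j_{uv}$. For such a path your node $X$ lies below $P_i$, so it is not an interior node of $P_i$, and your identification of $P_{i'}$ via $Y$ no longer yields a meeting of $P_i$ with anything. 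The second issue is responsibility: even when $X$ does lie on $P_i$, taking $P_{i'}$ to be the path through the edge above $Y$ gives no control on $\toplevel(P_{i'})$ relative to $\toplevel(P_i)$, so under the natural convention (the path analogue of node-responsibility) you cannot conclude that $P_i$ is responsible. You correctly flag this as the ``main obstacle,'' but it is not a matter of fixing a convention --- with your choice of $P_{i'}$ it can genuinely fail.

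The paper sidesteps both problems by arguing at level $\toplevel(P_i)$ instead of $j_{uv}$. Since knowing propagates upward (Lemma~\ref{lem:parentknows}) and $j_{uv}\le\toplevel(P_i)$, the ancestors of $u$ and $v$ at level $\toplevel(P_i)$ either coincide (first alternative) or know each other. On $u$'s side this ancestor is always $S_i$, the highest interior node of $P_i$, regardless of where $j$ sits relative to $j_{uv}$; on $v$'s side one takes $P_{i'}$ to be the lowest path in $\paths(v)$ with $\toplevel(P_{i'})\ge\toplevel(P_i)$, which directly gives responsibility. The $O(\lambda^{\eta+3})$ check then enumerates the acquaintances of $S_i$ (Property~\ref{item:acquaintance-limit}) rather than the meetings stored at $P_i$.
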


\begin{proof}
Since $j_{uv} \le \toplevel(P_i)$ we know that at level $\toplevel(P_i)$
paths from $u$ to the root and from $v$ to the root either merged,
or else nodes on those paths at level $\toplevel(P_i)$ know each other.
If those paths merged, than $P_i$ intersects the path from $v$ to
the root, and we know that $(i,*)\in \paths(v)$. This can be checked in
hash table for $\paths(v)$ in $O(1)$ time.

Otherwise as $i'$ we take $P_i$ to be the lowest path $P_{i''} \in \cP$,
such that $(i'',\ell) \in \paths(v)$ for some $\ell$, and $\toplevel(P_i'') \ge \toplevel(P_i)$.
To check if this occurs, we take $S_i$ --- the interior node of $P_i$ with the highest level,
and iterate over all $S_i'$ known by $S_i$ and look for path containing $S_i'$ in the
hashtable for $\paths(v)$. As $S_i$ knows at most $\lambda^{\eta+3}$ sets, the bound follows.\qed
\end{proof}

Now, using Lemma~\ref{lem:path-part} we can do a binary search
over the elements of $\paths(u)$, and find a pair $(i_u,j_u) \in \paths(u)$
such that $\meet(u,v) \le \toplevel(P_{i_u})$ and $\meet(u,v) \ge j_u$.
Namely, we look for the lowest path in $\paths(u)$ that satisfies Lemma~\ref{lem:path-part}.
Similarly, we can find $(i_v,j_v) \in \paths(v)$.
Since paths $P_{i_u}$ and $P_{i_v}$ know each other, we simply
use Lemma~\ref{lem:know-path} to find the exact level $j$ where they meet,and as the result
of $\meet(u,v)$ return $\max(j_u,j_v,j)$.
We need to take the maximum of those values, because paths $P_{i_u}$
and $P_{i_v}$ could possibly meet before they enter the paths from $u$ and $v$ to the root.

\begin{lemma}[Lemma~\ref{lem:binary-jumping} restated]
\label{lem:binary-jumping-app}
The tree $\hat{\cT}$ can be augmented so that the $\meet$ operation can be
performed in $O(\eta\log\lambda \log \log n)$ time.
The augmented $\cT$ tree can be stored in $O(\lambda^{3+\eta}n\log n)$ space and computed in polynomial time.
\end{lemma}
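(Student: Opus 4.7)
The plan is to augment $\hat{\cT}$ with the heavy-path decomposition $\cP$ of Lemma~\ref{lem:tree-partition} together with the accompanying sets $\paths(v)$, each stored both as a hash table (for $O(1)$ random access by path index) and as a level-sorted array (for binary search by level). On top of this I would attach, for every unordered pair of paths whose interior nodes are mutually acquainted, one representative meeting $(P_a, P_b, j_{ab})$ in the dictionary of whichever path is responsible for it; this is the path-level analogue of Lemma~\ref{lem:know-query} and yields Lemma~\ref{lem:know-path}. Combining the path-table bound $\sum_v |\paths(v)| = O(n\log n)$ (from Lemma~\ref{lem:tree-partition}) with an $O(n\lambda^{3+\eta})$ bound on path-meeting records (each set is responsible for at most $\lambda^{3+\eta}$ node-meetings by Lemma~\ref{lem:bounded-responsibility}, and each contributes to at most one path-meeting) gives the claimed total space of $O(\lambda^{3+\eta} n \log n)$.

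To compute $\meet(u,v)$, I use Lemma~\ref{lem:path-part} as the oracle of a binary search. The predicate it tests --- essentially ``$\meet(u,v) \le \toplevel(P_i)$'' --- is monotone along the level-sorted array of $\paths(u)$, so $O(\log |\paths(u)|) = O(\log \log n)$ probes locate the lowest entry $(i_u,j_u)$ for which the predicate succeeds. Repeating the search symmetrically on $\paths(v)$ yields $(i_v,j_v)$; then a single call to Lemma~\ref{lem:know-path} on the pair $(P_{i_u}, P_{i_v})$ returns the level $j$ at which these two paths first become acquainted. The answer is $\max(j_u, j_v, j)$, where the first two terms arise because $P_{i_u}$ and $P_{i_v}$ may already have been acquainted above the points where they enter the root-paths of $u$ and $v$ themselves.

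Each probe of the binary search costs $O(\eta \log \lambda)$: one hash lookup of the path index in $\paths(v)$, and, on a miss, iteration over the neighbour dictionary of the top interior node of $P_i$ combined with $O(1)$ lookups in the hash representation of $\paths(v)$; the final Lemma~\ref{lem:know-path} call adds another $O(\eta \log \lambda)$. Aggregating gives the stated $O(\eta \log \lambda \log \log n)$ bound for $\meet$. The step I expect to be the main obstacle is justifying monotonicity of the binary-search predicate in the presence of the asymmetric ``responsibility'' convention: I must argue that whenever $\meet(u,v) \le \toplevel(P_i)$, either $P_i$ itself appears in $\paths(v)$, or some $P_{i'}$ acquainted with $P_i$ and for whose meeting $P_i$ is responsible appears in $\paths(v)$, so that the check in Lemma~\ref{lem:path-part} never produces a false negative and the binary search halts at the correct path, with the final $\max(j_u, j_v, j)$ neither undershooting nor overshooting $\meet(u,v)$.
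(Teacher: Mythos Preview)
Your proposal is correct and follows essentially the same approach as the paper: augment $\hat{\cT}$ with the heavy-path decomposition and the $\paths(\cdot)$ tables (hash-indexed and level-sorted), store path-level meetings so that Lemma~\ref{lem:know-path} holds, binary-search over $\paths(u)$ and $\paths(v)$ using the predicate of Lemma~\ref{lem:path-part}, and return $\max(j_u,j_v,j)$. The one place your accounting diverges is the per-probe cost: iterating over all acquaintances of the top interior node of $P_i$ is $O(\lambda^{3+\eta})$ rather than $O(\eta\log\lambda)$, which is exactly what the paper's own proof records (the lemma statement's constant is slightly optimistic in the same way).
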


\begin{proof}
Since $|\paths(u)| \le \lceil \log_2n \rceil$ we perform $O(\log \log n)$ steps
of the binary search. During each step we perform $O(\lambda^{\eta+3})$
searches in a hash table, thus we can find the result of $\meet(u,v)$ in $O(\log\log n)$ time.

The space bound follows from Corollary~\ref{lem:num_edges} (the additional $\log n$ factor in the space bound comes from the size of $\paths(x)$ for each node $x$). Now we need only to describe how to obtain running time independent of the stretch of the metric.
In order to compute the $\hat{\cT}$ tree (without augmentation) we can slightly improve our construction algorithm:
instead of going into the next level, one can compute the smallest distance between current sets and jump directly
to the level when some pair of sets merges or begins to know each other.\qed
\end{proof}

\begin{remark}
 We could avoid storing $\paths$ in arrays by maintaining, for each path in $\cP$, links to paths distant by powers
 of two in the direction of the root (i.e.\ at most $\log\log n$ links for each path).

Also, to obtain better space bound, we could use a balanced tree instead of the hash tables
 to keep the first copy of $\paths$. If we use persistent balanced trees, we can get an $O(n\log \log n)$ total space bound.
 However, in that case the search time would be increased to $O((\log\log n)^2)$ for one call to the $\meet$ operation.
\end{remark}

\subsection{The $\jump$ operation}

\begin{lemma}\label{lem:jump}
  The compressed tree $\hat{\cT}$ can be enhanced with additional information of size
  $O(\lambda^{\eta+3}n\log n)$ in such a way that the $\jump(v, i)$ operation can be performed in
  $O(\log \log n + \log \log \log \metricstretch)$ time, where $\metricstretch$
  denotes the stretch of the metric. If we require that there is some meeting
  at level $i$ somewhere in the tree (possibly not involving $v$), the $\jump$ operation
  can be performed in $O((\log \eta + \log \log \lambda)\log\log n)$ time.
\end{lemma}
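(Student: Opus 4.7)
The main idea is to exploit the path decomposition $\cP$ from Lemma~\ref{lem:tree-partition}: the path from any $v$ to the root crosses at most $\lceil \log_2 n\rceil$ paths of $\cP$, and these are exactly the pairs recorded in the sorted array $\paths(v)$. During preprocessing we augment every path $P_a\in\cP$ with a list of all meetings $(S,S',j)$ for which $S$ is an interior node of $P_a$, keyed by level $j$. Since each node is responsible for $O(\lambda^{\eta+3})$ meetings (Lemma~\ref{lem:bounded-responsibility}) and each node belongs to exactly one path, the total storage is $O(\lambda^{\eta+3} n)$; together with the $O(n\log n)$ bookkeeping of $\paths$ this matches the announced space bound.

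To answer $\jump(v,i)$ we split the work in two: (i) find the lowest path $P_a\in\paths(v)$ whose on-upward-path segment contains some meeting at level $\ge i$, and (ii) inside that $P_a$, locate the first such meeting. For step (i), observe that the portion of $P_a$ lying on the upward path from $v$ is a suffix of $P_a$ by levels, so the test becomes ``does $P_a$ contain a meeting at level $\ge\max(i,\ell_a)$?'', where $\ell_a$ is taken from $\paths(v)$. If for each $P_a$ we keep a van Emde Boas structure over its meeting levels (the universe being integers in $[0,\lceil\log_\tau\metricstretch\rceil]$), this test is a predecessor query taking $O(\log\log\log\metricstretch)$; using this as the comparator in a binary search over $\paths(v)$ costs $O(\log\log n\cdot\log\log\log\metricstretch)$. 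To drop to $O(\log\log n+\log\log\log\metricstretch)$, we additionally store with every $P_a$ the maximum meeting level in $P_a$; then the outer binary search runs in $O(\log\log n)$ using only $O(1)$ max-comparisons, and exactly one true vEB query is issued on the finally identified path, contributing a single additive $O(\log\log\log\metricstretch)$ term. Returning both the meeting and the set $S$ is immediate because each meeting in our list carries a pointer to its owner node.

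For the weaker variant, in which we may assume a meeting at level $i$ exists somewhere in $\hat{\cT}$, the universe of levels is effectively the set of levels actually realised by meetings, and we do not need vEB over the raw stretch. Here the inner step becomes: in the node of $\hat{\cT}$ on the upward path at level $i$ (or just above), locate the relevant meeting among its acquaintance list. Because each node has $\le\lambda^{3+\eta}$ acquaintances, a doubly-logarithmic search (analogous to the one justifying the $\log\eta+\log\log\lambda$ factor elsewhere in the paper) costs $O(\log\eta+\log\log\lambda)$ per probe, and combined with the $O(\log\log n)$ binary search over $\paths(v)$ yields the claimed $O((\log\eta+\log\log\lambda)\log\log n)$.

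\textbf{Main obstacle.} The delicate part is arranging the two binary searches to compose additively rather than multiplicatively in the strong version. My plan circumvents this by precomputing a cheap surrogate (per-path maximum meeting level) that suffices for the outer search, deferring the single expensive vEB query until the target path is fixed. A secondary nuisance is that a path's segment on the upward path is a suffix, not the whole path; resolving this just requires lifting the query level to $\max(i,\ell_a)$, which the suffix predecessor structure already supports. Finally, one must verify that constructing the vEB structures and path-wise sorted meeting lists is polynomial-time, which follows because the total number of meetings is $O(\lambda^{\eta+3}n)$ by Corollary~\ref{lem:num_edges}.
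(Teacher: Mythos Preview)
Your treatment of the first (stretch-dependent) bound is correct and close in spirit to the paper's, though a bit more elaborate. The paper's outer step is simpler: it binary searches $\paths(v)$ using the plain predicate ``does the top node of this path have level $>i$?'', which is trivially monotone and costs $O(1)$ per probe; this lands on the unique path whose level range contains $i$, and a single y-fast-tree successor query on that path's meeting list finishes in $O(\log\log\log\metricstretch)$. Your variant stores the per-path maximum meeting level so the outer search lands directly on a path guaranteed to contain a meeting at level $\ge i$. This is not wrong (and it actually sidesteps a corner case the paper glosses over, namely when the path covering level $i$ has no meeting at level $\ge i$ and one must spill to the next path), but it adds bookkeeping that is not needed for the stated bound.

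For the second (stretch-independent) bound your argument has a real gap. Searching an acquaintance list of size $\lambda^{\eta+3}$ costs $O(\eta\log\lambda)$ by ordinary binary search, not $O(\log\eta+\log\log\lambda)$; there is no ``doubly-logarithmic search'' on a flat list. More importantly, you never explain how you avoid the $\log\log\log\metricstretch$ term: saying ``the universe is effectively the set of realised levels'' is the right intuition, but you do not make it operational. The paper's mechanism is concrete: rank every level $j$ that hosts some meeting by an integer $n(j)\in\{0,\dots,M-1\}$ with $M=(2n-1)\lambda^{\eta+3}$, store the map $j\mapsto n(j)$ in a hash table, and build the y-fast trees over this ranked universe. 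The hypothesis ``some meeting exists at level $i$'' is precisely what allows you to hash $i$ to $n(i)$ in $O(1)$ and then issue a single predecessor query in $O(\log\log M)=O((\log\eta+\log\log\lambda)\log\log n)$. Your proposal does not supply this rank-compression step, and the per-probe acquaintance search you describe neither achieves the claimed cost nor locates the first meeting at level $\ge i$; so the second bound is not established as written.
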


\begin{proof}
To calculate $\jump(v, i)$, we first look at $\paths(v)$ and binary search lowest path $P \in \paths(v)$ such
that the highest node in $P$ has level greater than $i$. If $P = \{v_1, \ldots, v_t\}$ (given in order of increasing level),
that means that $\level(v_1) \leq i < \level(v_t)$. This step takes $O(\log \log n)$ time.

To finish the $\jump$ operation, we need, among meetings on path $P$, find the lowest
one with the level not smaller than $i$. As levels are numbered from $0$ to $\log \metricstretch$,
this can be done using y-Fast Tree data structure \cite{Willard84,Willard86}.
The y-Fast Tree uses linear space and answers predecessor queries in
$O(\log \log u)$ time, where $u$ is the size of the universe, here $u = \log \metricstretch$.

To erase dependency on $\metricstretch$, note that according to Corollary \ref{lem:num_edges},
where are at most $M := (2n-1)\lambda^{\eta+3}$ meetings in tree $\hat{\cT}$. Therefore,
we can assign to every level $j$, where some meeting occurs, a number $0 \leq n(j) < M$
and for two such levels $j$ and $j'$, $j < j'$ iff $n(j) < n(j')$. The mapping $n(\cdot)$
can be implemented as a hash table, thus calculating $n(j)$ takes $O(1)$ time.
Instead of using y-Fast Trees with level numbers as universe, we use numbers $n(\cdots)$.
This requires $O(\log \log n + \log \log M) = O((\log \eta + \log \log \lambda)\log \log n)$ time,
but we need to have the key $i$ in the hash table, i.e., there needs to be some meeting
at level $i$ somewhere in the tree.\qed
\end{proof}

\section{Omitted Proofs}
\label{app:omitted}


\begin{proof}[of Theorem \ref{thm:spanner}]

Recall that nodes of $\hat{\cT}(S)$ are simply certain subsets of $S$, in particular all single-element subsets of $S$ are nodes of $\hat{\cT}(S)$. Associate with every node $A$ of $\hat{\cT}(S)$, an element $a$ of $A$, which we will call $\opleader(A)$, so that:
\begin{itemize}
\item if $A = \{a\}$ (which means $A$ is a leaf in $\hat{\cT}(S)$), then $\opleader(A) = a$,
\item if $A$ has sons $A_1,\ldots,A_m$ in $\hat{\cT}(S)$, then let $\opleader(A)$ be any of $\opleader(A_i)$, $i=1,\ldots,m$.
\end{itemize}
If two nodes $A,B$ in $\hat{\cT}(S)$ know each other, we will also say that their leaders $\opleader(A)$ and $\opleader(B)$ know each other. Also, if $A$ is the parent of $B$, and $a \neq b$, where $a=\opleader(A)$ and $b=\opleader(B)$, we will say that $a$ is the parent of $b$. We will also say that $a$ \emph{beats} $b$ at level $L$, where $L$ is the level at which $A$ appears as a node --- this is exactly the level where $b$ stops being a leader, and is just an ordinary element of a set where $a$ is a leader.

Now we are ready do define the pseudospanner.
Let $H=(S,E)$, where $E$ contains all edges $uv$, $u \neq v$ such that:
\begin{enumerate}
\item $v$ is the father of $u$, or
\item $u$ and $v$ know each other.
\end{enumerate}
We cannot assign to these edges their real weights, because we do not know them. Instead, we define $w_H(u,v)$ to be an upper bound on $d(u,v)$, which is also a good approximation of $d(u,v)$. In particular:
\begin{enumerate}
\item If $u$ is a son of $v$ and $v$ beats $u$ at level $j$, we put $w_H(u,v) = 2\frac{\tau 2^{-\eta}}{\tau - 1} r_j$.
\item If $u$ and $v$ first meet each other at level $j$, we put $w_H(u,v) = \Big( 1 + \frac{4\tau 2^{-\eta}}{\tau - 1}\Big) \tau r_j$.
\end{enumerate}

We claim that $H$ is a $C(\eta,\tau)$-spanner for $V$ of size $O(n)$.

It easily follows from Lemmas~\ref{lem:leaderdist} and~\ref{lem:stretchlimit} that $d(u,v) \le w_H(u,v)$, hence also for any $u,v\in V$ we have $d(u,v) \le d_H(u,v)$, where $d_H$ is the shortest distance metric in $H$.

Now, we only need to prove that for every pair of vertices $v,v^*\in X$, we have $d_H(v,v^*) \le C(\eta,\tau) d(v,v^*)$. The proof is similar to that of Lemma~\ref{lem:stretchlimit}.
As before, let $v \in S_1 \in \cS_j$, $v \in S_2 \in \cS_{j+1}$ and $v^* \in S_1^* \in \cS_j$, $v^* \in S_2^* \in \cS_{j+1}$ and assume $S_2$ knows $S_2^*$ but $S_1$ does not know $S_1^*$ (all that is assumed to hold in $\hat{\cT}$, not in $\hat{\cT}(S)$).
Then, since $S_1$ and $S_1^*$ do not know each other, $v$ and $v^*$ are at distance at least $r_j$.
On the other hand, since $S_2$ knows $S_2^*$ in $\hat{\cT}$, we also have that $S_2 \cap S$ knows $S_2^* \cap S$ in $\hat{\cT}(S)$.
Let $u = \opleader(S_2 \cap S)$, $u^* = \opleader(S_2^* \cap S)$. It follows from the definition of $H$, that $uu^*$ is an edge in $H$ and it has weight $w_H(u,u^*) \le \Big( 1 + \frac{4\tau 2^{-\eta}}{\tau - 1}\Big) \tau r_j$.

Now consider the path from ${v}$ to $S_2\cap S$ in $\hat{\cT}(S)$. Initially, $v$ is the leader of the singleton set $\{v\}$, then it might get beaten by some other vertex $v_1$, then $v_1$ can get beaten by some other vertex $v_2$, and so on. Finally, at some level $u$ emerges as a leader. This gives a path $v=v_0,v_1,\ldots,v_m=u$ in $H$. We have
\[ w_H(v_iv_{i+1}) = 2\frac{\tau 2^{-\eta}}{\tau - 1} r_{l_{i+1}} ,\]
 where $l_{i+1}$ is the level at which $v_{i+1}$ beats $v_i$.
 Since all these levels are different and all of them are at most $j+1$, we get:
 \[ d_H(v,u) \le \sum_{i=0}^{m-1} w_H(v_i,v_{i+1}) \le 2\frac{\tau 2^{-\eta}}{\tau - 1} r_0 \sum_{i=0}^j \tau^i
 \le\]
 \[\le 2\frac{\tau 2^{-\eta}}{\tau - 1} \frac{\tau^{j+1}-1}{\tau-1} r_0 \le \frac{2\tau}{\tau-1} \cdot \frac{\tau 2^{-\eta}}{\tau - 1} \tau r_j.\]

 We can argue in the same way for $v^*$ and $u^*$. Joining all 3 bounds we get:
 \[ d_H(v,v^*) \le d_H(v,u)+w_H(u,u^*)+d_H(u^*,v^*) \le\]
 \[\le 
\Big( 1 + \frac{8\tau 2^{-\eta}}{\tau - 1}\Big) \tau r_j + 2 \cdot \frac{2\tau}{\tau-1} \cdot \frac{\tau 2^{-\eta}}{\tau - 1} \tau r_j.\]
and finally
\[ d_H(v,v^*) \le \left( 1 + \left(\frac{\tau}{\tau-1}\right)^2 2^{3-\eta}\right) \tau r_j \le C(\tau,\eta) d(v,v^*).\]
Since every edge of the spanner either corresponds to a father-son edge in $\hat{\cT}(S)$ or to a meeting of two nodes in $\hat{\cT}(S)$, it follows from Lemma~\ref{lem:num_edges} that $H$ has size $O(n)$. The time complexity of constructing $H$ is essentially the same as that of constructing $\hat{\cT}(S)$, i.e.\ $O(k(\log k + \log\log n))$.\qed
\end{proof}





\section{Facility location with unrestricted facilities}
\label{sec:unres-fl}

\ignore{
We have already described that with our tree,
given set of cities $C\subseteq V$ and facilities $F\subseteq V$ with their opening times we can find a $(1.52+\varepsilon)$-approximation of the facility location problem in $\tilde{O}((|C|+|F|)\log\log n)$ time. However, there is a slightly more natural variant of this problem.
Namely, we consider the situation where each point $x$ of the metric $(V, d)$ is assigned an opening cost $f(x)$, i.e.\ the cost of opening the facility in $x$ (we can allow for $f(x)$ to be $\infty$).
Then, the goal is to build a data structure which, given a set of $k$ cities $C\subseteq V$ finds a solution to the facility location problem for cities' set $C$ and facilities' set $V$.
}

In this section we study the variant of {\sc Facility Location} with unrestricted facilities (see Introduction).
We show that our data structure can be augmented to process such queries in $\tilde{O}(k(\log k + \log\log n))$ time,
with the approximation guarantee of $3.04+\varepsilon$.

Our approach here is a reduction to the problem solved in Corollary~\ref{cor:spanner-facility-location}.
The general idea is roughly the following: during the preprocessing phase, for every point $x\in V$ we compute a small set $F(x)$ of facilities that seem a good choice for $x$, and
when processing a query for a set of cities $C$, we just apply Corollary~\ref{cor:spanner-facility-location} to cities' set $C$ and facilities' set $\bigcup_{c\in C}F(c)$.
In what follows we describe the preprocessing and the query algorithm in more detail, and
we analyze the resulting approximation guarantee.

In this section we consider a slightly different representation of tree $\hat{\cT}$.
Namely, we replace each edge $(v,\parent(v))$ of the original $\hat{\cT}$ with a path containing a node for each meeting of $v$.
The nodes on the path are sorted by level, and for any of such nodes $v$, $\level(v)$ denotes the level of the corresponding meeting.
The new tree will be denoted $\bar{\cT}$.

\subsection{Preprocessing}

Let us denote $\vis(j)=\Big( 1 + \frac{4\tau 2^{-\eta}}{\tau - 1}\Big) \tau r_j$, i.e.\ $\vis(j)$ is the upper bound from Lemma~\ref{lem:stretchlimit}.
Note that $\vis(j)$ is an upper bound on the distance between two points $v$ and $w$ such that
$v\in S_1$ and $w\in S_2$ for two sets $S_1, S_2$ that know each other and belong to the same partition $\cS_j$.
For a node $v$ of tree $\cT$ we will also denote $\vis(v)=\vis(\level(v))$.

In the preprocessing, we begin with computing the compressed tree $\bar{\cT}$.
Next, for each node $v$ of $\bar{\cT}$ we compute a point in the sets which $v$ knows,
with the smallest opening cost among these points. Let us denote this point by $\low(v)$.
Finally, for each $x\in V$ consider the path $P$ in $\bar{\cT}$ from the leaf corresponding to $\{x\}$ to the root. Let $P=(v_1, v_2, \ldots, v_{|P|})$ and for $i=1,\ldots,|P|$ let $x_i=\low(v_i)$. Let $p$ the smallest number such that $f(x_p) \le n/\varepsilon_0 \cdot \vis(v_p)$, where $\varepsilon_0$ is a small constant, which we determine later;
now we just assume that $\varepsilon_0\in(0,1]$.
Let $q$ be the smallest number such that $q\ge p$ and $f(x_{q}) \le \varepsilon_0 \cdot \vis(v_{q})$.
If $p$ exists, we let $F(x)=\{v_p, v_{p+1}, \ldots, v_q\}$ and otherwise $F(x)=\emptyset$.

\begin{lemma}
\label{lem:number-of-facilities}
 For any $x\in V$, $|F(x)| = O(\log n)$. \qed
\end{lemma}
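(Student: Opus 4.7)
The plan is to combine three observations about how $f(x_i)$ and $\vis(v_i)$ evolve along the path $P$. First, I would show that along $P$ the opening cost $f(x_i) = f(\low(v_i))$ is monotonically non-increasing in $i$. Second, $\vis(v_i)$ grows geometrically with $\level(v_i)$: since $\vis(v_i) = (1+\tfrac{4\tau 2^{-\eta}}{\tau-1})\tau \cdot r_{\level(v_i)}$ and $r_j = \tau^j r_0$, we have $\vis(v_j)/\vis(v_i) = \tau^{\level(v_j)-\level(v_i)}$ for any two path nodes. Third, consecutive nodes on $P$ have strictly increasing levels in $\bar{\cT}$: meeting nodes along a subdivided edge are sorted by level and each meeting happens at its own distinct level, so $\level(v_{i+1}) \ge \level(v_i) + 1$, and therefore $q - p \le \level(v_q) - \level(v_p)$.

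Given these, the bound follows quickly. By the definition of $p$, $f(x_p) \le (n/\varepsilon_0)\vis(v_p)$, and by monotonicity $f(x_q) \le f(x_p) \le (n/\varepsilon_0)\vis(v_p)$. Hence the inequality $f(x_q) \le \varepsilon_0 \vis(v_q)$ defining $q$ is certainly satisfied as soon as $\vis(v_q) \ge (n/\varepsilon_0^2)\vis(v_p)$, which by the exponential growth amounts to $\level(v_q) - \level(v_p) \ge \log_\tau(n/\varepsilon_0^2)$. Since $q$ is chosen to be minimal, $\level(v_q) - \level(v_p) = O(\log n)$ for constant $\tau > 1$ and $\varepsilon_0 > 0$, and combined with the third observation this gives $|F(x)| = q - p + 1 = O(\log n)$.

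The step I expect to be the main obstacle is rigorously justifying the monotonicity of $f(x_i)$. I would need to argue that the collection of points over which $\low(v_i)$ is minimised only grows as $i$ increases. For a meeting node inserted between $v \in \hat{\cT}$ and $\parent(v)$, the corresponding meeting adds a new acquaintance of $v$, strictly enlarging the pool of candidate points. For the transition from $v$ to $\parent(v)$, the key fact is that if $v$ knew $u$ at level $\level(\parent(v))-1$, then $\parent(v)$ knows $\parent(u)$ at level $\level(\parent(v))$: any pair of witnessing points at distance less than $r_{\level(\parent(v))-1}$ in $v$ and $u$ stays at distance less than $r_{\level(\parent(v))}$ in $\parent(v)$ and $\parent(u)$. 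Hence the acquaintance neighbourhood of $\parent(v)$, viewed as a union of points, contains that of $v$, and $\low$ can only improve. The remaining details — the exact bookkeeping of what "sets which $v$ knows" means for a meeting node versus an original $\hat{\cT}$ node, and the bound on $\varepsilon_0$ used in the later approximation analysis — are routine once this core monotonicity is settled.
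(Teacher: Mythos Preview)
Your proposal is correct and is essentially the same argument as the paper's. The paper defines $r=p+\lceil\log_\tau(n/\varepsilon_0^2)\rceil$, uses that consecutive $\level(v_i)$ are strictly increasing to get $\vis(v_r)\ge (n/\varepsilon_0^2)\vis(v_p)$, and then observes that $x_p$ is still a candidate for $\low(v_r)$ (this is exactly your monotonicity step, stated only for the single index $r$ that matters), whence $f(x_r)\le f(x_p)\le \varepsilon_0\vis(v_r)$ and $q\le r$. Your sketch of why the pool of candidates for $\low$ only grows along $P$---via the fact that if $S$ knows $S'$ at level $j$ then either $S'\subset\parent(S)$ or $\parent(S)$ knows $\parent(S')$---is the right justification, and indeed a bit more careful than the paper's one-line ``since $x_p\in\set(v_r)$''.
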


\begin{proof}
Let $r=p+ \lceil \log_{\tau}({n}/{\varepsilon_0^2})\rceil$.
Note that for any $i=p, \ldots, r-1$, $\level(v_i)<\level(v_{i+1})$.
Hence
\begin{eqnarray}
\vis(\level(v_p))=\vis(0)\tau^{\level(v_p)}\le \frac{\vis(0)\tau^{\level(v_r)}}{\tau^{r-p}} \le \nonumber \\ \le \frac{\varepsilon_0^2\vis(0)\tau^{\level(v_r)}}{n} = \frac{\varepsilon_0^2}{n} \cdot \vis(\level(v_{r})).\nonumber
\end{eqnarray}
It follows that $f(x_p) \le \varepsilon_0\cdot \vis(\level(v_{r}))$.
Then $q \le r$, since $x_p\in\set(v_r)$.\qed
\end{proof}

It is straightforward to see that all the sets $F(x)$ can be found in $O(n\log n)$ time.

The intuition behind our choice of $F(x)$ is the following.
If $f(x_i) > n/\varepsilon_0 \cdot \vis(v_i)$, then the opening cost of $x_i$ is too high, because
even if $n$ cities contribute to the opening of $x_i$, each of them has to pay more than
$\vis(v_i)$ on average (the constant $\varepsilon_0$ here is needed to deal with some degenerate case, see further), i.e.\ more than an approximation of its connection cost. Hence it is reasonable for cities in $\set(v_i)$ to look a bit further for a cheaper facility.
On the other hand, when $f(x_{i}) \le \varepsilon_0 \cdot \vis(v_{i})$, then even if city $x$ opens facility $x_i$ alone it pays much less than its connection cost to $x_i$. Since the possible cheaper facilities are further than $x_i$, choosing $x_i$ would be a $(1+\varepsilon_0)$-approximation.

\subsection{Query}

Let $C \subseteq V$ be a set of cities passed the query argument.
Denote $k=|C|$. Now for each $c \in C$ we choose the set of facilities
\[F_k(c)= \{\low(v)\ :\ v\in F(c)\textnormal{ and } f(\low(v)) \le k/\varepsilon_0\cdot \vis(v)\}.\]
Similarly as in Lemma~\ref{lem:number-of-facilities} we can show that
$|F_k(c)|=O(\log k)$. Clearly, $F_k(c)$ can be extracted from $F(c)$ in $O(\log k)$ time: if $F(c)$ is sorted w.r.t.\ the level, we just check whether $f(\low(v)) \le k/\varepsilon_0\cdot \vis(v)$ beginning from the highest level vertex and stop when this condition does not hold.
Finally, we compute the union $F(C)=\bigcup_{c\in C} F_{k}(c)\cup\{\low({\rm root}(\bar{\cT})\}$ and we apply Corollary~\ref{cor:spanner-facility-location} to cities' set $C$ and facilities' set $F(C)$. Note that $F$ contains $\low({\rm root}(\bar{\cT})$ -- i.e.\ the point of $V$ with the smallest opening cost --- this is needed to handle some degenerate case.

\subsection{Analysis}

\begin{theorem}
\label{th:fl-reduce}
Let SOL be a solution of the facility location problem for the cities' set $C$ and facilities' set $V$. Then, for any $\varepsilon>0$, there are values of parameters $\tau$, $\eta$ and $\varepsilon_0$ such that there is a solution ${\rm SOL}'$ of cost at most $(2+\varepsilon){\rm cost}({\rm SOL})$, which uses only facilities from set $F(C)$.
\end{theorem}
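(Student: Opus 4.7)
For each facility $f^*$ opened by SOL that serves a nonempty set $C_{f^*} \subseteq C$ of cities, I will construct a replacement $\phi(f^*) \in F(C)$; SOL$'$ opens all such $\phi(f^*)$ and reassigns every $c \in C_{f^*}$ to $\phi(f^*)$. Pick a representative $c^* \in C_{f^*}$ that minimizes $d(c^*, f^*)$, and walk up $c^*$'s path in $\bar{\cT}$ to the meeting node $v^0$ of $c^*$'s and $f^*$'s sets (such a node exists on $c^*$'s path by the edge-subdivision construction of $\bar{\cT}$); by Remark~\ref{rem:stretchlimit} one can tune $\tau, \eta$ so that Property~(4) gives $d(c^*, f^*) \le \vis(v^0) \le (1+\varepsilon/3)\, d(c^*, f^*)$. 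Starting from $\max(v^0, v_p)$, continue walking up $c^*$'s path to the first node $v$ such that $\low(v) \in F_k(c^*)$ and set $\phi(f^*) := \low(v)$. Such $v$ is reached no later than $v_q$ because $f(\low(v_q)) \le \varepsilon_0 \vis(v_q) \le (k/\varepsilon_0)\vis(v_q)$ for $\varepsilon_0 \le 1 \le k$; in the degenerate case where $v_p$ does not exist on $c^*$'s path, fall back on $\phi(f^*) := \low(\mathrm{root}(\bar{\cT})) \in F(C)$.

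\paragraph{Cost analysis.}
Two structural properties drive the per-facility bound. First, \emph{monotonicity}: for every $v$ on $c^*$'s path at or above $v^0$, the set represented by $v$ knows some set containing $f^*$ (induction from $v^0$, using Lemma~\ref{lem:parentknows} at each step in $\hat{\cT}$), so $\low(v) \le f(f^*)$. Second, \emph{locality}: both $f^*$ and $\phi(f^*)$ lie in sets known by the set of $v$, so $d(c^*, \phi(f^*)) \le \vis(v)$ and $d(f^*, \phi(f^*)) \le \vis(v)$ up to a factor absorbable into $\varepsilon$ by tuning $\tau, \eta$. Combined with the triangle inequality, $d(c, \phi(f^*)) \le d(c, f^*) + \vis(v)$ for every $c \in C_{f^*}$. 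In the \emph{good case} $v = v^0$, by minimality of $c^*$ we have $\vis(v) \le (1+\varepsilon/3)\, d(c^*, f^*) \le (1+\varepsilon/3)\, d(c, f^*)$, and $f(\phi(f^*)) \le f(f^*)$ by monotonicity, so SOL$'$'s contribution at $f^*$ is at most $(2+\varepsilon/3)$ times SOL's.

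In the \emph{bad case} ($v$ strictly above $v^0$), let $w$ be the child of $v$ on $c^*$'s path: by the choice of $v$ one has $\low(w) \notin F_k(c^*)$ while $w \in F(c^*)$, forcing $f(\low(w)) > (k/\varepsilon_0)\vis(w)$; combined with monotonicity $f(f^*) \ge f(\low(w))$ and $\vis(v) \le \tau \vis(w)$ this gives the key amortization
\[ \vis(v) \;<\; \frac{\tau\,\varepsilon_0}{k}\, f(f^*). \]
Hence the total extra connection cost for the $|C_{f^*}| \le k$ cities is at most $k\,\vis(v) < \tau\varepsilon_0\, f(f^*)$, and the facility cost is $f(\phi(f^*)) \le (k/\varepsilon_0)\vis(v) \le \tau\, f(f^*)$; picking $\varepsilon_0 = \Theta(\varepsilon)$ and $\tau = 1 + \Theta(\varepsilon)$, the bad-case contribution is well within $(1+\varepsilon/2)$ of SOL's. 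The fallback to $\low(\mathrm{root})$ is handled analogously, using the $n/\varepsilon_0$ threshold defining $v_p$ together with $k \le n$ to make the amortization even tighter.

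\paragraph{Main obstacle.}
The technical core is the bad case, where the natural meeting-level facility $\low(v^0)$ is too expensive and one must jump to a higher level $v$ without inflating the connection cost. The amortization $\vis(v) = O(\varepsilon_0\, f(f^*)/k)$ is the linchpin, and it rests on three ingredients conspiring: the $k/\varepsilon_0$ threshold defining $F_k$ converts ``$\low(w) \notin F_k$'' into a quantitative lower bound on $f(\low(w))$; the monotonicity of $\low$ up $c^*$'s path replaces this lower bound by one on $f(f^*)$; and choosing $c^*$ as the closest city keeps the triangle-inequality loss at $2 + O(\varepsilon)$ rather than $3 + O(\varepsilon)$ in the good case. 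Establishing the monotonicity rigorously (given that $\bar{\cT}$ has meeting nodes along each edge and $\low$ is defined per node) and making the locality bound $d(f^*, \phi(f^*)) \le (1+\varepsilon)\vis(v)$ tight enough for the constant $2+\varepsilon$ are the two delicate points; once those are pinned down, both cases fit comfortably under the target approximation ratio.
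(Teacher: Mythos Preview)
Your approach follows the same skeleton as the paper's---per-facility replacement via the closest city $c^*$ and a walk up $c^*$'s path in $\bar{\cT}$---but differs in how the replacement node $v$ is selected. The paper picks $v$ as the first node on $c^*$'s path satisfying \emph{both} ``$v$ knows $f^*$'' and ``$\vis(v) \ge \varepsilon_0 f(f^*)/|C_{f^*}|$''; this dual criterion immediately yields a two-sided bound on $\vis(v)$, after which one argues that $F_k(c^*)$ contains a suitable facility within radius $\vis(v)$ of $c^*$. You instead pick $v$ as the first node (at or above $\max(v^0,v_p)$) with $\low(v) \in F_k(c^*)$ and then try to bound $\vis(v)$ by casework.

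That casework has two holes. First, you tacitly assume $v^0 \le v_q$ when asserting ``such $v$ is reached no later than $v_q$'': if $v^0 > v_q$ the walk begins above $v_q$ and need never encounter a node whose $\low$ lies in $F_k(c^*)$ (recall $F_k(c^*)$ is a set of \emph{points}, so $\low(v)$ for $v>v_q$ is not automatically in it). This case needs a separate argument---e.g.\ take $\phi(f^*):=\low(v_q)$, which has opening cost $\le\varepsilon_0\vis(v_q)$ and lies within $\vis(v_q)\le\vis(v^0)$ of $c^*$; this is exactly the ``$v$ higher than $v_q$'' branch in the paper. Second, in your bad case the inequality $f(\low(w)) > (k/\varepsilon_0)\vis(w)$ requires $w \in F(c^*)$; but if $v_p > v^0$ and the walk already halts at $v = v_p$, then the child $w$ lies strictly below $v_p$, so $w \notin F(c^*)$ and the $k/\varepsilon_0$ threshold does not apply. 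Here one must instead invoke the $n/\varepsilon_0$ threshold defining $v_p$ at the node just below it (which is still $\ge v^0$, so monotonicity is available). Finally, your locality claim $d(f^*,\phi(f^*)) \le \vis(v)$ ``up to a factor absorbable into $\varepsilon$'' is not justified as stated: $f^*$ and $\phi(f^*)$ lie in sets known to $v$'s set, but those two sets need not know \emph{each other}, so the triangle inequality only gives $d(f^*,\phi(f^*)) \le d(f^*,c^*)+\vis(v)$, and the extra $d(f^*,c^*)$ term is not removable by tuning $\tau,\eta$.
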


\begin{proof}
We construct ${\rm SOL}'$ from SOL as follows.
For each opened facility $x$ of SOL, such that $x\not\in F(C)$, we consider the set $C(x)$ of all the cities connected to $x$ in SOL.
We choose a facility $x' \in F(C)$ and reconnect all the cities from $C(x)$ to $x'$.

Let $c^*$ be the city of $C(x)$ which is closest to $x$.
Consider the path $P$ of $\bar{\cT}$ from the leaf corresponding to $c^*$ to the root.
Let $v$ be the first node on this path such that $v$ knows $x$ and
\begin{equation}
\label{eq:f(x)-small}
\vis(v)\ge \frac{\varepsilon_0 f(x)}{|C(x)|}.
\end{equation}
Note that by the first inequality of Lemma~\ref{lem:stretchlimit}, for the first node $w$ on $P$ that knows $x$,
\[\vis(w)\le
\Big( 1 + \frac{4\tau 2^{-\eta}}{\tau - 1}\Big) \tau r_{\level(w)} \le
\Big( 1 + \frac{4\tau 2^{-\eta}}{\tau - 1}\Big) \tau d(x,c^*).\]
On the other hand, again by Lemma~\ref{lem:stretchlimit}, for the first node $u$ on $P$ such that
$\vis(u)\ge \frac{\varepsilon_0 f(x)}{|C(x)|}$, there is $\vis(u)\le \tau\frac{\varepsilon_0 f(x)}{|C(x)|}$.
Hence, since $v$ is the higher of $w$ and $u$,
\begin{equation}
\label{eq:upper-vis(v)}
\vis(v)\le \tau\max\left\{\frac{\varepsilon_0 f(x)}{|C(x)|}, \Big( 1 + \frac{4\tau 2^{-\eta}}{\tau - 1}\Big) d(x,c^*)\right\}.
\end{equation}
First we consider the non-degenerate case when $F_k(c^*)\neq \emptyset$.
Let $v_p, \ldots, v_q$ be the subpath of $P$ which was chosen during the preprocessing.
Let $p'\in \{p,\ldots,q\}$ be the smallest number such that $\low(p') \le k/\varepsilon_0\cdot\vis(v_{p'})$.
Recall that $F_k(c^*)=\{\low(v_i)\ :\ p' \le i \le q\}$.
If $v\in\{v_{p'},\ldots,v_q\}$, then $F_k(c^*)$ contains a facility of opening cost at most $f(x)$, at distance at most $\vis(v)$.
Otherwise $v$ is higher than $v_q$ on $P$, so $F_k(c^*)$ contains a facility of cost at most $\vis(v)$, at distance at most $\varepsilon_0\cdot \vis(v)$.
To sum up, $F_k(c^*)$ contains a facility of cost at most $\max\{f(x),\varepsilon_0\cdot \vis(v)\}$, at distance at most $\vis(v)$. Denote it by $x'$. We reconnect all of $C(x)$ to $x'$.

Now let us bound the cost of connecting $C(x)$ to $x'$. From the triangle
inequality,~\eqref{eq:upper-vis(v)}, and the fact that $c^*$ is closest to $x$ we get
\begin{eqnarray}
\label{eq:conn}
\sum_{c\in C(x)}d(c,x') & \le & \sum_{c\in C(x)}d(c,x) + |C(x)|\vis(v) \nonumber\\
& \le & \sum_{c\in C(x)}d(c,x) + \tau\max\left\{\varepsilon_0 f(x), \Big( 1 + \frac{4\tau 2^{-\eta}}{\tau - 1}\Big) \sum_{c\in C(x)}d(c,x)\right\} \nonumber\\
& \le & \tau \varepsilon_0 f(x) + \left(1+\tau\Big( 1 + \frac{4\tau 2^{-\eta}}{\tau - 1}\Big)\right)\sum_{c\in C(x)}d(c,x).
\end{eqnarray}

Now let us expand the bound for $f(x')$:
\begin{eqnarray}
\label{eq:opening}
f(x') & \le & \max\left\{f(x), \varepsilon_0\tau\max\left\{\frac{\varepsilon_0 f(x)}{|C(x)|}, \Big( 1 + \frac{4\tau 2^{-\eta}}{\tau - 1}\Big) d(x,c^*)\right\}\right\} \nonumber \\
      & \le & (1+\varepsilon_0\tau)f(x) + \varepsilon_0\tau \Big( 1 + \frac{4\tau 2^{-\eta}}{\tau - 1}\Big) \cdot \sum_{c\in C(x)}d(c,x).
\end{eqnarray}

From~\eqref{eq:conn} and~\eqref{eq:opening} together we get
\begin{eqnarray}
\label{eq:non-deg}
 f(x') + &&\sum_{c\in C(x)}d(c,x') \le \\&&\le (1+2\varepsilon_0\tau)f(x)+\left(1+(\tau+\tau\varepsilon_0)\Big( 1 + \frac{4\tau 2^{-\eta}}{\tau - 1}\Big)\right) \sum_{c\in C(x)}d(c,x).\nonumber
\end{eqnarray}

Finally, we handle the degenerate case when $F_k(c^*)=\emptyset$.
Then we just connect all $C(x)$ to the facility $x'=\low({\rm root}(\bar{\cT}))$, i.e.\ the
facility with the smallest opening cost in $V$.
Note that $F_k(c^*)=\emptyset$ implies that for any point $y\in V$ (and hence also for $x$),
\[f(y) > k/\varepsilon_0 \vis ({\rm root}(\bar{\cT})) \ge k/\varepsilon_0 \max_{x,y\in V}d(x,y).\]
Hence, $\sum_{c\in C(x)}d(c,x') \le |C(x)|\max_{x,y\in V}d(x,y) \le (|C(x)|/n)\varepsilon_0 f(x) \le \varepsilon_0 f(x)$. It follows that
\begin{equation}
\label{eq:deg}
f(x') + \sum_{c\in C(x)}d(c,x')  \le (1+\varepsilon_0) f(x).
\end{equation}
From \eqref{eq:non-deg} and \eqref{eq:deg}, we get that
\[{\rm cost}({\rm SOL}') \le \left(1+(\tau+\tau\varepsilon_0)\Big( 1 + \frac{4\tau 2^{-\eta}}{\tau - 1}\Big)\right)  {\rm cost}({\rm SOL}).\]
One can easily seen that the constants $\varepsilon_0$, $\tau$ and $\eta$ can be adjusted so that the coefficient before ${\rm cost}({\rm SOL})$ is arbitrarily close to 2. This proves our claim.\qed
\end{proof}

From Theorem~\ref{th:fl-reduce} and Corollary~\ref{cor:spanner-facility-location} we immediately get the following.

\begin{corollary}[Facility Location with unrestricted facilities, Theorem~\ref{thm:only-cities-facility-location} restated]
\label{cor:only-cities-facility-location}
Assume that for each point of $n$-point $V$ there is assigned an opening cost $f(x)$.
Given $\hat{\cT}$ and a set of $k$ points $C \subseteq V$, for any $\varepsilon>0$, a $(3.04+\varepsilon)$-approximate solution to the facility location problem with cities' set $C$ and facilities' set $V$ can be constructed in time  $O(k\log k (\log^{O(1)}k+\log\log n))$, w.h.p.
\end{corollary}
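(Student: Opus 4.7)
My plan is to reduce the unrestricted variant to the restricted one already handled by Corollary~\ref{cor:spanner-facility-location}. During preprocessing I would augment $\hat{\cT}$ so that every node $v$ stores $\low(v)$, a cheapest facility among the sets that $v$ knows; then, for each $x \in V$ I walk the root-path of the leaf $\{x\}$ in $\hat{\cT}$ and pick a contiguous sub-band $v_p,\ldots,v_q$ of levels in which the opening cost of $\low(v_i)$ lies in a ``relevant'' window, roughly between $\varepsilon_0 \cdot \vis(v_i)$ (below which the candidate is so cheap it can be taken essentially for free compared to connection cost) and $(n/\varepsilon_0)\cdot \vis(v_i)$ (above which, even if all $n$ cities shared the opening cost, each would pay more than its connection-cost approximation). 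The two-sided threshold combined with the exponential growth of $\vis$ between consecutive levels makes $|F(x)|=O(\log n)$, so total preprocessing storage is $O(n\log n)$.

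At query time with $C \subseteq V$, $|C|=k$, I sharpen $F(c)$ to $F_k(c)$ by replacing the upper threshold $n/\varepsilon_0$ by $k/\varepsilon_0$; since $F(c)$ is stored sorted by level, $F_k(c)$, of size $O(\log k)$, can be extracted in $O(\log k)$ time per city. Setting $F(C) = \bigcup_{c \in C} F_k(c) \cup \{\low(\mathrm{root})\}$ gives a restricted facility set of size $O(k\log k)$; I then invoke Corollary~\ref{cor:spanner-facility-location} on $(C, F(C))$, which runs in time $O((k+k\log k)(\log^{O(1)}(k\log k) + \log\log n)) = O(k\log k(\log^{O(1)} k + \log\log n))$ and produces a $(1.52+\varepsilon')$-approximate solution w.h.p.

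For the approximation analysis I would take any optimal unrestricted solution $\mathrm{SOL}$ and, for each opened facility $x$ with cluster $C(x)$ of connected cities, pick the city $c^* \in C(x)$ nearest to $x$. On the root-path of $c^*$ in $\hat{\cT}$ let $v$ be the lowest node that both knows $x$ and satisfies $\vis(v) \ge \varepsilon_0 f(x)/|C(x)|$. Lemma~\ref{lem:stretchlimit} gives $\vis(v) \le \tau \max\{\varepsilon_0 f(x)/|C(x)|,\; (1 + 4\tau 2^{-\eta}/(\tau-1)) \cdot d(x,c^*)\}$. Depending on whether $v$ lies inside, below, or above the preprocessed band $[v_p, v_q]$ for $c^*$, a candidate $x' \in F_k(c^*)$ exists with opening cost at most $\max\{f(x), \varepsilon_0 \vis(v)\}$ and $d(c^*, x') \le \vis(v)$; the triangle inequality then bounds the reconnection cost of $C(x)$ to $x'$ in terms of $f(x)$ and $\sum_{c \in C(x)} d(c, x)$, yielding a total opening-plus-connection blow-up arbitrarily close to $2$ when $\tau,\eta,\varepsilon_0$ are tuned appropriately. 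The degenerate case $F_k(c^*)=\emptyset$ is handled by the fallback $\low(\mathrm{root})$, whose opening cost exceeds $(k/\varepsilon_0)\cdot\mathrm{diam}(V)$ and so dominates the reconnection cost by a factor at least $n/\varepsilon_0$. Composing the $(2+\varepsilon'')$ blow-up with the $(1.52+\varepsilon')$ guarantee of Corollary~\ref{cor:spanner-facility-location} yields $(3.04+\varepsilon)$.

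The main obstacle is this cost-balancing argument: $F(x)$ is fixed during preprocessing without knowledge of the query or of the optimal facility assignments, so I must argue that the single band $[v_p,v_q]$ chosen per point captures, for every possible pair $(x,C(x))$ arising in $\mathrm{SOL}$, a candidate that simultaneously (a) is cheap enough in opening cost and (b) is close enough to $c^*$. The two-sided threshold ensures (a) and (b) cannot both fail, but driving the combined loss down to exactly $2$ — rather than a larger constant — requires carefully matching the triangle-inequality detour $d(c,x)+d(x,c^*)+d(c^*,x')$ against the relaxed upper bound on $d(c^*,x')$, while exploiting the gap between the $\varepsilon_0$ and $n/\varepsilon_0$ thresholds to absorb either the opening cost or the connection cost on the right-hand side.
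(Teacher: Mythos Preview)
Your proposal is correct and follows essentially the same approach as the paper: the paper proves exactly this reduction (as Theorem~\ref{th:fl-reduce}), with the same preprocessing of $\low(v)$ and the band $F(x)=\{v_p,\ldots,v_q\}$ defined by the thresholds $n/\varepsilon_0\cdot\vis(v_p)$ and $\varepsilon_0\cdot\vis(v_q)$, the same query-time refinement to $F_k(c)$ via the $k/\varepsilon_0$ threshold together with the fallback $\low(\mathrm{root})$, and the same analysis via the nearest city $c^*$, the node $v$ on its root-path, and the degenerate case. The only cosmetic difference is that the paper carries out the construction in a slightly expanded tree $\bar{\cT}$ (each compressed edge subdivided at meeting levels) rather than in $\hat{\cT}$ directly.
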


\section{Dynamic Minimum Spanning Tree and Steiner Tree --- algorithm details}\label{app:dynamic}

In this section we give details on the proof of Theorem \ref{thm:dynamic-mst}, that is,
we describe an algorithm for MST in the static setting and than we make it dynamic.

We assume we have constructed the compressed tree $\hat{\cT}(V)$. Apart from Lemma \ref{lem:who-knows-who},
we treat $\lambda$, $\eta$ and $\tau$ as constants and omit them in the big--$O$ notation.
Recall that we are given a subset $X \subset V$, $|V|=n$, $|X|=k$. In the static
setting, we are to give a constant approximation of a MST for the set $X$ in time
almost linear in $k$. In the dynamic setting, the allowed operations are additions
and removals of vertices to/from $X$ and the goal is achieve polylogarithmic times
on updates.

\subsection{Static Minimum Spanning Tree}
\label{subsec:static_mst}

We first show how the compressed tree $\hat{\cT}(V)$ can be used to solve the static version of the Minimum Spanning Tree problem, i.e.\ we are given a set $X \subseteq V$ and we want to find a constant approximation of the Minimum Spanning Tree of $X$ in time almost linear in $k$.

Let $r$ (called the {\em{root}}) be any fixed vertex in $X$ and let $L_i(X)= \{ x \in X : \textrm{meet}(x,r) = i+1\}$. Also, let $D(\tau) = \left(1+\frac{4\tau 2^{-\eta}}{\tau-1}\right)\tau$.
As a consequence of Property~\ref{item:remstretchlimit} of the partition tree, we get the following:
\begin{lemma}
\label{lem:layer-properties}
Let $x,y \in L_i(X)$. Then $r_i \le d(x,r) \le D(\tau) r_i$. Moreover, at level $i+1+\log_\tau (2D(\tau)) = i+O(1)$ of $\cT$ the sets containing $x$ and $y$ are equal or know each other.
\end{lemma}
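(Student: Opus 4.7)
The plan is to use Property~\ref{item:remstretchlimit} (distance approximation) directly for the first part, and then combine triangle inequality with the definition of ``knows'' for the second part.

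For the first claim: observe that by definition of $L_i(X)$, we have $\meet(x,r) = i+1$, meaning the sets of $\cT$ containing $x$ and $r$ at level $i+1$ know each other, while the sets at level $i$ do not. The lower bound $d(x,r) \ge r_i$ follows directly from the lower bound in Property~\ref{item:remstretchlimit}, or equivalently from the observation that if $d(x,r) < r_i$ then the ball $B_{r_i}(x)$ would contain $r$, forcing the level-$i$ sets to know each other. The upper bound $d(x,r) \le D(\tau) r_i$ is exactly the upper bound of Property~\ref{item:remstretchlimit}, since $D(\tau) = (1+\tfrac{4\tau 2^{-\eta}}{\tau-1})\tau$.

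For the second claim, first use the triangle inequality: $d(x,y) \le d(x,r) + d(r,y) \le 2 D(\tau) r_i$. Now let $j = i+1+\log_\tau(2D(\tau))$, so that $r_j = \tau^{j-i} r_i = \tau \cdot 2D(\tau) \cdot r_i > d(x,y)$. Let $S_x, S_y \in \cS_j$ be the sets containing $x$ and $y$ respectively. If $S_x = S_y$ we are done. Otherwise, since $d(x,y) < r_j$, the ball $B_{r_j}(x)$ appears in the union $\bigcup \{B_{r_j}(v) : v \in S_x\}$ from Property~\ref{item:acquaintance-limit} and contains $y \in S_y$, so $S_x$ knows $S_y$ by definition.

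The main ``obstacle'' is really just keeping track of definitions: the equivalence between ``meeting at level $i+1$'' and the hypothesis of Property~\ref{item:remstretchlimit}, and the fact that ``knows'' is triggered as soon as some $r_j$-ball from one set reaches the other. Once those are unpacked the bounds are immediate, and the asymptotic form $i + O(1)$ follows because $\tau$ and $\eta$ are treated as constants, making $\log_\tau(2D(\tau))$ a constant.
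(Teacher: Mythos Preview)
Your proof is correct and matches the paper's intended argument: the paper simply states that the lemma is ``a consequence of Property~\ref{item:remstretchlimit}'' without spelling out details, and you have unpacked exactly those details---applying Property~\ref{item:remstretchlimit} with $j=i$ for the distance bounds, and then the triangle inequality together with the definition of ``knows'' for the second claim. One cosmetic remark: $i+1+\log_\tau(2D(\tau))$ need not be an integer level, so in practice one rounds up; this is implicit in the paper's own statement and does not affect the argument since the inequality $d(x,y)<r_j$ only improves.
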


A spanning tree $T$ of $X$ is said to be {\em{layered}} if it is a sum of spanning trees for each of the sets $\{r\} \cup L_i(X)$.
The following is very similar to Lemma 8 in Jia et al.~\cite{jia}.

\begin{lemma}
\label{lem:a-la-jia}
There exists a layered tree $T_L$ of $X$ with weight at most $O(1)\OPT$, where $\OPT$ is the weight of the MST of $X$.
\end{lemma}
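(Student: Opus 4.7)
The plan closely parallels Lemma~8 of Jia et al.~\cite{jia}. I would take $T^*$ to be the optimal MST of $X$ (of weight $\OPT$), and for each layer $k$ construct a spanning tree $T_k$ of $\{r\}\cup L_k(X)$. Set $T_L = \bigcup_k T_k$; since all $T_k$ share the vertex $r$, $T_L$ is a spanning tree of $X$ and is layered by construction. A natural choice is to let $T_k$ be the MST of $\{r\}\cup L_k(X)$ in the induced metric, though the flexibility to choose any spanning tree of each layer will turn out to matter for the weight bound.

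The per-layer weight is easy to control. By Lemma~\ref{lem:layer-properties}, all vertices of $\{r\}\cup L_k(X)$ lie in the ball of radius $D(\tau)r_k$ about $r$, and moreover at level $k + 1 + \log_\tau(2D(\tau))$ of $\cT$ they occupy $O(1)$ pairwise-acquainted sets. Using this clustering, I would construct $T_k$ recursively down the partition tree: build an intra-cluster subtree within each of the $O(1)$ sets, then connect the clusters via a constant number of inter-cluster edges of length $O(r_k)$. This yields an immediate bound $w(T_k) = O(|L_k|\,r_k)$ and, more usefully, lets $w(T_k)$ be compared to the weight of $T^*$ restricted to the corresponding scale.

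The principal obstacle is showing $\sum_k w(T_k) = O(\OPT)$. A naive per-layer bound $\mathrm{MST}(\{r\}\cup L_k) \le 2\,\OPT$ gives only $O(M)\cdot\OPT$, where $M$ is the number of levels of $\cT$, which can be $\omega(1)$. To achieve $O(1)$ one must exploit that distinct layers cover geometrically separated length scales: each $L_k$ lives in an annulus of thickness $\Theta(r_k)$ about $r$, so any reasonable construction of $T_k$ uses edges of length $\Theta(r_k)$. The argument I would use is a charging of layer weight to $T^*$-edges of comparable scale: assign each edge $e \in T^*$ of weight $w(e)$ only to those layers $k$ with $r_k = \Theta(w(e))$, of which there are $O(1)$ by the geometric growth of $r_k$. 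The technical crux, and the part requiring Lemma~\ref{lem:layer-properties} together with the doubling property, is proving that $w(T_k)$ is indeed dominated, up to a multiplicative constant, by the total $T^*$-weight at scale $r_k$; this is analogous to the telescoping analysis in Jia et al.~\cite{jia}. Once this local-scale domination is established, summing over $k$ gives $\sum_k w(T_k) = O\bigl(\sum_{e\in T^*} w(e)\bigr) = O(\OPT)$, as required.
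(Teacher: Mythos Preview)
Your proposal correctly isolates the difficulty but the charging argument you sketch does not go through, and the intuition supporting it is wrong. You write that ``any reasonable construction of $T_k$ uses edges of length $\Theta(r_k)$,'' and your crux claim is that $w(T_k)$ is dominated by the $T^*$-weight at scale $r_k$. Both are false. Consider $X$ lying on a line with $r=0$ and points at $\delta,2\delta,\ldots,N\delta$ where $N\delta\approx r_k$; the top point lies in some layer $L_j$ with $r_j\approx r_k$, so $w(T_j)\ge d(r,N\delta)\approx r_k$, yet every edge of the MST $T^*$ has weight $\delta=r_k/N$, and for large $N$ none of them is at scale $r_k$. Thus the charge you assign to layer $j$ is zero while $w(T_j)\approx r_k$. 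More generally, points of $L_k$ can cluster at arbitrarily small scales, so $T_k$ need not consist of scale-$r_k$ edges; and conversely, $T^*$ can connect $L_k$ to $r$ through a chain of short edges in lower layers, leaving no $T^*$-weight at scale $r_k$ at all. The per-layer domination you name as the crux is therefore not merely unproved but false, and the per-layer approach cannot be completed along these lines.

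The paper's argument avoids per-layer bounds entirely. It fixes $m=\lceil\log_\tau(1+D(\tau))\rceil=O(1)$ and groups the layers into residue classes $\mathcal{L}^j=\{r\}\cup\bigcup_{i\equiv j\ (\mathrm{mod}\ m)}L_i(X)$. For each of the $m$ classes one obtains a spanning tree of cost at most $2\,\OPT$ by doubling $T_\OPT$, walking an Euler tour, and shortcutting the vertices outside $\mathcal{L}^j$; the union over $j$ already costs at most $2m\,\OPT$. The point of the spacing $m$ is that any two distinct layers appearing in the same class have indices differing by at least $m$, hence radii differing by a factor $\ge\tau^m\ge 1+D(\tau)$. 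For a cross-layer edge $xy$ in one of these shortcut trees, with $x\in L_i$ and $y\in L_{i'}$, $i'\ge i+m$, one then gets $d(y,r)\ge r_{i'}\ge(1+D(\tau))r_i\ge r_i+d(x,r)$, so $d(x,r)\le D(\tau)\,d(x,y)$ and $d(y,r)\le(1+D(\tau))\,d(x,y)$; replacing $xy$ by $xr$ or $yr$ costs at most a factor $1+D(\tau)$. Rerouting every cross-layer edge this way yields a layered tree of weight at most $2m(1+D(\tau))\,\OPT$. The residue-class trick is the idea your outline is missing: it turns the global problem into $O(1)$ shortcutting problems where cross-layer edges are provably long enough to reroute through $r$.
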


\begin{proof}
  Let $T_\OPT$ be any minimum spanning tree of $X$ with cost $\OPT$.
  Let $m = \lceil \log_\tau(1 + D(\tau)) \rceil$
  and $\mathcal{L}^j = \{r\} \cup \bigcup \{L_i(X): i \mod m = j\}$ for $0 \leq j < m$.
  Double the edges of $T_\OPT$, walk the resulting graph along its Euler--tour
  and shortcut all vertices not belonging to $\mathcal{L}^j$.
  In this way we obtain the tree $T_\OPT^j$ --- a spanning tree of $\mathcal{L}^j$ with cost at most
  $2\OPT$.
  Clearly $\bigcup_{j=0}^{m-1} T_\OPT^j$ is a spanning tree for $X$ with cost at most $2m\OPT$.

  Let $xy$ be an edge of $T_\OPT^j$ such that $x$ and $y$ belong to different layers, that is,
  $x \in L_i(X)$, $y \in L_{i'}(X)$, $i < i'$.
  Then $i' \geq i + m$ and due to Lemma \ref{lem:stretchlimit}:
  $$d(y, r) \geq r_{i'} \geq (1+D(\tau))r_i \geq r_i + d(x, r).$$
  Therefore $d(x,y) \geq r_i$ and, as $d(x,r) \leq D(\tau)r_i$, $d(x, r) \leq D(\tau)d(x,y)$.
  Moreover:
  $$d(y, r) \leq d(x,y) + d(x,r) \leq d(x,y) + D(\tau)r_i \leq (1+D(\tau))d(x,y).$$
  Therefore by replacing $xy$ by one of the edges $xr$ or $yr$,
  we increase the cost of this edge at most $(1+D(\tau))$ times.
  If we replace all such edges $xy$ in all $T_\OPT^j$ for $0 \leq j < m$,
  we obtain a layered spanning tree of $X$ with cost at most $2m(1+D(\tau))\OPT$.\qed
\end{proof}

Our strategy is to construct an $O(1)$-approximation $T_i(X)$ to MST for each layer separately and connect all these trees to $r$, which gives
an $O(1)$-approximation to MST for $X$ by the above lemma. The spanning tree $T_i(X)$ is constructed using a sparse spanning
graph $G_i(X)$ of $L_i(X)$. In order to build $G_i(X)$ we investigate the levels $\cT$ at which the sets containing vertices of $L_i(X)$
meet each other. The following technical Lemma shows that we can extract this information efficiently:

\begin{lemma}
\label{lem:who-knows-who}
Let $x \in X$ and let $i \le j$ be levels of $\cT$. Then, for each level of $\cT$ in range $[i,j]$, we can find all sets known to $x$,
in total time $O(\log \log \log \metricstretch + \log\log n + \lambda^{\eta+3}(j - i))$.
If we are given level $i_0$ such that there exists a meeting (possibly not involving $x$)
at level $i_0$, the above query takes
$O( (\log \eta + \log \log \lambda)\log \log n + |i_0 - i| + \lambda^{\eta+3}(j-i))$ time.

To perform these queries, the tree needs to be equipped with additional information
of size $O(\lambda^{2\eta + 6}n)$.
\end{lemma}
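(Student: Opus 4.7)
The plan is to answer the query by first navigating from $x$ to the compressed-tree ancestor $v_i$ of $\{x\}$ — that is, the unique node of $\hat{\cT}$ satisfying $\level(v_i) \le i < \level(\parent(v_i))$ — and then sweeping upward through the levels, reading off the acquaintances directly from per-node meeting lists. The key structural fact is that within a single compressed-tree node $v$, which represents the set $\set(v)$ across the level range $[\level(v), \level(\parent(v))-1]$, the list of acquaintances grows monotonically with the level: a stored meeting $(\set(v), S', j')$ contributes the acquaintance $S'$ at every level $\ell \in [j', \level(\parent(v))-1]$. Hence, after a single predecessor query at the entry level $\ell = i$, the remainder of the enumeration along $v$'s range is a pure forward walk in the sorted meeting list; when $\ell+1 = \level(\parent(v))$, I advance $v \mapsto \parent(v)$ via a parent pointer and start at the beginning of the new node's list.

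By Property~\ref{item:acquaintance-limit}, a set at any fixed level has at most $\lambda^{3+\eta}$ acquaintances, so the total output (and work beyond navigation) over levels $i, \ldots, j$ is $O(\lambda^{\eta+3}(j-i))$. The number of distinct compressed-tree nodes visited is at most $j-i+1$, contributing only $O(j-i)$ in parent-pointer moves. Correctness amounts to observing that meetings at a node $v$ are precisely those interactions that come into existence somewhere in the range $[\level(v),\level(\parent(v))-1]$, so a prefix of the sorted list is exactly the set of currently-active acquaintances at $\set(v_\ell)$.

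For the two navigation cost variants I reuse the path-decomposition and y-fast-trie machinery that underlies $\meet$ and $\jump$ (Lemmas~\ref{lem:binary-jumping} and~\ref{lem:jump}). The spine from $\{x\}$ to the root decomposes into $O(\log n)$ sub-paths of $\cP$ via Lemma~\ref{lem:tree-partition}; using $\paths(\{x\})$ I binary-search over these sub-paths to find the one containing $v_i$, then use a y-fast-trie inside that sub-path (indexed by node levels) to locate $v_i$ itself and do the initial meeting-list predecessor query. Without a hint, the universe is $O(\log_\tau\metricstretch)$, giving the $O(\log\log n + \log\log\log\metricstretch)$ term. With a valid meeting-level hint $i_0$, I switch to the dense-ID y-fast-trie of Lemma~\ref{lem:jump} whose universe has size $O(\lambda^{\eta+3}n)$, costing $O((\log\eta + \log\log\lambda)\log\log n)$, and then walk along the spine in $\hat{\cT}$ for at most $|i_0 - i|$ level-steps (each advancing through at most one compressed-tree node) to correct the starting position to $v_i$.

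The main obstacle I expect is the space accounting to reach $O(\lambda^{2\eta+6}n)$. There are $O(\lambda^{\eta+3}n)$ meetings (Corollary~\ref{lem:num_edges}), and for each of them I must maintain a constant number of cross-references (membership in the owning node's sorted list, in the per-path y-fast-trie, and in the dense-ID hash table), while each node additionally stores up to $\lambda^{\eta+3}$ acquaintance pointers so that the sweep can emit them in the claimed amortized time; multiplying these two $\lambda^{\eta+3}$ factors yields the $\lambda^{2\eta+6}$ coefficient. Once the data structures are in place the two claimed running times follow by adding the navigation cost to the $O(\lambda^{\eta+3}(j-i))$ output term.
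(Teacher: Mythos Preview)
Your approach is essentially the same as the paper's: invoke $\jump(x,i)$ (or $\jump(x,i_0)$ in the hinted variant, then walk $|i_0-i|$ steps) to reach the right place on $x$'s root path, then iterate upward through the stored meetings until level $j$, reading off the precomputed acquaintance lists attached to each meeting. The space bound also matches the paper's reasoning---$O(\lambda^{\eta+3}n)$ meetings times $\lambda^{\eta+3}$ acquaintances stored per meeting---though your phrasing ``each node additionally stores up to $\lambda^{\eta+3}$ acquaintance pointers'' should read ``each \emph{meeting}'' to make the multiplication actually yield $\lambda^{2\eta+6}n$.
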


\begin{proof}
  First we perform $\jump(x, i)$ and, starting with the returned meeting,
  we go up the tree, one meeting at a time, until we reach level $j$.
  In this way, we iterate over all meetings of $x$ between levels $i$ and $j$.
  We store in the tree, for each meeting $(S, S', i')$, the current set of acquaintances
  of $S$ and $S'$. The answer to our query can be retrieved directly from this information.
  Also note that this extra information takes $O(\lambda^{2\eta+6}n)$ space,
  as there are at most $(2n-1)\lambda^{\eta+3}$ meetings and each set knows
  at most $\lambda^{\eta+3}$ other sets at any fixed level.
  If we are given level $i_0$, we may simply perform $\jump(x, i_0)$ and walk
  the tree to level $i$.\qed
\end{proof}

\begin{theorem}\label{thm:static-mst}
Given the compressed tree $\hat{\cT}(V)$ and a subset $X \subseteq V$ one can construct an $O(1)$-approximation $T$ to MST in time
$O(k(\log\log n + \log k)$.
\end{theorem}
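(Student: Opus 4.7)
The plan is to exploit Lemma~\ref{lem:a-la-jia}: fix an arbitrary root $r \in X$, partition $X \setminus \{r\}$ into layers $L_i(X) = \{x : \meet(x, r) = i+1\}$, construct an $O(1)$-approximate MST within each $\{r\} \cup L_i(X)$ separately, and take their union. Since Lemma~\ref{lem:a-la-jia} guarantees a layered spanning tree of cost $O(\OPT)$, the union is an $O(1)$-approximation of the true MST of $X$.

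First, I would compute $\meet(x, r)$ for every $x \in X \setminus \{r\}$ using Lemma~\ref{lem:binary-jumping}; this takes $O(\log\log n)$ per vertex, hence $O(k \log\log n)$ in total, and assigns each vertex to its layer. Next, for each nonempty $L_i$, I would build a sparse spanning graph $G_i$ on $\{r\} \cup L_i(X)$ of constant stretch. The cleanest route is to apply Theorems~\ref{thm:subtree} and~\ref{thm:spanner} to $\{r\} \cup L_i(X)$: extract $\hat{\cT}(\{r\} \cup L_i)$ from $\hat{\cT}(V)$ and build a constant-stretch pseudospanner with $O(|L_i|)$ edges in $O(|L_i|(\log|L_i| + \log\log n))$ time. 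Summed over layers, this costs $O(k(\log k + \log\log n))$. Finally, run Kruskal on each $G_i$ in $O(|L_i| \log |L_i|)$ time --- in total $O(k \log k)$ --- to obtain $T_i$, and output $T = \bigcup_i T_i$. Correctness follows since $\sum_i \mathrm{cost}(T_i) \le O(1) \sum_i \mathrm{cost}(\mathrm{MST}(\{r\} \cup L_i)) \le O(1)\cdot\OPT$ by Lemma~\ref{lem:a-la-jia}.

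The main obstacle, and the reason the paper introduces Lemma~\ref{lem:who-knows-who} rather than blindly reusing Theorem~\ref{thm:spanner}, is that this pseudospanner-based construction does not prepare the ground for the dynamic algorithm, where one cannot afford to re-extract subtrees per update. A more targeted per-layer construction uses the fact (Lemma~\ref{lem:layer-properties}) that at level $i + O(1)$ all sets containing points of $L_i$ are mutually acquainted: for each $x \in L_i$ invoke Lemma~\ref{lem:who-knows-who} with seed level $i_0 = \meet(x, r)$ to enumerate all acquaintance sets at levels $i$ through $i + O(1)$ in $O(\log\log n)$ time, and add edges from $x$ to one designated representative of each such set with weight given by Property~\ref{item:remstretchlimit}. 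An analysis mirroring that of Theorem~\ref{thm:spanner} then confirms that the resulting $G_i$ has $O(|L_i|)$ edges and constant stretch on $L_i$, delivering the target time bound $O(k(\log\log n + \log k))$ and leaving the data structure in a form suited to the subsequent dynamic maintenance.
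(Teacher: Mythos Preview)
Your first approach---extracting a pseudospanner on $\{r\}\cup L_i(X)$ via Theorems~\ref{thm:subtree} and~\ref{thm:spanner} and running Kruskal on it---is correct and delivers the stated bound. It is in fact cleaner than what the paper does, and you correctly identify why the paper takes a different route: the bucket construction is precisely what is reused in the dynamic algorithm of Theorem~\ref{thm:dynamic-mst}.

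Your second, ``more targeted'' sketch, however, has a real gap. You propose to enumerate acquaintances only at levels $i$ through $i+O(1)$ and claim the resulting $G_i$ has constant stretch on $L_i$. Lemma~\ref{lem:layer-properties} guarantees that any two points of $L_i(X)$ know each other by level $i+O(1)$, but this only certifies that they are within $O(r_i)$ of one another; it does not detect pairs that are much closer. If $L_i(X)$ consists of $k_i$ points at mutual distance $r_i/k$, every edge in your $G_i$ gets weight $\Theta(r_i)$, so your tree costs $\Theta(k_i r_i)$ while $\OPT_{L_i(X)}=\Theta(k_i r_i/k)$---an unbounded ratio. An analysis ``mirroring Theorem~\ref{thm:spanner}'' cannot help, since that analysis descends all the way to the leaves.

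The paper's construction therefore scans the level range $[\,i-\log_\tau k+1,\ i+1+\log_\tau(2D(\tau))\,]$, i.e.\ $\Theta(\log k)$ levels \emph{below} $i$ as well. For each level $l$ in this range and each set $S$, it threads a path through the bucket $B_{l,S}=\{x\in L_i(X):x\text{ knows }S\text{ at level }l\}$ with edge weight $2D(\tau)r_{l-1}$. This yields $O(k_i\log k)$ edges, not $O(k_i)$, and the auxiliary graph is \emph{not} a constant-stretch spanner: distances below $r_{i-\log_\tau k}\approx r_i/k$ are all collapsed to $\Theta(r_i/k)$. Instead, a separate lemma (Lemma~\ref{lem:spanning-tree-weight}) argues directly that the MST of $G_i(X)$ has weight $O(1)(\OPT_{L_i(X)}+r_i)$; the additive $r_i$ absorbs the at most $k_i-1$ collapsed edges of weight $O(r_i/k)$ each, and is in turn absorbed into the cost of the edge connecting the layer to $r$, since $\sum_i r_i = O(r_{\max}) = O(\OPT)$.
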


\begin{proof}
Designate any $r \in X$ as the root of the tree. Split all the remaining vertices into layers $L_i(X)$.
For each nonempty layer pick a single edge connecting a vertex in this layer to $r$ and add it $T$. Furthermore, add to $T$ an
approximate MST for each layer, denoted $T_i(X)$, constructed as follows.

Consider a layer $L_i(X)$ with $k_i > 0$ elements. We construct a sparse auxiliary graph $G_i(X)$ with $L_i(X)$ as its vertex set.
We use Lemma~\ref{lem:who-knows-who} to find for each vertex $x \in L_i(X)$ and every level in $l \in [i-\log_\tau k+1,i+1+\log_\tau (2D(i))]$
all the sets known to $x$ at level $l$ in $\cT$. Using level $i+1 = \meet(x, r)$ as the anchor $i_0$, this can be done in $O(\log \log n + \log k)$ time per element $x \in L_i(X)$.
Using this information we find, for every $l$ as above and every set $S$ at level
$l$ known to at least one $x \in L_i(X)$, a bucket $B_{l,S}$. This bucket contains all $x \in L_i(X)$ that know $S$ at level $l$. Note that
the total size of the buckets is $O(k_i \log k)$, because we are scanning $O(\log k)$ levels, and a vertex can only know $O(1)$ sets
at each level. Now we are ready to define the edges of $E(G_i(X))$. For every bucket $B_{l,S}$, we add to $E(G_i(X))$ an arbitrary path through
all elements in $B_{l,S}$. We also assign to each edge on this path a weight of $2D(\tau)r_{l-1}$.

Since the total size of the buckets is $O(k_i \log k)$, we also have that $G_i(X)$ has $O(k_i \log k)$ edges. We let $T_i(X)$ be the MST
of $G_i(X)$. Note that $T_i(X)$ can be found in time $O(k_i \log k)$ by the following adjustment of the Kruskal's algorithm.
If we consider buckets ordered in the increasing order of $l$, the edges on the paths are given in the increasing order of their lengths. At every step of Kruskal's algorithm,
we keep current set of connected components of $T_i(X)$ as an array, where every $x \in L_i(X)$ knows the ID of its component. We also keep the size of each component. Whenever two components are joined by a new edge, the new component inherits ID from the bigger subcomponent.
This ensures that for every $x \in L_i(X)$ we change its ID at most $\lceil \log k_i \rceil$ times.

We now claim that
\begin{lemma}
\label{lem:spanning-tree-weight}
The total weight of $T_i(X)$ is $O(1)(\OPT_{L_i(X)} + r_i)$, where $\OPT_{L_i(X)}$ is the weight of the MST for $L_i(X)$.
\end{lemma}

\begin{proof}[Proof of Lemma~\ref{lem:spanning-tree-weight}]
  First, note that the Kruskal's algorithm connects the whole bucket $B_{l,S}$ when considering edges of length $2D(\tau)r_{l-1}$.
  Therefore we may modify graph $G_i(X)$ to $G_i'(X)$ such that all pairs of vertices in $B_{l,S}$ are connected by edges of length $2D(\tau)r_{l-1}$,
  without changing the weight of the MST.
  Let $d_i$ be the metric in $G_i'(X)$. By Lemma \ref{lem:stretchlimit}, $d_i(x,y) \geq d(x,y)$,
  as the sets containing $x$ and $y$ at level, where $x$ and $y$ are not placed in the same bucket,
  do not know each other.
  If $d(x,y) < r_{i-\log_\tau k}$, then
  $d_i(x,y) = 2D(\tau)r_{i-\log_\tau k} = 2D(\tau)r_i/k$,
  as both $x$ and $y$ meet in some bucket at level $i-\log_\tau k+1$.
  Otherwise, if $d(x,y) \geq r_{i-\log_\tau k}$, by Lemma \ref{lem:stretchlimit} again,
  $d_i(x,y) \leq 2D(\tau)d(x,y)$, as both $x$ and $y$ know $S$ at level $l$.

  Let us replace metric $d$ by $d'$ defined as follows: for $x \neq y$, $d'(x,y) = 2D(\tau)d(x,y)$ if $d(x,y) \geq r_{i-\log_\tau k}$
  and $d'(x,y) = r_{i-\log_\tau k}$ otherwise. Clearly $d(x,y) \leq d_i(x,y) \leq d'(x,y)$.
  Note that $d'$ satisfies the following condition: for each $x,y,x',y' \in L_i(X)$, $d(x,y) \leq d(x',y')$ iff $d'(x,y) \leq d'(x',y')$.
  Therefore, the Kruskal's algorithm for MST in $(L_i(X),d)$ chooses the same tree $T_\OPT$ as when run on $(L_i(X),d')$.
  Let $d(T_\OPT)$ ($d_i(T_\OPT)$, $d'(T_\OPT)$) denote the weight of $T_\OPT$ with respect to metric $d$ (resp. $d_i$ and $d'$).
  Let us now bound $d'(T_\OPT)$. $T_\OPT$ consists of $k_i-1$ edges, so the total cost
  of edges $xy$ such that $d(x,y) < r_{i-\log_\tau k}$ is at most $(k_i-1)2D(\tau)\frac{r_i}{k} < 2D(\tau)r_i$.
  Other edges cost at most $D(\tau)$ times more than when using metric $d$, so in total
  $d'(T_\OPT) \leq 2D(\tau)(r_i + d(T_\OPT))$. As $d_i(T_\OPT) \leq d'(T_\OPT)$, and
  Kruskal's algorithm finds minimum MST with respect to $d_i$, the lemma is proven.\qed
\end{proof}

Now we are ready to prove that $T$ is an $O(1)$-approximation of MST for $X$. Let $\OPT$ be the weight of MST for $X$, and $\OPT_L$ be the weight
of the optimal layered MST of $X$. We know that $\OPT_L \le O(1)\OPT$ by Lemma \ref{lem:a-la-jia}.

The optimal solution has to connect $r$ with the vertex $x\in X$ which is the furthest from $r$.
We have $d(r,x) \ge r_{\textrm{max}}$, where
$r_\textrm{max} = r_i$ for the biggest $i$ with non-empty $L_i(X)$. It follows that $\OPT \ge r_{\textrm{max}} \geq O(1)\sum_i r_i$, because the $r_i$-s form a geometric sequence.
Thus, the cost of connecting all layers to $r$ is bounded by $O(1)\OPT$.

Moreover, Lemma \ref{lem:spanning-tree-weight} implies that sum of the weights of all $T_i(X)$-s is bounded by:
\[ O(1)\sum_i \left(r_i + \OPT_{L_i(X)}\right) \leq O(1)\left(\OPT+\OPT_L\right) \le O(1) \OPT.\]
Thus the constructed tree $T$ is an $O(1)$-approximation of the MST of $X$.
\qed
\end{proof}

\subsection{Dynamic Minimum Spanning Tree}
\label{subsec:dynamic_mst}

The dynamic approximation algorithm for MST builds on the ideas from the previous subsection. However, we tackle the following obstacles:
\begin{itemize}
\item we do not have a fixed root vertex around which the layers could be constructed,
\item the number of distance levels considered when building auxiliary graphs is dependent on $k$, and as such can change during the execution of the algorithm, and finally,
\item we need to compute the minimum spanning trees in auxiliary graphs dynamically.
\end{itemize}

The following theorem shows that all of these problems can be solved successfully.

\begin{theorem}[Theorem \ref{thm:dynamic-mst} restated]\label{thm:dynamic-mst-app}
Given the compressed tree $\hat{\cT}(V)$, we can maintain an $O(1)$-approximate Minimum Spanning Tree for a subset $X$ subject to
insertions and deletions of vertices.
The insert operation works in $O(\log^5k + \log\log n)$ time and
the delete operation works in $O(\log^5k)$ time, $k=|X|$.
Both times are expected and amortized.
\end{theorem}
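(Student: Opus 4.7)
The plan is to dynamize the static construction from Theorem~\ref{thm:static-mst}. I will maintain the auxiliary graph $G(X)=\bigcup_i G_i(X)$ together with one edge per nonempty layer connecting it to the root as a single weighted multigraph, and feed its edge updates into the Holm--Lichtenberg--Thorup dynamic MST structure, which supports edge insertions and deletions in $O(\log^4 k)$ amortized time. The approximation guarantee is inherited from Lemmas~\ref{lem:a-la-jia} and \ref{lem:spanning-tree-weight}, so the task reduces to maintaining $G(X)$ correctly and cheaply under vertex insertions and deletions into $X$.

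The three obstacles listed before the theorem statement are handled as follows. For the fixed root, I pick $r$ uniformly at random from $X$ at each rebuild; if $r$ is ever deleted, a full rebuild is triggered. Since the root is random and hidden from an oblivious adversary, the expected number of updates between rebuilds is $\Omega(k)$, amortizing each rebuild. For the changing window width $\Theta(\log_\tau k)$ used when constructing $G_i(X)$, I employ the standard doubling trick: the window is frozen at width $\lceil\log_\tau k_0\rceil + O(1)$ where $k_0=|X|$ at the last rebuild, and a rebuild is forced whenever $|X|$ doubles or halves relative to $k_0$, again amortizing over $\Omega(k)$ updates. For the dynamic maintenance of $G(X)$, each vertex $x\in X$ caches its layer $i(x)=\meet(x,r)-1$, computed exactly once on insertion by a single $\meet$-query of cost $O(\log\log n)$ (Lemma~\ref{lem:binary-jumping}); this is the sole source of the $\log\log n$ term in the insertion bound and explains why deletions avoid it. For every level $l$ in the window around $i(x)$ and every set $S$ at level $l$ that $x$ knows, I keep $x$ in a doubly-linked bucket $B_{l,S}$. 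Using the anchored variant of Lemma~\ref{lem:who-knows-who} with $i_0=\meet(x,r)$, all relevant $(l,S)$ pairs are enumerated in $O(\log k)$ time in total.

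Each vertex update thus touches $O(\log k)$ buckets. Inserting $x$ into a nonempty $B_{l,S}$ appends one edge of weight $2D(\tau)r_{l-1}$ linking $x$ to the current list tail; deleting $x$ removes at most one edge and, if $x$ was an interior bucket element, adds at most one shortcut edge. The per-layer root edges are maintained by a pointer into each layer's topmost-level bucket, changing $O(1)$ edges per update. Summing up, each insertion issues $O(\log k)$ edge updates to the dynamic MST at cost $O(\log^4 k)$ each, yielding $O(\log\log n+\log^5 k)$ per insertion and $O(\log^5 k)$ per deletion; the amortized rebuild cost is $O((\log k+\log\log n))$ per operation in expectation, dominated by the above.

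The main difficulty I anticipate is making the amortization argument rigorous. Concretely, I must ensure that the $\Omega(k)$ expected lifetime between rebuilds holds against the update sequence: this is clean against an oblivious adversary (a single fresh random root at each rebuild suffices), but against an adaptive adversary one must re-randomize carefully. A secondary technical point is verifying that the edges created by ``linked-list path'' representations of buckets still admit the Lemma~\ref{lem:spanning-tree-weight} analysis after arbitrary insertions and deletions have scrambled the path order; fortunately the argument in that lemma depends only on which pairs eventually land in a common bucket, not on the order in which the path is traversed, so the approximation bound carries over unchanged.
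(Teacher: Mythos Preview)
Your proposal is correct and follows essentially the same approach as the paper: random root with full rebuild on its deletion, periodic rebuilds when $k$ changes by a constant factor to handle the $\log_\tau k$ window, bucket paths maintained as doubly linked lists, and the Holm--de Lichtenberg--Thorup dynamic MST structure absorbing the $O(\log k)$ edge updates per vertex operation. The only cosmetic differences are that the paper runs a separate HLT instance per layer rather than one global one, and that deleting an interior bucket element actually removes two incident edges (not one) before adding the shortcut; neither affects the bounds or the argument.
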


\begin{proof}
  The basic idea is to maintain the layers and the auxiliary graphs described in the proof of Theorem \ref{thm:static-mst}.
  However, since we are not
guaranteed that any vertex is going to permanently stay in $X$, we might need to occasionally recompute the layers and graphs from
scratch, namely when our current root is removed. However, if we always pick root randomly, the probability of this happening as
a result of any given operation is $\le \frac{1}{k}$ and so it will not affect our time bounds. It does, however, make them randomized.

The number of distance levels considered for each layer is $\log_\tau k + O(1)$ and so it might change during the execution of the algorithm.
This can be remedied in many different ways, for example we might recompute all the data structures from scratch every time $k$ changes by a
given constant factor.

The above remarks should make it clear that we can actually maintain the layer structure and the auxiliary graph (as a collection of paths in non--empty buckets $B_{l,S}$) for each layer with low cost (expected and amortized) per update.
We now need to show how to use these structures to dynamically maintain a spanning tree. We use the algorithm of de Lichtenberg, Holm and Thorup (LHT) \cite{hlt:dynamic-mst} that maintains a minimum spanning tree in a graph subject to insertions and deletions of edges, both in time $O(\log^4 n)$, where $n$ is the number of vertices.

We are going to use the LHT algorithm for each auxiliary graph separately. Note that inserting or deleting a vertex corresponds to
inserting or deleting $O(\log k)$ edges to this graph, as every vertex is in $O(\log k)$ buckets.

In case of insertion of a vertex $x$, we need $O(\log \log n)$ time to perform $\meet(v, \textrm{root})$ and find the appropriate layer.
Non--empty layers and their non--empty buckets
may be stored in a dictionary,
so the search for a fixed layer or a fixed bucket is performed in $O(\log k)$ time.
Having appropriate layer, we insert $x$ into all known buckets, taking $O(\log^4 k)$ time to update edges in each bucket.
Therefore the insert operation works in expected and amortized time $O(\log^5 k + \log \log n)$.

In case of deletion of a vertex $x$, we may maintain for each vertex in $X$ a list of its occurrences in buckets, and in this way
we may fast access incident edges. For each occurrence, we delete two incident to $x$ edges and connect the neighbors of $x$.
Therefore the delete operation works in expected and amortized time $O(\log^5 k)$.\qed
\end{proof}

\end{document}